\theoremstyle{plain}
\newtheorem{theorem}{Theorem}[section]
\theoremstyle{plain}
\newtheorem{lemma}{Lemma}[section]
\theoremstyle{plain}
\theoremstyle{plain}
\theoremstyle{plain}
\theoremstyle{plain}
\theoremstyle{definition}
\theoremstyle{definition}
\theoremstyle{remark}
\theoremstyle{definition}
\newcommand{\E}{\mathbb{E}} 	
\DeclareMathOperator{\real}{\mathbb{R}}
\newcommand{\intg}{\mathbb{Z}}
\newcommand{\tr}{\text{Tr}}
\newcommand{\id}{\mathbb{I}}
\newcommand{\cliff}{\mathcal{C}}
\newcommand{\pauli}{\mathcal{P}}
\newcommand{\clifft}{\mathcal{J}}
\newcommand{\X}{\text{X}}
\newcommand{\Y}{\text{Y}}
\newcommand{\Z}{\text{Z}}
\newcommand{\had}{\text{H}}
\newcommand{\T}{\text{T}}
\newcommand{\CNOT}{\text{CNOT}}
\newcommand{\phase}{\text{S}}
\newcommand{\vect}[1]{\mathbf{#1}}
\newcommand{\conj}[1]{\overline{#1}}
\newcommand{\chan}{\mathcal{E}}
\newcommand{\chanu}{\mathcal{U}}
\newcommand{\liou}{\mathcal{L}}
\newcommand{\cost}{\mathcal{C}}
\newcommand{\var}{\text{Var}}
\begin{document}

\title{Synthesizing efficient circuits for Hamiltonian simulation}

\author[1,2]{Priyanka Mukhopadhyay \thanks{mukhopadhyay.priyanka@gmail.com, p3mukhop@uwaterloo.ca (Corresponding author)}}
\author[3,4]{Nathan Wiebe \thanks{nawiebe@cs.toronto.edu}}
\author[5]{Hong Tao Zhang \thanks{ht.zhang@mail.utoronto.ca}}

\affil[1]{Institute for Quantum Computing, University of Waterloo, Canada}
\affil[2]{Department of Combinatorics and Optimization, University of Waterloo, Canada}
\affil[3]{Department of Computer Science, University of Toronto, Canada}
\affil[4]{Pacific Northwest National Laboratory, USA}
\affil[5]{Department of Mathematics, University of Toronto, Canada}

\date{}

\maketitle


\begin{abstract}
We provide a new approach for compiling quantum simulation circuits that appear in Trotter, qDRIFT and multi-product formulas to Clifford and non-Clifford operations that can reduce the number of non-Clifford operations by a factor of up to $4$. In fact, the total number of gates reduce in many cases. We show that it is possible to implement an exponentiated sum of commuting Paulis with at most $m$ (controlled)-rotation gates, where $m$ is the number of distinct non-zero eigenvalues (ignoring sign). Thus we can collect mutually commuting Hamiltonian terms into groups that satisfy one of several symmetries identified in this work which allow an inexpensive simulation of the entire group of terms.  We further show that the cost can in some cases be reduced by partially allocating Hamiltonian terms to several groups and provide a polynomial time classical algorithm that can greedily allocate the terms to appropriate groupings.  We further specifically discuss these optimizations for the case of fermionic dynamics and provide extensive numerical simulations for qDRIFT of our grouping strategy to 6 and 4-qubit Heisenberg models, $LiH$, $H_2$ and observe a factor of 1.8-3.2 reduction in the number of non-Clifford gates.  This suggests Trotter-based simulation of chemistry in second quantization may be even more practical than previously believed.
\end{abstract}

\section{Introduction}
\label{sec:intro}

One main reason that led Feynman \cite{1982_F} and others to propose the idea of quantum computers was the fact that problems like simulating the dynamics of quantum systems are intractable on a classical computer. Starting from the seminal work of Lloyd \cite{1996_L}, much research \cite{2018_CMNetal} has been done to develop algorithms for simulating Hamiltonians, culminating in various techniques like product formulas \cite{1991_S, 1959_T}, quantum walks \cite{2009_BC}, linear combination of unitaries \cite{2012_CW}, truncated Taylor series \cite{2015_BCCetal}, and quantum signal processing \cite{2017_LC}. Special techniques have been developed for simulating particular physical systems \cite{2018_BWMetal, 2019_CROetal, 2012_JLP, 2010_LWGetal, 2020_MEAetal, 2015_PHWetal, 2014_WBCetal, 2022_KL}, which might find applications in developing new pharmaceuticals, catalysts and materials. Phase estimation can be combined with quantum simulation to find the ground state energy \cite{1999_AL} and excited state energies \cite{2014_PMSetal, 2018_WHB, 2018_OTT} of the Hamiltonian. This is called the electronic structure problem \cite{2020_MEAetal}, which is important in chemistry and material science. Research in quantum simulation has also inspired the development of quantum algorithms for various other problems \cite{2014_B, 2017_BS, 2003_CCDetal, 2008_FGG, 2009_HHL}.

One main challenge for digital quantum simulation is the implementation with efficient circuits that can produce reliable results. Without it, a theoretical exponential speedup may not lead to a useful algorithm if a typical practical application requires an amount of time and memory that is beyond the reach of even a quantum computer. There are a number of factors that can affect the efficiency of a quantum circuit i.e. its running time and error, for example, the number of qubits, depth, gate count, etc. So depending upon the applications or other hardware constraints, one can design algorithms that optimize or reduce the count/depth of one particular type of quantum gate or other resources. For example, there are algorithms that does T-count and T-depth-optimal synthesis \cite{2021_MM, 2021_GMM, 2021_GMM2} given a unitary or does re-synthesis of a given circuit with reduced T-count, T-depth \cite{2014_AMM, 2020_DKPW, 2020_HS} or CNOT-count \cite{2008_PMH, 2018_AAM, 2020_GLMM}. The non-Clifford T gate has known constructions in most of the error correction schemes and the cost of fault-tolerantly implementing it exceeds the cost of the Clifford group gates by as much as a factor of hundred or more \cite{2009_FSG, 2006_AGP, 2012_FMMetal}. Quantum error correction and fault tolerance is especially significant for large quantum circuits, else the accumulation of errors will make any output highly unreliable and hence useless.
The minimum number of T-gates required to implement certain unitaries is a quantifier of difficulty in many algorithms \cite{2016_BG, 2016_BSS} that try to classically simulate quantum computation. So, even though alternative fault-tolerance methods such as completely transversal Clifford+T scheme \cite{2013_PR} and anyonic quantum computing \cite{2003_K} are also being explored, minimization of the number of T gates in quantum circuits remain an important and widely studied goal. Multi-qubit gates like CNOT introduce more error than single qubit gates, so reducing CNOT gate is important and especially relevant for the noisy intermediate scale quantum (NISQ) computers.

\paragraph{Our contributions : }
(I) One main result in this paper is Lemma \ref{lem:rot}, which shows that it is possible to implement an exponentiated sum of commuting Paulis with at most $m$ (controlled)-rotation gates, where $m$ is the number of distinct non-zero eigenvalues (ignoring sign). For illustration we consider the Hamiltonian for the Heisenberg model and we show that it is possible to achieve about $50\%$ reduction in the rotation gate cost and for certain underlying graphs this reduction can be about $75\%$. However, the cost of Toffolis may increase. We have given explicit circuits for 4-qubit and 6-qubit chain (or cycle), where we attempt to reduce both the rotation and Toffoli gate cost. 

(II) In most previous works, circuits for individual exponentiated Paulis are synthesized and combined. We show that it is possible to reduce the gate count (not only non-Clifford gates) if we instead consider groups of commuting Paulis. To give some practical demonstration we consider the qDRIFT Hamiltonian simulation algorithm \cite{2019_C}. We call the error introduced due to the algorithm as 'simulation error'. We take  the 1-D 4 qubit and 6 qubit Heisenberg Hamiltonians (Figure \ref{fig:10}) and also 4-qubit Hamiltonians for $H_2$ and $LiH$ (with freezing in the STO-3G basis) (Figure \ref{fig:11}), and compare the case where a single Pauli term is selected with the case where a set of commuting Pauli terms is selected for implementation at each iteration of qDRIFT. We observe that the error accumulation is less for multiple terms and also the rotation gate cost is less in this error regime. The number of Toffoli pairs is roughly equal to the number of $R_z/cR_z$ used, in case of multiple terms. So overall, we have less T-count when implementing multiple commuting Paulis per iteration of qDRIFT. This adds to the motivation of building efficient circuits for such Hamiltonians. 

(III) In Section \ref{sec:qChem} we derive explicit quantum circuits for the two-body excitation terms appearing in the Coulomb Hamiltonian in quantum chemistry. We mainly use the Clifford+T universal fault-tolerant gate set to implement unitaries. We design efficient circuits for different grouping of commuting Pauli operators. It is evident (Table \ref{tab:comp}) that the rotation gate cost depends on the coefficients of the Pauli summands. For some combination of coefficients the circuits derived here are optimal, in the sense, that they have the minimum (i.e. 1) number of $cR_z$ gates. Though our focus is on reducing the non-Clifford gate count, but most of the quantum circuits derived here have an overall reduced gate count, including reduction in the 2-qubit gates like CNOT. In Table \ref{tab:totalGate} we have compared the number of gates required to implement one of the Hamiltonians considered in this paper with a previous construction. For the remaining Hamiltonians we did not find any compact previous construction to compare with. In short, our approach can be useful not only in the fault-tolerant regime, but also in the NISQ era.  

(IV) In Algorithm 1 we describe a greedy method of grouping into commuting Paulis, but the objective is to optimize the number of non-Clifford gates. There have been a host of work that tackles the question of how to group the commuting Paulis and to the best of our knowledge most (if not all) of them has the objective to reduce the number of measurements required to make an estimation \cite{2021_HMRetal}. The latter problem is especially important for variational quantum eigensolvers. 
The grouping that optimizes the non-Clifford gates may not optimize the number of measurements. In most cases, finding the optimal grouping is difficult. But we can always ask the question that given a grouping (for whatever objective), is it possible to compile efficient circuits. In this case, we can use our techniques (Lemma \ref{lem:rot}) to reduce the gate count. Thus our methods can also be used to design circuits for the measurement problem.  

In this paper we use the Jordan-Wigner (JW) transformation \cite{1993_JW} to map from the fermionic to the qubit space. And then we group into commuting Paulis. Other transformations like Bravyi-Kitaev and parity transformations \cite{2002_BK} can also be used and may be beneficial in circumstances where Clifford operations are costly or inherent quantum error correction is desirable.  We focus on Jordan Wigner for two reasons.  First, in this paper we focus on the synthesis of efficient quantum circuits for exponentiated commuting Paulis and the techniques hold no matter whichever mapping is considered. Second, previous work has not shown obvious advantages for Bravyi--Kitaev transformations within the domain of fault-tolerant quantum computing.

\paragraph{How we compare the cost of non-Clifford resources : } In all the constructions discussed in this paper, two approximately implementable gates are used - $R_z$ and controlled $R_z$ ($cR_z$), whose T-count varies inversely with precision or synthesis error. From the results given in \cite{2021_GMM2} and from the implementations performed here until the error $10^{-6}$, we believe that T-count of $cR_z$ can be less than that of $R_z$ for most modestly small rotation angles. However, for convenience, we assume these have equal cost and with some abuse of terms, we refer to the T-count of $R_z/cR_z$ as the '(non-Clifford) rotation gate cost'. 
The only exactly implementable non-Clifford unitary/gate considered in the constructions is Toffoli with T-count 7 \cite{2021_MM} or 4 \cite{2013_J}. For low error regime, the T-count of approximately implementable $R_z/cR_z$ will dominate, while in high error regime the T-count of Toffoli may matter, if we use a lot of them. To reduce the T-count of compute-uncompute Toffoli pairs, we can use the temporary logical AND gadget, proposed by Gidney \cite{2018_G}. In fact, in our circuits, we use $R_z$ gates controlled on $n$ qubits ($n>1$), each of which can be decomposed into (compute-uncompute) pairs of NOT gates controlled on $n$ qubits and a $cR_Z$ gate. Each such multi-controlled NOT can be implemented with $n-1$ Toffoli. Alternatively, it can be implemented with $4n-4$ T-gates \cite{2018_G}. If we combine compute-uncompute pairs then the overall T-count of the circuit can reduce further, by using logical AND gadget. We must keep in mind that the implementations in \cite{2013_J, 2018_G} use classical resources and measurements, and it is not straightforward to argue that it will give advantage, inspite of using less number of T-gates. 
Alternatively, we can use the construction in \cite{2017_HLZetal} that implements an $n$-controlled NOT gate using $4n-4$ T, $4n-3$ CNOT and $n-1$ ancillae qubits. 
In our paper we have expressed the non-Clifford T-gate cost in terms of the rotation gate cost and the number of Toffoli pairs used.


\paragraph{Related work : }
In \cite{2017_RWSWT} the authors studied the non-Clifford resource cost required to simulate the chemical process of biological nitrogen fixation by nitrogenase. In \cite{2020_BT} the authors developed algorithms to synthesize circuits for the Clifford operators that diagonalize a group of commuting Paulis. The goal was to reduce the two-qubit CNOT gate count because of its low fidelity and limited qubit connectivity of near-term quantum computer architectures. Similar diagonalization algorithm has been used in \cite{2022_KF} for efficient simulation of Hamiltonian dynamics. Much work has been done for construction of quantum circuits for the evolution of molecular systems \cite{2011_WBA, 2014_WBCetal, 2018_BGSetal, 2018_KMWetal, 2019_GEBetal, 2020_GZBetal, 2020_LLAB, 2020_YAB} and Heisenberg model \cite{2021_GPAG}. 

\section{Results}
\label{sec:results}

\subsection{Notation}

 In many places we write $G_{(q)}$ to denote that the gate or operator $G$ acts on qubit $q$. For multi-qubit gates we write $CNOT_{(c,t)}$ to denote a CNOT with control at qubit $c$ and target at qubit $t$. For convenience, we have removed the parenthesis in the subscript whenever there is less ambiguity.
We write $[K]=\{1,2,\ldots,K\}$. We denote the $n\times n$ identity matrix by $\id_n$ or $\id$ if dimension is clear from the context. We denote the set of $n$-qubit unitaries by $\mathcal{U}_n$. The size of an $n$-qubit unitary is $N\times N$ where $N=2^n$. 
We have given detail description about the n-qubit Pauli operators ($\pauli_n$), Clifford group ($\cliff_n$) and the group ($\clifft_n$) generated by the Clifford and $\T$ gates in Supplementary Note 1 (Appendix \ref{app:prelim}).

\subsection{Optimizing Trotter-Decompositions}
\label{sec:Trotter}

The time evolution of a quantum system, described by a Hamiltonian $H$ is $e^{-iHt}$. Most often the Hamiltonian $H$ can be decomposed as the sum $H=\sum_{j=1}^m\alpha_jH_j$, where each $H_j$ is Hermitian.
There can be more than one decomposition of $H$ and we select the one such that for each $H_j$ the unitary $e^{-i\tau H_j}$ is efficiently implementable on a quantum computer, for any $\tau$. The goal of the Hamiltonian simulation problem is to find an approximation of $e^{-itH}$ into a sequence of $e^{-i\tau H_j}$, up to some desired precision. For example, using the Lie-Trotter formula \cite{1959_T} we have that
\begin{eqnarray}
 e^{-iHt}&=&\lim_{k\rightarrow\infty}\left(\prod_je^{-i(t/k)\alpha_jH_j}\right)^k. \nonumber
\end{eqnarray}
In the non-asymptotic regime, the Trotter scheme provides a first-order approximation, with the norm of the difference between the exact and approximate time evolution scaling as $O(t^2/k)$. More advanced higher order schemes \cite{1991_S, 2018_CMNetal} are also available. Alternatively, a randomized approach called qDRIFT can be used in place of a Trotter formula wherein the quantum state is evolved according to the probabilistic channel
\begin{equation}
    \rho \mapsto \sum_{j} \frac{\alpha_j}{|\alpha|_1}e^{-i |\alpha|_1 H_j t} \rho e^{i |\alpha|_1 H_j t}.
\end{equation}
Note the error here is also $O(t^2)$; however, in this case a single exponential is performed rather than $O(m)$ as would be needed for the comparable Trotter-formula.  The cost of such an approach scales as $O(|\alpha|_1^2 t^2/\epsilon)$ for error $\epsilon$ and does not directly depend on $m$.

The approximation errors arising in the use of product formulas are caused by non-commuting terms in the Hamiltonian. For example, see \cite{2021_CSTetal} for a detail exposition on Trotter errors. 
Given any set of mutually commuting operators $P_1,\ldots, P_m$ we have the following.
\begin{eqnarray}
 e^{-it\sum_{j=1}^mP_j}&=&\prod_{j=1}^me^{-itP_j} \label{eqn:commute}
\end{eqnarray}
Thus, the operators are partitioned into mutually commuting subsets. Time evolution for the sum of mutually commuting operators in each such subset is trivial, and the product formulas can be applied to the sum of Hamiltonians formed as the sum of each subset. This approach becomes especially applicable in scenarios where the Hamiltonian can be expressed as a sum of Pauli operators, for which the commutation relations can easily be evaluated. 

As a specific example, consider the case where $H= a Z\otimes Z\otimes Z$.  Since the Hamiltonian is diagonal, $e^{-iaZ\otimes Z\otimes Zt}$ has computational basis vector $\ket{b_1,b_2,b_3}$ and eigenvalues $e^{-i (-1)^{b_1\oplus b_2\oplus b_3} at}$.  Thus the eigenvalues are determined by the parity of the bit strings, which can be computed using CNOT gates.  From this reasoning the following quantum circuit will perform the simulation of this Pauli operator exactly.
\begin{equation}
\Qcircuit @C=1em @R=.7em {
& \ctrl{1}&\qw &\qw &\qw &\ctrl{1}&\qw\\
& \targ &\ctrl{1} &\qw &\ctrl{1} &\targ&\qw\\
& \qw &\targ &\gate{R_z(2at)}&\targ&\qw&\qw\\
}
\end{equation}
As every Pauli operator of weight $3$ can be diagonalized by Clifford conjugation, this circuit up to an elementary basis transformation, will simulate any weight $3$ Pauli Hamiltonian.  The exact same strategy of diagonalizing and simulating the Pauli operator in the eigenbasis shows that each exponential of a weight $\nu$ Pauli operator Hamiltonian requires $2(\nu-1)$ CNOT operators and one rotation gate.  This strategy is at the heart of most elementary networks for simulating chemistry and spin models~\cite{2011_WBA,2010_NC}.

The work of~\cite{2017_RWSWT} provided another way of thinking about these decompositions by showing an explicit method that can diagonalize sums of commuting operators that appear in chemistry simulations by transforming into a simultaneous eigenbasis of such terms.  In full generality, such transformations reduce the circuit depth but need not reduce the circuit size.  However, we will see here that for some Hamiltonians these transformations can reduce the circuit size as well.

As a motivating example, consider the Hamiltonian $H= XX + YY +ZZ$.  This Hamiltonian can be simulated, up to a global phase, by
\begin{equation}
\Qcircuit @C=1em @R=.7em {
& \ctrl{1} &\qw\\
& \ctrl{1} &\qw \\
\lstick{\ket{0}}& \gate{e^{i 4Z t}}&\qw
}
\end{equation}
This can be implemented using two Toffoli gates and a single qubit rotation.  In contrast, the standard approach from~\cite{2011_WBA,2010_NC} would use $3$ single qubit rotations and no Toffoli gates.  As rotation synthesis often is $10$ times more expensive than Toffoli gates~\cite{2017_RWSWT, 2021_MM, 2021_GMM2}, this will almost always be a favorable way of performing the simulation.  In contrast, if this symmetry is broken then the Hamiltonian term will be more expensive to simulate.  Thus it can be favorable to introduce such symmetries as needed artificially.  For example, consider
\begin{equation}
    H= XX + YY + (3/2) ZZ = (XX + YY + ZZ) + ZZ/2.
\end{equation}
Such a simulation can be performed using two rotation gates rather than the $3$ na\"ively needed and so it makes sense to compile the Hamiltonian terms this way to reduce the overall complexity. 


As another example, not all rotations are equally expensive and so we should also combine terms in such a way as to minimize the cost.  For example consider the time-evolution operator
\begin{equation}
    U(t) = e^{-i (\pi/4\sqrt{2} - \epsilon) Z -i(\pi/4\sqrt{2} + \epsilon) X} \approx e^{-i \pi/4 (X+Z)/\sqrt{2}} e^{ -i(X-Z)\epsilon}.
\end{equation}
While the first operation in this Trotterization is not a Clifford operation, it is a simulation of a Hadamard gate for time $\pi/4$.  As this corresponds to a special angle and since the Hadamard gate can be diagonalized using a constant size $H$ and $T$ circuit, the cost of implementing this first term is $O(1)$ and thus the dominant cost is the remaining rotation.  In contrast, if this property were not used then we would have two arbitrary rotations in the Trotterization which would be nearly twice the cost of this simplified approach.  These ideas can further be used in concert: remainder terms that arise from inexactly rounding a Hamiltonian evolution to a known cheap simulation can be absorbed into other terms or even other Trotter steps.

\begin{algorithm}[t!]
\SetAlgoLined
\KwData{$H= \sum_{j=1}^m \alpha_j P_j$ for distinct Pauli $P_j$ and $\alpha_j\ge 0$, cost function ${\rm COST}:\mathbb{R}^{m} \mapsto \mathbb{Z}^+$ such that ${\rm COST}(\vec{x})$ is the cost of simulating $H = \sum_j x_j P_j$, where $\vec{x}=(x_1,\ldots,x_m)$ and ${\rm COST}(\vec{x})=1$ if $\vec{x}$ contains at most $1$ non-zero entry.}
\KwOut{$\{H_j=h_j\sum_{i}P_i: j=1,\ldots,m'\}$ such that $H=\sum_{j=1}^{m'}H_j$.}

$\vec{\alpha'} \gets \vec{\alpha}$\;
\While{$\sum_j |\alpha_j'| >0$}{
$S\gets \{P_j:\alpha_j'\neq 0\}$\; 
$i\gets 0,\Gamma_{\max}\gets 0$\;
\While{$S\ne \emptyset$}{
$i\gets i+1$\;
$k_i\gets \text{argmax}(\{|\alpha'_i|| P_i \in S\})$\;
$\Gamma_{\max} \gets \max\left(\Gamma_{\max}, \text{max}_{\vec{\beta}}((\sum_j |\alpha'_j| -\sum_j |\alpha_j'-\beta_j|)/{\rm COST}(\vec{\beta})) \right)$ such that $\beta_j\ne 0 \Rightarrow \exists p\le i$ with $k_p = j$\;
$S\gets \{P_{\ell}\in S| [P_{\ell},P_{k_i}]=0 \text{ and } \alpha'_{\ell} \ne 0\}$\;
}
Store $\sum_{j=k_1}^{k_i}\Gamma_{\max}P_j$ as a Hamiltonian term\;
$\vec{\alpha'} \gets \vec{\alpha'} -\Gamma_{\max}\vec{1}$\;
}
\algorithmicreturn~Set of Hamiltonian terms computed. 
\caption{ Hamiltonian Compilation Using Greedy $1$-norm Minimization\label{alg:greedy}}
\end{algorithm}

We propose an algorithm in~Algorithm~\ref{alg:greedy} that exploits this intuition through a greedy decomposition of the Hamiltonian into sums of commuting terms.  These mutually commuting terms, or fragments, are chosen such that the ratio of the fraction of the Hamiltonian that is simulated by the term to the cost of the term is maximized.  This choice is motivated in part by the fact that the query complexity of a quantum simulation is lower bounded by $\Omega( |\alpha|_1 t)$~\cite{2007_BACetal} and thus designing circuits that simulate as large of a fraction of this one-norm as possible per quantum gate operation is a sensible optimization heuristic for our greedy algorithm.  Unlike traditional approaches to partitioning the Hamiltonian, our approach allows partial allocation of Hamiltonian terms to multiple commuting sets.  Further, the allocation can be negative in our approach.  This negative allocation is important because we will see that in some cases the introduction of more Hamiltonian weights on some terms can be more than offset by the reduced costs of simulating the fragment.  

The number of optimization steps required for our greedy algorithm is at most $O(m^2)$.  To see this, assume that the optimal strategy involves $\mu$ iterations of the outer loop for $\mu \in \Omega(m)$ and assume that the inner loop optimization requires $\nu$ iterations.  Since COST$\ge 1$ it holds that $\Gamma_{\max} \le \sum_j|\alpha'_j| -\sum_j |\alpha'_j-\beta_j|$.  Assume that $\sum_j|\alpha'_j| -\sum_j |\alpha'_j-\beta_j| < |\alpha'|_\infty$.  In this case, by assumption there exists a trivial solution that outperforms this where the largest term is simulated in isolation at cost $1$.  Therefore we must have that $\sum_j|\alpha'_j| -\sum_j |\alpha'_j-\beta_j| \ge |\alpha'|_\infty$.  Then from standard norm inequalities we have that  $|\alpha'|_\infty \ge |\alpha'|_1/m$.  Thus the one-norm of the vector is given by a first order difference equation of the form $|\alpha^{(j+1)}|_1 \le (1-1/m) |\alpha^{(j)}|_1$.  The general solution to this is $(1-1/m)^{j} |\alpha|_1$ which is $\epsilon$ for $j \in O(\log(1/\epsilon) / \log(1/(1-1/m))) \in O(m \log(1/\epsilon))$. This implies that $\mu \in O(m \log(1/\epsilon)$. Next $\nu$ is the maximum number of iterations for the inner loop.  Since each iteration continues until the total number of terms remaining is reduced by one we have that $\nu \in O(m)$. Thus the total number of iteration steps is $\mu \nu \in O(m^2 \log(1/\epsilon))$.  This shows that the algorithm scales polynomially with the number of terms if the optimization process is also efficient.

The cost of optimization can vary strongly depending on the continuity / convexity of the objective function and without making further assumptions we cannot assume that the optima over $\vec{\beta}$ can be found in polynomial time.  If we assume, however, that the optimizer works by considering one of a polynomial number of potential circuits for simulating the terms and then uses linear programming to find the optimal value of $\vec{\beta}$, we have that the optimization problem can be solved in polynomial time on a classical computer.  Such a choice corresponds exactly to the discussion in the next sections, where we propose the use of a discrete set of optimization strategies for simulating chemistry that can then be used within Algorithm~\ref{alg:greedy} to greedily find the best possible simulation circuit given these discrete set of optimizations for the value of $\vec{\beta}$ chosen.

\paragraph{Truncating Hamiltonian : }
We can terminate Algorithm~\ref{alg:greedy} before all terms are allocated i.e. we output $\{H_j=h_j\sum_iP_i : j=1,\ldots,m''\}$ such that $\sum_{j=1}^{m''}H_j=\tilde{H}\neq H$.  This leads to truncation errors in our simulation algorithm that will be present even if an algorithm such as qDRIFT is used for the simulation.  We show here that if we truncate some terms of the Hamiltonian, then the error incurred is at most twice the error incurred from the complete Hamiltonian simulation by qDRIFT, given that the distance of the truncated and given Hamiltonian is at most square root of the qDRIFT simulation error. We do this because in some cases we may be able to simulate the truncated Hamiltonian with less number of gates.

Suppose we write the given Hamiltonian as follows.
\begin{eqnarray}
 H&=&\sum_{j=1}^Mw_jH_j+\delta H=\tilde{H}+\delta H    \qquad [\|H\|\leq 1]
 \label{eqn:Htilde}
\end{eqnarray}
Here each $H_j$ is a Hermitian matrix for which an efficient simulation circuit exists. The protocol working with the truncated Hamiltonian $\tilde{H}$, samples each $H_j$ independently with probability $p_j=\frac{w_j}{\lambda}$ (where $\lambda=\sum_i|w_i|$), in each iteration.

The error per iteration of qDRIFT, i.e. $\epsilon_N$, is given by bounding the diamond distance between the channel $\chanu_N(\rho)$ corresponding to the Hamiltonian $H$ and the channel $\tilde{\chan}(\rho)$ implemented by the protocol.
\begin{lemma}
The error observed when there are $N$ time-steps taken using a qDRIFT channel, $\epsilon_N$, as quantified by the diamond distance as a function of the truncation error in the Hamiltonian $\delta$ is
$$
\epsilon_N\leq\|\tilde{\chan}(\rho)-\chanu_N(\rho)\|_{\diamond}\leq \epsilon_{qDRIFT}+2\delta\sqrt{\epsilon_{qDRIFT}}
$$
where $\epsilon_{qDRIFT}\lessapprox \frac{2\lambda^2t^2}{N^2}$ and $\lambda = \sum_i |w_i|$.
 \label{lem:truncErr}
\end{lemma}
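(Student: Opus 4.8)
The plan is to insert the exact evolution under the truncated Hamiltonian $\tilde{H}$ as an intermediate channel and split the diamond distance with the triangle inequality. Writing $\tilde{\chanu}$ for the ideal (noiseless) channel generated by $\tilde{H}$ over a single qDRIFT iteration of duration $\tau=t/N$, and keeping $\chanu_N$ for the ideal channel generated by the full Hamiltonian $H=\tilde{H}+\delta H$ over the same step, I would bound
\begin{equation}
\|\tilde{\chan}(\rho)-\chanu_N(\rho)\|_{\diamond} \leq \underbrace{\|\tilde{\chan}(\rho)-\tilde{\chanu}(\rho)\|_{\diamond}}_{\text{qDRIFT approximation}} + \underbrace{\|\tilde{\chanu}(\rho)-\chanu_N(\rho)\|_{\diamond}}_{\text{truncation}}.
\end{equation}
This cleanly separates the error made in replacing exact evolution under $\tilde{H}$ by the random qDRIFT channel from the error made in dropping $\delta H$ altogether.

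For the first term I would invoke the standard qDRIFT analysis of~\cite{2019_C}: expanding both $\tilde{\chan}$ and $\tilde{\chanu}$ to second order in $\tau$, the first-order (Liouvillian) contributions agree because the sampling weights $|w_j|/\lambda$ are chosen precisely so that the averaged generator equals $\liou_{\tilde{H}}$, so the leading discrepancy is second order and yields $\|\tilde{\chan}-\tilde{\chanu}\|_{\diamond}\lessapprox 2\lambda^2\tau^2 = 2\lambda^2 t^2/N^2 = \epsilon_{qDRIFT}$, with $\lambda=\sum_i|w_i|$ the one-norm of the truncated Hamiltonian. It is important here that the one-norm entering this bound is that of $\tilde{H}$, the Hamiltonian actually being sampled, and not that of $H$.

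For the truncation term both channels are unitary conjugations, so I would combine the elementary estimate $\|\mathcal{U}-\mathcal{V}\|_{\diamond}\leq 2\|U-V\|$ with the first-order bound $\|e^{-iH\tau}-e^{-i\tilde{H}\tau}\| \leq \tau\|H-\tilde{H}\| = \tau\|\delta H\| = \tau\delta$, which follows by differentiating $e^{-iHs}e^{-i\tilde{H}(\tau-s)}$ and integrating over $s\in[0,\tau]$. This gives a truncation contribution of at most $2\tau\delta = 2(t/N)\delta$. The last step is to re-express the step length through $\epsilon_{qDRIFT}$: since $\epsilon_{qDRIFT}\approx 2\lambda^2 t^2/N^2$ we have $t/N\approx\sqrt{\epsilon_{qDRIFT}}/(\sqrt{2}\,\lambda)$, so the truncation term is $\lesssim 2\delta\sqrt{\epsilon_{qDRIFT}}$ once the factor $1/(\sqrt{2}\,\lambda)$ is absorbed. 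Adding the two pieces gives $\epsilon_N\leq \epsilon_{qDRIFT}+2\delta\sqrt{\epsilon_{qDRIFT}}$, and imposing the hypothesis $\delta\leq\sqrt{\epsilon_{qDRIFT}}$ then collapses the right-hand side to a small constant multiple of $\epsilon_{qDRIFT}$, which is the stated ``at most twice'' conclusion.

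The routine parts are the triangle inequality and the two norm estimates; the one place that needs genuine care is the conversion of $2(t/N)\delta$ into the clean $2\delta\sqrt{\epsilon_{qDRIFT}}$ form, since it hides a factor of $\lambda$ and therefore implicitly relies on the regime $\lambda\gtrsim 1$ that is typical for the many-term Hamiltonians considered here. I would also verify that the leading-order ($\lessapprox$) character of the per-iteration qDRIFT estimate is used consistently: the exact Campbell prefactor carries a multiplicative $e^{2\lambda t/N}$ correction that is negligible only in the relevant regime $N\gg\lambda t$, and this is exactly the regime in which the above approximate equalities for $t/N$ are accurate enough to justify the final bound.
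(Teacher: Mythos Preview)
Your decomposition via the triangle inequality into a qDRIFT piece and a truncation piece is exactly what the paper does, and your treatment of the qDRIFT piece is the same citation to~\cite{2019_C}. The difference is in the truncation term. The paper does not use the Duhamel/operator-norm estimate $\|e^{-iH\tau}-e^{-i\tilde H\tau}\|\le \tau\|\delta H\|$; instead it imposes the specific relation $\tilde H=(1-\delta)H$ (reading $\delta H$ as $\delta\cdot H$), so that $\tilde\liou=(1-\delta)\liou$, and then Taylor-expands $e^{\tau\liou}-e^{\tau\tilde\liou}$ term by term, bounding $\|\liou\|_\diamond\le 2\lambda/(1-\delta)$. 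The leading contribution is $2\delta\lambda t/((1-\delta)N)$, which already contains the factor $\lambda$ and therefore matches $\sqrt{2}\,\delta\sqrt{\epsilon_{qDRIFT}}\le 2\delta\sqrt{\epsilon_{qDRIFT}}$ on the nose without any assumption on the size of $\lambda$.

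Your route is more elementary and, importantly, does not need the proportionality hypothesis $\tilde H=(1-\delta)H$; it works for an arbitrary perturbation with $\|\delta H\|=\delta$. The price, which you correctly identify, is the stray $1/\lambda$ when rewriting $2(t/N)\delta$ in terms of $\sqrt{\epsilon_{qDRIFT}}$, so the stated constant $2$ is recovered only for $\lambda\gtrsim 1$. Both arguments are at the same ``$\lessapprox$'' level and rely on $N\gg\lambda t$ for the higher-order tails to be negligible, so your caveats about that regime are appropriate and mirror what the paper implicitly assumes.
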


The proof has been given in Supplementary Method 4 (Appendix \ref{app:error} : Lemma \ref{app:lem:truncErr}).
Thus the total error after all repetitions is as follows.
\begin{eqnarray}
 \epsilon &\leq& N\epsilon_N\lessapprox N\frac{2\lambda^2t^2}{N^2}+2\sqrt{2}\delta N\frac{\lambda t}{N}=\frac{2\lambda^2 t^2}{N}+2\sqrt{2}\delta\lambda t
\end{eqnarray}
This shows that if $\delta \in O(\sqrt{\epsilon_{qDRIFT}})$ then the asymptotic scaling is not impacted by the exclusion of the terms from the Hamiltonian.

\paragraph{Expected cost : } Let the cost of implementing the unitary $e^{itw_jL_j/N}$ be $c_j$. Cost can be defined in many ways, like total number of gates, number of non-Clifford gates like T or Toffoli gate, number of multi-qubit gates like CNOT, etc. In our paper we focus mainly on the number of non-Clifford gates. Let $\cost_N$ be the variable denoting the cost per repetition of our protocol. Then the expected cost and the variance per repetition is as follows.
\begin{eqnarray}
 \E[\cost_N]=\sum_{j=1}^Mp_jc_j=\frac{1}{\lambda}\sum_{j=1}^Mw_jc_j=\mu_N\quad\text{and}\quad \var[\cost_N]=\frac{1}{\lambda^2}\left(\lambda\sum_{j=1}^Mw_jc_j^2-\left(\sum_{j=1}^Mw_jc_j\right)^2\right)=\sigma_N^2
\end{eqnarray}
By Chebyshev's inequality (Supplementary Note 1 : Appendix \ref{app:prelim}) we have the following for some real number $k>0$.
\begin{eqnarray}
 \Pr\left[|\cost_N-\mu_N|\geq k\sigma_N\right]\leq \frac{1}{k^2}
\end{eqnarray}
The cost per repetition of our protocol is a bounded variable i.e. $a\leq \cost_N\leq b$, for some real numbers $a,b$. If $\cost$ is the variable denoting the cost of all repetitions of our protocol, then
\begin{eqnarray}
 \E[\cost]=N\mu_N
\end{eqnarray}
and since each repetition is independent, making the corresponding cost variables per repetition distributed identically and independently, so we apply Hoeffding's inequality (Supplementary Note 1 : Appendix \ref{app:prelim}) and obtain the following.
\begin{eqnarray}
 \Pr\left[\left|\cost-N\mu_N\right|\geq cN\mu_N\right]\leq 2\exp\left(-\frac{2c^2N^2\mu_N^2}{N(b-a)^2}\right)=2\exp\left(-\frac{2c^2N\mu_N^2}{(b-a)^2}\right)=\epsilon_c   \qquad [c>0]
\end{eqnarray}
Thus with probability at least $1-\epsilon_c$, the cost of all repetitions of the protocol is at most \\
$\frac{(c+1)N}{\lambda}\sum_{j=1}^Mw_jc_j$, where $c=\frac{b-a}{\mu_N\sqrt{2N}}\log\left(\frac{2}{\epsilon_c}\right)$.

\paragraph{Error in simulation while sampling multiple Paulis : }
We consider the qDRIFT protocol \cite{2019_C} for simulating Hamiltonians. If $H=\sum_jh_jH_j$, then in each iteration we sample $H_j$ with probability proportional to $h_j$ and then simulate it for a short time period. Now $H_j$ can be a single Pauli operator or a sum of commuting Paulis, as is achieved in Algorithm~\ref{alg:greedy}, to optimize the cost of simulation. Here we derive a bound on the difference in simulation error for these two cases. 

Let $H_j=\sum_{i_j=1}^{L_j} P_{i_j}$ - sum over commuting Paulis and $M$ be the total number of Pauli operators. So the Hamiltonian can be written as $H=\sum_{j=1}^Lh_jH_j=\sum_{j=1}^L\sum_{i_j=1}^{L_j}h_jP_{i_j}$. We assume the most general case where a single Pauli can be shared between multiple commuting groups i.e. $H_j$.

 In the first case, a group of commuting Paulis i.e one of the $H_j$ is selected independently with probability $q_j=\frac{h_j}{\sum_j h_j}$. In the second case, one single Pauli operator $P_k$ is sampled independently with probability $p_k'=\frac{\sum_{j'}h_{j'}}{\sum_i h_i L_i}$, where in the numerator the sum is over all the commuting Pauli groups in which $P_k$ appears. Let $\lambda=\sum_jh_j$ and $\lambda'=\sum_jh_jL_j$. We define the Liouvillian that generates unitaries under Hamiltonian $H_j$ and $P_{i_j}$ so that
\begin{eqnarray}
 \liou_j&=& i(H_j\rho-\rho H_j)\quad\text{and}\quad\liou_{i_j}=i(P_{i_j}\rho-\rho P_{i_j}).
\end{eqnarray}
Thus if $\liou=i(H\rho-\rho H)$, then
$
 \liou=\sum_{j=1}^Lh_j\liou_j=\sum_{j=1}^Lh_j\sum_{i_j=1}^{L_j}\liou_{i_j}.  
$
We define two channels $\chan_1=\sum_{j=1}^Lq_je^{\tau\liou_j}$ and $\chan_2=\sum_{j=1}^Lp_j\sum_{i_j=1}^{L_j}e^{\tau'\liou_{i_j}}$, where $p_j=\frac{h_j}{\lambda'}$, that evolves the superoperators $\liou_j$ and $\liou_{ij}$ for time interval $\tau=\frac{\lambda t}{N}$ and $\tau'=\frac{\lambda' t}{N}$ respectively. Here we note that for the second channel, for each Pauli $P_k$, we have expanded the sum $p_{k'}=\sum_{j'}\frac{h_{j'}}{\lambda'}$ to reflect the commuting groups in which it belongs. Thus $\sum_{k=1}^Mp_{k'}=\sum_{j=1}^L\sum_{i_j=1}^{L_j}p_j$.

Then we can prove the following.
\begin{lemma} The distance between the qDRIFT channel with single and grouped Hamiltonian terms for simulation time $t$ using $N$ time steps obeys
$$
    \|\chan_2-\chan_1\|_{\diamond}\leq\frac{4t^2\lambda'^2}{N^2}
$$
 \label{lem:multErr}
\end{lemma}
The proof has been given in Supplementary Method 4 (Appendix \ref{app:error} : Lemma \ref{app:lem:multErr}).

\subsection{Optimized circuits for quantum chemistry}
\label{sec:qChem}

In this section we review quantum algorithms for quantum chemistry and design efficient circuits that are useful for quantum chemistry simulation within the Trotter-Suzuki formalism. 
The electronic structure problem has emerged as a central application of quantum computers in recent years, with quantum algorithms providing potential exponential speedups relative to the best known classical algorithms~\cite{2011_WBA,2017_RWSWT}.  The electronic structure problem more specifically is, for a fixed set of positions of the nuclei, find the configuration of electrons that minimizes the total energy for a fixed number of electrons.  The properties of materials, molecules and atoms at low temperatures emerge from these energies. In the non-relativistic case, the dynamics of these electrons are governed by the Coulomb Hamiltonian.
\begin{eqnarray}
 H&=&-\sum_i\frac{\nabla_i^2}{2}-\sum_{i,j}\frac{\zeta_j}{|R_j-r_i|}+\sum_{i<j}\frac{1}{|r_i-r_j|}+\sum_{i<j}\frac{\zeta_i\zeta_j}{|R_i-R_j|}   \nonumber
\end{eqnarray}
where we have used atomic units, $r_i$ represent the positions of electrons, $R_i$ represent the positions of nuclei, and $\zeta_i$ are the charges of nuclei. 

Following the strategy outlined in \cite{2010_LWGetal}, we select the second quantization and discretize the Hamiltonian by representing it within some canonical basis such as a Gaussian basis or a planewave basis. Under the above assumptions, the electronic Hamiltonian can be represented in terms of creation and annihilation operators as follows \cite{2012_SO, 2014_HJO}.  Each spin orbital is assigned a (distinct) qubit where the state $\ket{1}$ corresponds to an occupied orbital and $\ket{0}$ an unoccupied orbital. 
Specifically, let $a_p^{\dagger}$ and $a_p$ be the fermionic raising and lowering operators acting on spin-orbital $p$ satisfying the anti-commutation relation $\{a_p^{\dagger},a_q\}=\delta_{pq}$ and $\{a_p,a_q\}=\{a_p^{\dagger},a_q^{\dagger}\}=0$,
\begin{eqnarray}
  H&=&\sum_{p,q}h_{pq}a_p^{\dagger}a_q+\frac{1}{2}\sum_{p,q,r,s}h_{pqrs}a_p^{\dagger}a_q^{\dagger}a_ra_s 
 \label{eqn:Hferm}
\end{eqnarray}
where the coefficients $h_{pq}, h_{pqrs}$ are determined by the discrete basis set chosen, and the sums run over the number of discretization elements or basis set for a single particle. From inspection, we can see that the number of terms in Equation \ref{eqn:Hferm} is $O(N^4)$, where $N$ is the size of the discrete representation. The molecular orbitals are one widely used basis set. These, in turn, can be expressed as linear combinations of atomic basis functions \cite{1996_F, 2007_SDEetal}. The coefficients of this expansion are obtained by solving the set of Hartree-Fock equations that arise from the variational minimization of the energy using a single determinant wave function. Thus in this representation the location of (indistinguishable) electrons are specified by the occupations of the discrete basis. 

The Jordan-Wigner \cite{1993_JW} or Bravyi-Kitaev \cite{2002_BK} transformations are commonly used to convert the fermionic creation and annihilation operators into Pauli operators. For example, within the Jordan-Wigner encoding, $a$ and $a^{\dagger}$ can be written in terms of qubit operators as follows.
\begin{eqnarray}
 a_p=Q_{(p)}\prod_{j=0}^{p-1}Z_{(j)}=\frac{1}{2}(X_{(p)}+iY_{(p)})\prod_{j=0}^{p-1}Z_{(j)}\quad\text{and}\quad a_p^{\dagger}=Q_{(p)}^{\dagger}\prod_{j=0}^{p-1}Z_{(j)}=\frac{1}{2}(X_{(p)}-iY_{(p)})\prod_{j=0}^{p-1}Z_{(j)}    \nonumber
\end{eqnarray}
Here $Q_{(p)}^{\dagger}=\frac{1}{2}(X_{(p)}-iY_{(p)})$ and $Q_{(p)}=\frac{1}{2}(X_{(p)}+iY_{(p)})$ are the qubit creation and annihilation operators respectively. $\prod_jZ_{(j)}$ acts as an exchange-phase factor, accounting for the anti-commutation relations of $a$ and $a^{\dagger}$. 

With these tools in place, the second-order Trotter-Suzuki approximation reads
\begin{equation}
    e^{-iHt} = \prod_{p,q} e^{-i t(h_{pq} a^\dagger_p a_q + h_{qp} a^\dagger_qa^\dagger_q a_p)/2} \prod_{p,q,r,s}e^{-i t(h_{pqrs} a^\dagger_pa^\dagger_q a_r a_s + h_{srqp} a^\dagger_sa^\dagger_r a_q a_p)/4} + O(t^2)
\end{equation}
Such a simulation can then be performed by substituting in the Pauli representation yielded by the Jordan-Wigner transformation.  Higher-order versions of this are also known~\cite{2007_BACetal} that can achieve error scaling $O(t^{2k+1})$; however, we do not focus on such cases here since the optimizations to the operator exponentials that we consider here will apply in all such cases.

\paragraph{Optimizing two-body operator exponentials : }
The two-body terms are the most common, and often the most significant, contribution to the complexity of a simulation of the Coulomb Hamiltonian in second quantization~\cite{2018_BWMetal}.
In this section we consider the general two-body double excitation terms to reduce this dominant cost for simulation of chemistry, which when expressed using the Jordan-Wigner transformation, can be written as product of $X,Y,Z$ operators as follows \cite{2011_WBA}. We have removed the parentheses in the subscripts, for convenience.
\begin{eqnarray}
 h_{pqrs}a_p^{\dagger}a_q^{\dagger}a_ra_s+h_{srqp}a_s^{\dagger}a_r^{\dagger}a_qa_p&=&\left(\bigotimes_{k=s+1}^{r-1}Z_{k}\right)\left(\bigotimes_{k=q+1}^{p-1}Z_{k}\right)\left(\frac{\Re\{h_{pqrs}\}}{8}H_r+\frac{\Im\{h_{pqrs}\}}{8}H_i\right) \label{eqn:hpqrs}   \\
 \text{where }H_r&=&X_sX_rX_qX_p-X_sX_rY_qY_p+X_sY_rX_qY_p+Y_sX_rX_qY_p \nonumber   \\
&& +Y_sX_rY_qX_p-Y_sY_rX_qX_p+X_sY_rY_qX_p+Y_sY_rY_qY_p \label{eqn:hr}  \\
\text{and }H_i&=&Y_sX_rX_qX_p+X_sY_rX_qX_p-X_sX_rY_qY_p-X_sY_rY_qY_p    \nonumber   \\
&& -Y_sX_rY_qY_p+Y_sY_rX_qX_p+Y_sY_rX_qY_p+Y_sY_rY_qX_p  \nonumber
\end{eqnarray}
Note that if a Gaussian orbital basis is chosen then the values of $h_{pqrs}$ are typically real, resulting in $H_i=0$.  We will assume in the remainder of the discussion that such terms are zero and focus our attention on only the real part of the Hamiltonian.

If we define $h_1=(h_{pqrs}\delta_{X_pX_s}\delta_{X_qX_r}-h_{qprs}\delta_{X_pX_r}\delta_{X_qX_s})$, 
 $h_2=(h_{psqr}\delta_{X_pX_r}\delta_{X_qX_s}-h_{spqr}\delta_{X_pX_q}\delta_{X_rX_s})$ and 
 $h_3=(h_{prsq}\delta_{X_pX_q}\delta_{X_rX_s}-h_{prqs}\delta_{X_pX_s}\delta_{X_qX_r}) 
$, for distinct $p,q,r,s$ then we have the following \cite{2011_WBA}.
\begin{eqnarray}
&& \frac{1}{2}\sum_{p,q,r,s}h_{pqrs}(a_p^{\dagger}a_q^{\dagger}a_ra_s+a_s^{\dagger}a_r^{\dagger}a_qa_p) \nonumber \\
&=&\frac{1}{8}\left(\bigotimes_{k=p+1}^{q-1}\bigotimes_{k=r+1}^{s-1}Z_k\right)
 \Bigg( (X_pX_qX_rX_s+Y_pY_qY_rY_s)(-h_1-h_2+h_3)   \nonumber\\
 &&+(X_pX_qY_rY_s+Y_pY_qX_rX_s)(h_1-h_2+h_3)  
 +(Y_pX_qY_rX_s+X_pY_qX_rY_s)(-h_1-h_2-h_3) \nonumber\\
 &&+(Y_pX_qX_rY_s+X_pY_qY_rX_s)(-h_1+h_2+h_3) \Bigg)  \label{eqn:Hgen}
\end{eqnarray}
Thus conventionally, the part of the Hamiltonian which can be expressed in the form of Equation \ref{eqn:hpqrs}, are broken down into groups of at most 8 commuting operators that act on the qubits in question. Each term is diagonalized by a Clifford circuit and the evolution is performed based on this, with some $R_z$ gates. In \cite{2017_RWSWT} the authors diagonalize all 8 terms in the simultaneous eigenbasis and parallelizes all 8 $R_z$ gates. This reduces the number of Clifford gates, depth, but comes at the cost of using extra 4 ancillae. Excluding the diagonalizing circuits on both sides they use 32 CNOTs and 8 $R_Z$. Each diagonalizing circuit uses 3 CNOT and 1 H gate.
Our goal in this section is to design more efficient quantum circuits for the double excitation terms. 
In Table \ref{tab:totalGate} we have compared the gate costs of the circuit in \cite{2017_RWSWT} with the circuits derived by us in each of the 3 cases considered by us.
Fermionic SWAP gates \cite{2009_VCL, 2018_KMWetal} can be used to make the orbitals neigboring and hence get rid of the tensor product of Z terms. So from here on, we ignore these terms.

Let $q_1, q_2, q_3, q_4$ be the qubits to which the fermions in the orbitals $p,q,r,s$ are mapped respectively. We follow the technique used in \cite{2017_RWSWT}. $W=CNOT_{(1,2)}CNOT_{(1,3)}CNOT_{(1,4)}H_{(1)}$ is the unitary diagonalizing the 8 terms in the simultaneous eigenbasis. We re-write the Hamiltonian $H$ with general coefficients $a_0,\ldots, a_7\subset\real$. Unless mentioned, the leftmost operator acts on qubit $q_1$, next ones on $q_2, q_3$ and the rightmost on qubit $q_4$.
\begin{eqnarray}
 H&=&a_0XXXX+a_1YYXX+a_2YXYX+a_3YXXY  \nonumber\\
    &&+a_4XYYX+a_5XYXY+a_6XXYY+a_7YYYY \label{eqn:H}
\end{eqnarray}
Then following the arguments in \cite{2017_RWSWT} we have the following.
\begin{eqnarray}
 e^{-iHt}&=&W\left(e^{-ia_0Z\id\id\id t}e^{ia_1ZZ\id\id t}e^{ia_2Z\id Z\id t}e^{ia_3Z\id\id Zt}e^{ia_4ZZZ\id t}e^{ia_5ZZ\id Zt}e^{ia_6Z\id ZZt}e^{-a_7ZZZZt} \right)W^{\dagger} \nonumber   \\
 &=&We^{i(-a_0Z\id\id\id+a_1ZZ\id\id +a_2Z\id Z\id +a_3Z\id\id Z+a_4ZZZ\id+a_5ZZ\id Z+a_6Z\id ZZ-a_7ZZZZ)t} W^{\dagger} \label{eqn:eH}
\end{eqnarray}
The terms in between $W$ and $W^{\dagger}$ add an overall phase $\phi$. We denote the state of the qubits $q_1,\ldots, q_4$ after application of $W$ by variables $x_1,\ldots, x_4$ respectively. It is sufficient to analyse the phase when the state is in the standard basis. Consider $e^{-ia_0Z\id\id\id t}$ - this term contributes a phase of $-a_0t$ if $x_1=0$ and $a_0t$ if $x_1=1$. Similarly $e^{ia_1ZZ\id\id t}$ contributes a phase of $a_1t$ if $x_1\oplus x_2=0$ and vice versa. Thus we can have the following expression for the overall phase.
\begin{eqnarray}
 \phi&=&\left(-(-1)^{x_1}a_0+(-1)^{x_1\oplus x_2}a_1+(-1)^{x_1\oplus x_3}a_2+(-1)^{x_1\oplus x_4}a_3     +(-1)^{x_1\oplus x_2\oplus x_3}a_4+(-1)^{x_1\oplus x_2\oplus x_4}a_5 \right. \nonumber   \\
 &&\left.+(-1)^{x_1\oplus x_3\oplus x_4}a_6-(-1)^{x_1\oplus x_2\oplus x_3\oplus x_4}a_7  \right) \label{eqn:phi}
\end{eqnarray}
For different values of $a_0,\ldots a_7$ we get different value of overall phase and different circuits. We consider the following three cases. It is easy to see that $\phi_{x_1=1}=-\phi_{x_1=0}$. So in all the cases below it is sufficient to calculate the phase while setting $x_1=0$.
\begin{figure}
 \centering
 \includegraphics[width=\textwidth]{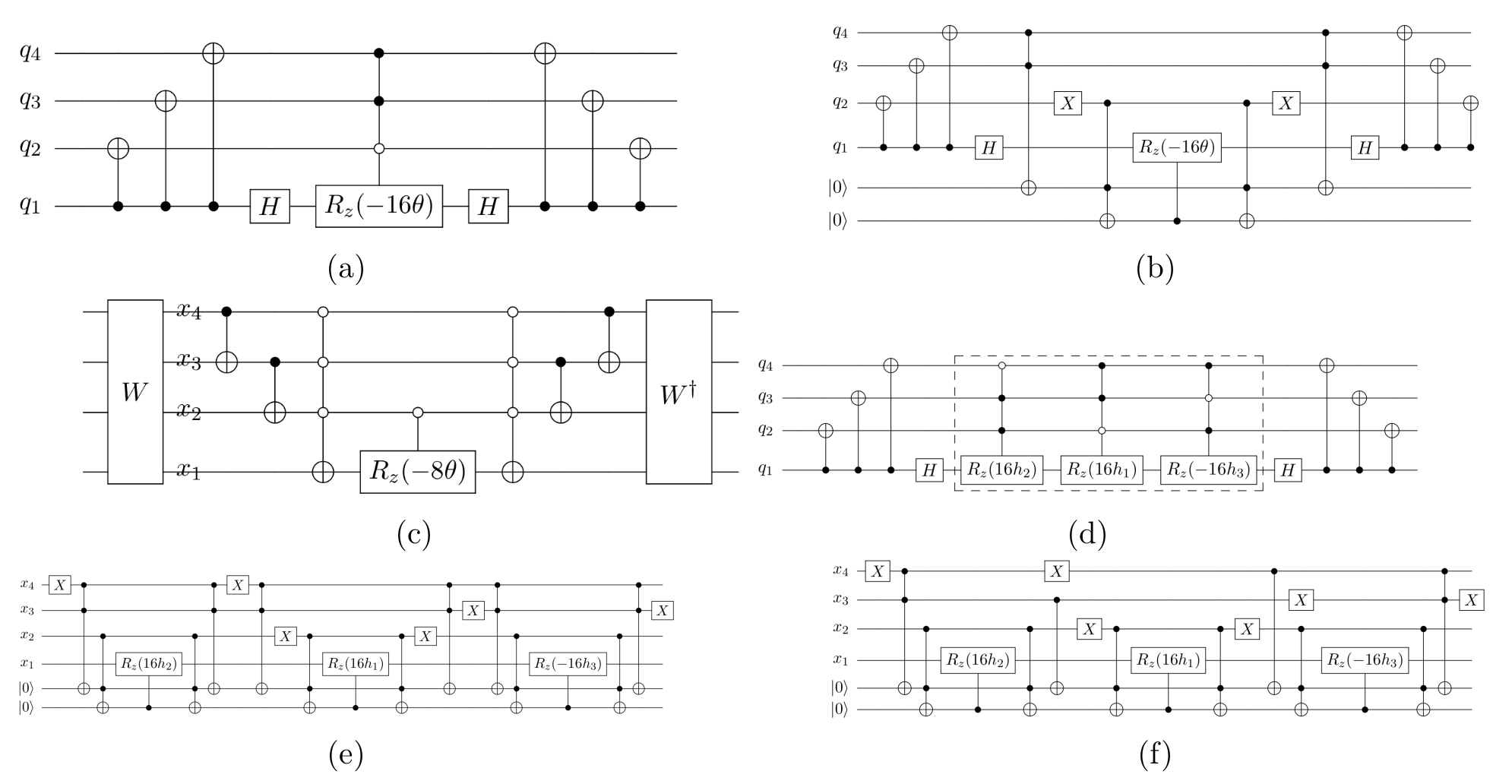}
 \caption{\textbf{Quantum circuit simulating $e^{-iHt}$ : } (a) Circuit when $a_1t=a_6t=-\theta$ and $a_0t=a_2t=\ldots=a_5t=a_7t=\theta$. (b) Circuit in (a) with the multi-controlled rotation implemented with Toffoli and controlled rotation. (c) Circuit when $a_0t=\ldots=a_7t=\theta$. (d) Circuit when the coefficients are as in Equation \ref{eqn:Hgen}. (e) Circuit in (d) with the multi-controlled rotations in the boxed section implemented with Toffolis and controlled rotations. (f) Circuit in (e) with some further reduction of the intermediate Toffoli gates.  }
\label{fig:01}
\end{figure}
\paragraph{Case I : } Let $a_1t=a_6t=-\theta$ and the remaining $a_0t=a_2t=\ldots=a_5t=a_7t=\theta$. Then we can verify that $\phi=8\theta$ if $x_1=1,x_2=0,x_3=x_4=1$, $\phi=-8\theta$ if $x_1=0,x_2=0,x_3=x_4=1$ and $\phi=0$ for the remaining values of $x_1,\ldots, x_4$. Then the quantum circuit simulating $e^{-iHt}$ is shown in Figures 1a and 1b.

\paragraph{Case II : } Let $a_0t=\ldots=a_7t=\theta$. If $x_2=1, x_3\oplus x_4=0$ then $\phi=0$ and if $x_2=0, x_3\oplus x_4=0$ then $\phi=(-1)^{x_3}4\theta$. When $x_3\oplus x_4=1$ then $\phi=-2\theta+(-1)^{x_2}2\theta$. This is equal to $-4\theta$ if $x_2=1$. The quantum circuit simulating $e^{-iHt}$ is shown in Figure \ref{fig:01}c.

\paragraph{Case III : } Let $a_0t=a_7t=-h_1-h_2+h_3$, $a_1t=a_6t=h_1-h_2+h_3$,
 $a_2t=a_5t=-h_1-h_2-h_3$ and $a_3t=a_4t=-h_1+h_2+h_3$, as shown in Equation \ref{eqn:Hgen}. It can be verified that $\phi_{x_2=x_3=1,x_4=0}=8h_2$, $\phi_{x_2=0,x_3=x_4=1}=8h_1$ and $\phi_{x_2=x_4=1,x_3=0}=-8h_3$. For every other values of $x_2,x_3,x_4$, $\phi=0$. The corresponding quantum circuit simulating $e^{-iHt}$ has been shown in Figures 1d, 1e, 1f.

 \begin{table}[t]
\centering 
\footnotesize
\begin{tabular}{|c|c|c|c|c|}
\hline 
& Case I & Case II & Case III & [50] \\
\hline\hline
CNOT & 6 & 10 & 8 & 38 \\
\hline 
H & 2 & 2 & 2 & 2 \\
\hline 
X & 2 & 6 & 6 & 0   \\
\hline 
$(c)R_z$ & 1 & 1 & 3 & 8 \\
\hline
$\#$Toff. pair & 2 & 2 & 4 & 0 \\
\hline \hline
Total & 15 & 23 & 27 & 42   \\
\hline
\end{tabular}
\caption{Comparison of gate counts required to simulate $e^{-iHt}$ (Equations 19, 20) using the circuit synthesized by us with the circuit in [50]. The gate counts have been given for the 3 different cases considered in this work.}
\label{tab:totalGate}
 \end{table}

 We already remarked that we can ignore the product of Z terms in Equations
\ref{eqn:hpqrs} and \ref{eqn:Hgen} by using fermionic SWAP gates. Now if we take two Hamiltonians of the form \ref{eqn:H} having some overlapping qubits, then we can get different Hamiltonians by rearranging the commuting Paulis. In the next few subsections we design circuits for the corresponding exponentials of these Hamiltonians. We must keep in mind that in the following subsections $P_0=\X$ and $P_1=\Y$, $\conj{i}=i+1\mod 2$. Table \ref{tab:comp} summarizes the number of non-Clifford gates used to implement the various circuits. All rotation gates with $n$ ($>1$) controls, can be decomposed into $cR_z$ (single control) and NOT with $n$ controls, each of which can be decomposed into $n-1$ Toffolis (as shown in Figure \ref{fig:01}b). We have discussed in 'Introduction' about special gadgets that can be used to further reduce the T-count of the circuits. In Figure \ref{fig:01}e, 1f we show how Toffolis can be reduced in segments of the circuits. Our circuits have less gates (even the Clifford gates), compared to \cite{2017_RWSWT} or the approaches where we synthesize circuit for each exponentiated Pauli and then concatenate them. In fact, we show the dependence of the circuit size or Clifford and non-Clifford gate cost on the coefficients of the commuting Paulis in the Hamiltonian expression. 

\begin{table}[t]
\centering 
\footnotesize
\begin{tabular}{|c|c|c|c|c|c|c|c|c|c|c|}
 \hline
 $\#q_{overlap}$ & Ham. & \multicolumn{3}{c|}{Case I} & \multicolumn{3}{c|}{Case II} & \multicolumn{3}{c|}{Case III} \\
 \cline{3-11}
 & & $\#cR_z$ & $\#R_z$ & $\#$Toff. pair & $\#cR_z$ & $\#R_z$ & $\#$Toff. pair & $\#cR_z$ & $\#R_z$ & $\#$Toff. pair \\
 \hline
 0 & $H$ & 1 & 0 & 2 & 1 & 0 & 2 & 3 & 0 & 4 \\
 \hline
 \multirow{2}{*}{1} & $H_{1y}$ & 2 & 0 & 2 & 0 & 2 & 8 & 6 & 0 & 24 \\
 \cline{2-11}
 & $H_{1x}$ & 2 & 0 & 2 & 0 & 2 & 8 & 6 & 0 & 6 \\
 \hline
 \multirow{2}{*}{2} & $H_{21}$ & - & - & - & 2 & 0 & 2 & 4 & 0 & 4 \\
 \cline{2-11}
 & $H_{20}$ & 2 & 0 & 2 & 2 & 0 & 2 & 4 & 0 & 4 \\
 \hline
 \multirow{2}{*}{3} & $H_{3y}$ & 2 & 0 & 2 & 1 & 0 & 2 & 6 & 0 & 6 \\
 \cline{2-11}
 & $H_{3x}$ & 2 & 0 & 2 & 0 & 2 & 2 & 6 & 0 & 6 \\
 \hline
\end{tabular}

\begin{tabular}{|c|c|c|c|c|}
 \hline
 $\#q_{overlap}$ & Ham. &  \\
 \hline 
 \multirow{2}{*}{0} & \multirow{2}{*}{$H$} & $a_0XXXX+a_1YYXX+a_2YXYX+a_3YXXY$ \\
 & & $+a_4XYYX+a_5XYXY+a_6XXYY+a_7YYYY $ \\
 \hline
 \multirow{4}{*}{1} & \multirow{2}{*}{$H_{1y}$} & $a_3YXXY\id\id\id+a_5XYXY\id\id\id+a_6XXYY\id\id\id+a_7YYYY\id\id\id $ \\
 & & $+b_1\id\id\id YYXX+b_2\id\id\id YXYX+b_3\id\id\id YXXY+b_7\id\id\id YYYY $ \\
 \cline{2-3}
 & \multirow{2}{*}{$H_{1x}$} & $a_0XXXX\id\id\id+a_1YYXX\id\id\id+a_2YXYX\id\id\id+a_4XYYX\id\id\id \nonumber $ \\
 & & $+b_0\id\id\id XXXX+b_4\id\id\id XYYX+b_5\id\id\id XYXY+b_6\id\id\id XXYY$ \\
 \hline
  \multirow{4}{*}{2} & \multirow{2}{*}{$H_{21}$} & $a_2YXYX\id\id+a_3YXXY\id\id+a_4XYYX\id\id+a_5XYXY\id\id $ \\
 & & $+b_2\id\id YXYX+b_3\id\id XYYX+b_4\id\id YXXY+b_5\id\id XYXY $ \\
 \cline{2-3}
 & \multirow{2}{*}{$H_{20}$} & $a_0XXXX\id\id+a_1YYXX\id\id+a_6XXYY\id\id+a_7YYYY\id\id $ \\
 & & $+b_0\id\id XXXX+b_6\id\id YYXX+b_1\id\id XXYY+b_7\id\id YYYY $ \\
 \hline
  \multirow{4}{*}{3} & \multirow{2}{*}{$H_{3y}$} & $a_1YYXX\id+a_2YXYX\id+a_3YXXY\id+a_7YYYY\id $ \\
 & & $+b_3\id YXXY+b_5\id XYXY+b_6\id XXYY+b_7\id YYYY$ \\
 \cline{2-3}
 & \multirow{2}{*}{$H_{3x}$} & $a_0XXXX\id+a_4XYYX\id+a_5XYXY\id+a_6XXYY\id $ \\
 & & $+b_0\id XXXX+b_1\id YYXX+b_2\id YXYX+b_4\id XYYX$ \\
 \hline
\end{tabular}

\begin{tabular}{|c|c|}
\hline 
 Case & Coefficients \\
 \hline 
 I & $a_1t=a_6t=-\theta_1$; $a_0t=a_2t=\ldots=a_5t=a_7t=\theta_1$; $b_1t=b_6t=-\theta_2$; $a_0t=a_2t=\ldots=a_5t=a_7t=\theta_2$ \\
 \hline
 II & $a_0t=\ldots=a_7t=\theta_1$; $b_0t=\ldots=b_7t=\theta_2$ \\
 \hline
 \multirow{2}{*}{III} &  $a_0t=a_7t=-h_1-h_2+h_3$; $a_1t=a_6t=h_1-h_2+h_3$; $a_2t=a_5t=-h_1-h_2-h_3$; $a_3t=a_4t=-h_1+h_2+h_3$\\
 &$b_0t=b_7t=-g_1-g_2+g_3$; $b_1t=b_6t=g_1-g_2+g_3$; $b_2t=b_5t=-g_1-g_2-g_3$; $b_3t=b_4t=-g_1+g_2+g_3$\\
 \hline
\end{tabular}
\caption{The first table shows the number of $R_z, cR_z$ and Toffoli (Toff.) pairs used to design the circuits implementing $e^{-iH't}$, where $H'$ are the Hamiltonians (Ham.) described in Section \ref{sec:qChem}. $\#q_{overlap}$ is the number of overlapping qubits.  Pairs of Toffoli gates are cited here, since [46] can be used in these cases to uncompute the action of the Toffoli gates in these cases. The second table summarizes the Hamiltonians and the third table summarizes the value of the coefficients for the different cases. When $\#q_{overlap}=0$ then $\theta_1=\theta$, since there are no $b$-coefficients.}
\label{tab:comp}
\end{table}

\paragraph{Overlap on 1 qubit : }
Previously, we provided an analysis of the circuits for cases where many of the Hamiltonian coefficients are chosen to follow regular patterns and see that the costs of the simulation can be reduced through the use of these techniques.  Here we provide a more aggressive strategy wherein we combine multiple commuting terms together and find particular combinations of angles such that the simulation circuits are efficient.  The results are summarized in Table \ref{tab:comp}.
We consider the case when there is overlap on 1 qubit. We can have the following sets of commuting Paulis. 
\begin{eqnarray}
 G_{1y}&=&\{P_iP_jP_kY\id\id\id, \id\id\id Y P_aP_bP_c:i+j+k\equiv 1\mod 2, a+b+c\equiv 1\mod 2 \}    \label{eqn:G1y} \\
 G_{1x}&=&\{P_iP_jP_kX\id\id\id, \id\id\id X P_aP_bP_c:i+j+k\equiv 0\mod 2, a+b+c\equiv 0\mod 2 \}    \label{eqn:G1x} 
\end{eqnarray}
Without loss of generality, we assume that the leftmost operator acts on qubit $q_1$, next one on $q_2$ and so on - rightmost one acts on qubit $q_7$. We denote a state vector as $\ket{Q_1vQ_2}$ where $Q_1=\ket{q_1q_2q_3}$, $Q_2=\ket{q_5q_6q_7}$ and $v,q_1,\ldots,q_7\in\{0,1\}$. We can have the following Hamiltonian terms, expressed as sums of commuting Paulis from the above two sets.
\begin{eqnarray}
  H_{1y}&=&a_3YXXY\id\id\id+a_5XYXY\id\id\id+a_6XXYY\id\id\id+a_7YYYY\id\id\id \nonumber \\
 &&+b_1\id\id\id YYXX+b_2\id\id\id YXYX+b_3\id\id\id YXXY+b_7\id\id\id YYYY 
 \label{eqn:h1y} \\
 H_{1x}&=&a_0XXXX\id\id\id+a_1YYXX\id\id\id+a_2YXYX\id\id\id+a_4XYYX\id\id\id \nonumber \\
 &&+b_0\id\id\id XXXX+b_4\id\id\id XYYX+b_5\id\id\id XYXY+b_6\id\id\id XXYY
 \label{eqn:h1x} 
\end{eqnarray}

\paragraph{Circuit for simulating $e^{-iH_{1y}t}$ : }

Let $W_{1y}$ be the unitary consisting of the following sequence of gates. The rightmost one is the first to be applied. With a slight abuse of notation we denote $CNOT_{(4,1)}CNOT_{(4,2)}CNOT_{(4,3)}$ by $CNOT_{(4,I)}$ and $CNOT_{(4,5)}CNOT_{(4,6)}CNOT_{(4,7)}$ by $CNOT_{(4,II)}$.
\begin{eqnarray}
 W_{1y}=CNOT_{(4,I)}H_{(4)}Z_{(4)}CNOT_{(4,I)}CNOT_{(4,II)}H_{(4)}CNOT_{(4,I)}    \nonumber
\end{eqnarray}
In the following theorem we show that this is a diagonalizing circuit for the set of Paulis in $G_{1y}$.
\begin{theorem}
For each $i,j,k,l,a,b,c\in\intg_2$, such that $P_iP_jP_kY\id\id\id, \id\id\id YP_aP_bP_c\in G_{1y}$ we have the following.
 \begin{eqnarray}
  \sqrt{-1}^{i+j+k+1}W_{1y}\left(Z_{(1)}^iZ_{(2)}^jZ_{(3)}^kZ_{(4)}\id\id\id\right)W_{1y}^{\dagger}=P_iP_jP_k Y\id\id\id \nonumber \\
  \text{ and } \sqrt{-1}^{a+b+c+1}W_{1y}\left(\id\id\id Z_{(4)}Z_{(5)}^iZ_{(6)}^jZ_{(7)}^k\right)W_{1y}^{\dagger}=\id\id\id Y P_aP_bP_c \nonumber 
 \end{eqnarray}
 \label{thm:diagGy}
\end{theorem}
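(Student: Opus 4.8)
The plan is to prove both identities by direct Clifford conjugation, using that $W_{1y}$ is built only from $\mathrm{CNOT}$, $H_{(4)}$ and the Pauli $Z_{(4)}$, so it sends Paulis to Paulis and conjugation is multiplicative. Since each input operator is a product of single-qubit $Z$'s, it suffices to compute the image $W_{1y}Z_{(m)}W_{1y}^{\dagger}$ of each generator and then multiply. I would propagate each $Z_{(m)}$ through the seven layers of $W_{1y}$, starting from the first-applied (rightmost) $CNOT_{(4,I)}$, using the standard rules: under a $\mathrm{CNOT}$ with control $c$ and target $t$, $X_{(c)}\mapsto X_{(c)}X_{(t)}$ and $Z_{(t)}\mapsto Z_{(c)}Z_{(t)}$ with all other generators passing through unchanged; under $H_{(4)}$, $X_{(4)}\leftrightarrow Z_{(4)}$ and $Y_{(4)}\mapsto -Y_{(4)}$; and under the Pauli $Z_{(4)}$, $X_{(4)}\mapsto -X_{(4)}$ and $Y_{(4)}\mapsto -Y_{(4)}$.

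Carrying out this bookkeeping, I expect $W_{1y}Z_{(m)}W_{1y}^{\dagger}=Z_{(m)}X_{(4)}X_{(5)}X_{(6)}X_{(7)}$ for $m\in\{1,2,3\}$ and $W_{1y}Z_{(4)}W_{1y}^{\dagger}=-X_{(1)}X_{(2)}X_{(3)}Z_{(4)}X_{(5)}X_{(6)}X_{(7)}$. Multiplying these according to $Z_{(1)}^iZ_{(2)}^jZ_{(3)}^kZ_{(4)}$, the central observation is that the block $X_{(5)}X_{(6)}X_{(7)}$ appears exactly $i+j+k+1$ times; the defining constraint $i+j+k\equiv 1\bmod 2$ makes this exponent even, so it collapses to the identity on $q_5,q_6,q_7$ and the image is again supported on $q_1,\dots,q_4$. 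The same parity gives $X_{(4)}^{i+j+k}Z_{(4)}=X_{(4)}Z_{(4)}=-iY_{(4)}$ on $q_4$, producing the required $Y$, while on $q_1,q_2,q_3$ one combines $Z_{(m)}^{\bullet}X_{(m)}=\sqrt{-1}^{\,\bullet}P_{\bullet}$ (using $Z^0X=X=P_0$ and $ZX=iY=iP_1$) to recover $P_iP_jP_k$.

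Collecting all phases then yields $W_{1y}(Z_{(1)}^iZ_{(2)}^jZ_{(3)}^kZ_{(4)})W_{1y}^{\dagger}=\sqrt{-1}^{\,i+j+k+1}\,P_iP_jP_kY\id\id\id$, which is the first claim after moving the scalar across the equality; I would note explicitly that since $i+j+k+1$ is even the scalar $\sqrt{-1}^{\,i+j+k+1}$ is real and equal to its own inverse, so the sign convention as written in the statement is unambiguous. The second identity follows from the identical computation with the two triples of qubits interchanged: one finds $W_{1y}Z_{(m)}W_{1y}^{\dagger}=X_{(1)}X_{(2)}X_{(3)}X_{(4)}Z_{(m)}$ for $m\in\{5,6,7\}$, and combining with $W_{1y}Z_{(4)}W_{1y}^{\dagger}$ the block $X_{(1)}X_{(2)}X_{(3)}$ now appears $a+b+c+1$ times and cancels by the constraint $a+b+c\equiv 1\bmod 2$, giving the stated $\id\id\id YP_aP_bP_c$ with the analogous real prefactor.

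The main obstacle is purely the sign and phase bookkeeping: there are seven conjugation layers, and the non-commuting recombinations on $q_4$ (each contributing a factor of $\pm i$ from products such as $ZY=-iX$, $XZ=-iY$, $ZX=iY$), together with the explicit $-1$ from the $Z_{(4)}$ gate, must be tracked precisely, since an error in any single one of them alters the final scalar. The one conceptual point worth isolating is that the parity condition built into $G_{1y}$ is exactly what forces the spurious $X_{(5)}X_{(6)}X_{(7)}$ (respectively $X_{(1)}X_{(2)}X_{(3)}$) factor to vanish and the $q_4$ operator to become a $Y$; without this constraint the conjugated operator would neither be supported on the intended four qubits nor lie in $G_{1y}$.
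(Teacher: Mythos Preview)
Your approach is correct but genuinely different from the paper's. The paper does \emph{not} conjugate Paulis through $W_{1y}$ layer by layer; instead it first writes down an explicit simultaneous eigenbasis $\{\ket{v_{y\pm}}\}$ for the operators in $G_{1y}$ (their Lemma~B.1), then applies $W_{1y}^{\dagger}$ to a generic eigenvector, tracks the resulting computational-basis superposition through each of the seven gates, applies the diagonal $Z$-string, and reverses with $W_{1y}$, checking that the net effect reproduces the eigenvalue recorded for the corresponding $P_iP_jP_kY\id\id\id$. In short, the paper works in the Schr\"odinger picture on states; you work in the Heisenberg picture on operators.

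Your route is the more economical one for the statement as written: it is an operator identity between Clifford conjugates of Paulis, and multiplicativity of conjugation reduces it to the images of the seven single-qubit $Z$ generators, exactly as you outline. You also isolate the one genuinely conceptual point---that the membership constraint $i+j+k\equiv 1\bmod 2$ (resp.\ $a+b+c\equiv 1\bmod 2$) is precisely what kills the unwanted $X_{(5)}X_{(6)}X_{(7)}$ (resp.\ $X_{(1)}X_{(2)}X_{(3)}$) block and forces the $q_4$ factor to a $Y$---which the paper's eigenvector calculation obscures. The paper's approach, on the other hand, buys something your method does not: it simultaneously exhibits the explicit eigenbasis of $G_{1y}$, which the authors reuse as a template for the analogous lemmas on $G_{1x}$, $G_{21}$, $G_{20}$, $G_{3y}$, $G_{3x}$. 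Your own caveat about sign bookkeeping is well placed; the intermediate images you ``expect'' may carry stray overall signs (for instance a careful pass gives $W_{1y}Z_{(1)}W_{1y}^{\dagger}=-Z_{(1)}X_{(4)}X_{(5)}X_{(6)}X_{(7)}$), but since $i+j+k+1$ is even the final scalar is $\pm 1$ and the accumulated signs are exactly what the computation is there to determine.
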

We prove this theorem by showing that the operators on the LHS and RHS have equivalent actions on the eigenstates corresponding to an eigenbasis for the Paulis in $G_{1y}$. The proof of this theorem has been given in Theorem \ref{app:thm:diagGy} of Supplementary Method 1 (Appendix \ref{app:1overlap}). The eigenbasis has been shown in Lemma \ref{app:lem:ebasisGy} of Supplementary Method 1 (Appendix \ref{app:1overlap}). 

Thus we have the following.
\begin{eqnarray}
 e^{-iH_{1y}t} &=&e^{-i(-a_3W_{1y}(Z\id\id Z\id \id\id)W_{1y}^{\dagger}-a_5W_{1y}(\id Z\id Z\id\id\id)W_{1y}^{\dagger}-a_6W_{1y}(\id\id ZZ\id\id\id)W_{1y}^{\dagger}+a_7W_{1y}(Z ZZZ\id\id\id)W_{1y}^{\dagger})t} \nonumber   \\
 &&\cdot e^{-i(-b_1W_{1y}(\id\id\id ZZ \id\id)W_{1y}^{\dagger}-b_2W_{1y}(\id\id\id Z\id Z\id)W_{1y}^{\dagger}-b_3W_{1y}(\id\id\id Z\id\id Z)W_{1y}^{\dagger}+b_7W_{1y}(\id\id\id ZZ ZZ)W_{1y}^{\dagger})t} \nonumber \\
 &=&W_{1y}e^{ia_3 Z\id\id Z\id \id\id t}e^{ia_5\id Z\id Z\id\id\id t}e^{ia_6 \id\id ZZ\id\id\id t}e^{-ia_7 ZZZZ\id\id\id t} 
  e^{ib_1\id\id\id ZZ \id\id t}e^{ib_2\id\id\id Z\id Z\id t}e^{ib_3\id\id\id Z\id\id Z t}e^{-ib_7\id\id\id ZZ ZZ t} W_{1y}^{\dagger}\nonumber 
\end{eqnarray}
The state of the qubits $q_1,\ldots,q_7$ after the application of $W_{1y}$ is denoted by variables $x_1,\ldots,x_7$ respectively. We have the following expression for the overall phase incurred between $W_{1y}$ and $W_{1y}^{\dagger}$.
\begin{eqnarray}
 \phi&=&(-1)^{x_4\oplus x_1}a_3t+(-1)^{x_4\oplus x_2}a_5t+(-1)^{x_4\oplus x_3}a_6t-(-1)^{x_4\oplus x_1\oplus x_2\oplus x_3}a_7t \nonumber \\
 &&+(-1)^{x_4\oplus x_7}b_3t+(-1)^{x_4\oplus x_6}b_2t+(-1)^{x_4\oplus x_5}b_1t-(-1)^{x_4\oplus x_5\oplus x_6\oplus x_7}b_7t   \nonumber
\end{eqnarray}
It is easy to check that $\phi_{\conj{x_4}}=-\phi_{x_4}$. We consider the following three cases and it is sufficient to check the phase values when $x_4=0$.

\begin{figure}
  \centering
   \includegraphics[width=\textwidth]{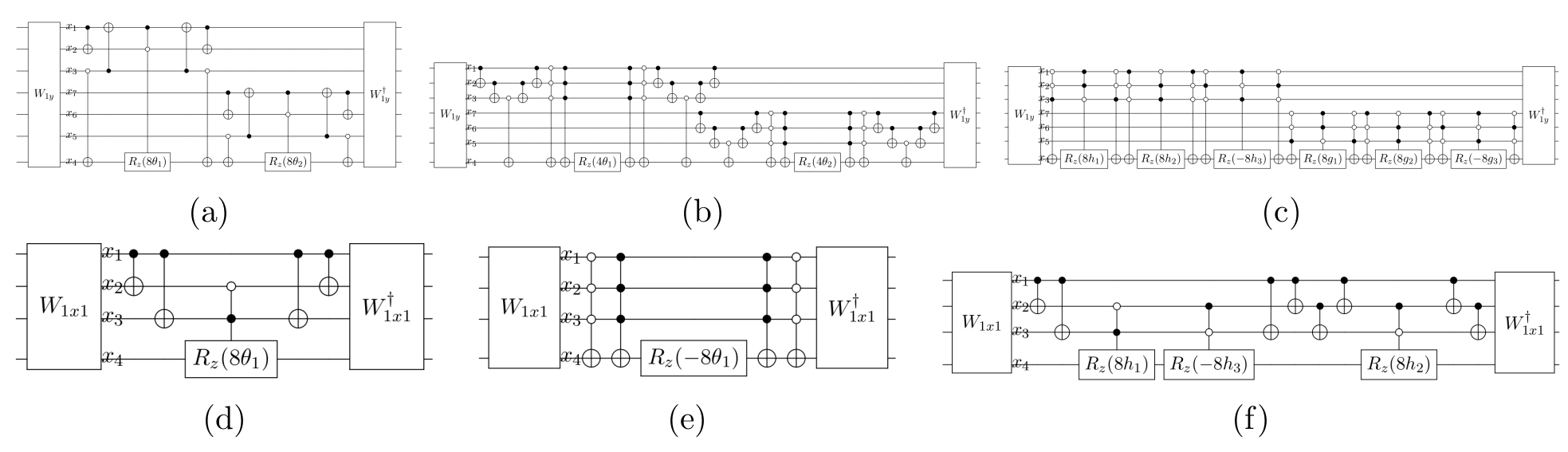}
 \caption{\textbf{Quantum circuit for $e^{-iH_{1y}t}$ and $e^{-iH_{1x1}t}$ : } (a)-(c): Circuit simulating $e^{-iH_{1y}t}$ (a) when $a_6t=-\theta_1, b_1t=-\theta_2, a_3t=a_5t=a_7t=\theta_1$ and $b_2t=b_3t=b_7t=\theta_2$; (b) when $a_0t=\ldots=a_7t=\theta_1$ and $b_0t=\ldots=b_7t=\theta_2$; (c) when the coefficients are as in Equation 18. (d)-(f): Circuit simulating $e^{-iH_{1x1}t}$ (d) when $a_1t=-\theta_1, b_6t=-\theta_2, a_0t=a_2t=a_3t=\theta_1$ and $b_0t=b_4t=b_5t=\theta_2$; (e) when $a_0t=\ldots=a_7t=\theta_1$ and $b_0t=\ldots=b_7t=\theta_2$; (f) when the coefficients are as in Equation 18. }
\label{fig:23}
\end{figure}
\paragraph{Case I : } Let $a_6 t=-\theta_1$, $b_1t=-\theta_2$, $a_3t=a_5t=a_7t=\theta_1$ and $b_2t=b_3t=b_7t=\theta_2$. Following the conventions and explanations given for Case I we have the following overall phase after the application of $W_{1y}$. We can write $\phi=f_1(\theta_1)+f_2(\theta)$, for two functions $f_1$ and $f_2$. The following can be verified.
\begin{enumerate}
 \item If $x_1=x_2=0$ and $x_3=1$ then $\phi=4\theta_1+f_2(\theta_2)$. Analogously, if $\phi=f_1(\theta_1)+4\theta_2$ when $x_7=x_6=0, x_5=1$.

\item If $x_1=x_2=1$ and $x_3=0$ then $\phi=-4\theta_1+f_2(\theta_2)$ and if $x_7=x_6=1, x_5=0$ then $\phi=f_1(\theta_1)-4\theta_2$.

\item For any other values of $x_1,x_2,x_3$, $\phi=f_2(\theta_2)$ and analogously, for any other values of $x_7,x_6,x_5$, $\phi_=f_1(\theta_1)$.
\end{enumerate}
A quantum circuit simulating $e^{-iH_{1y}t}$ is shown in Figure \ref{fig:23}a.

\paragraph{Case II : } Now we consider the case when $a_6 t=a_3t=a_5t=a_7t=\theta_1$ and $b_1t=b_2t=b_3t=b_7t=\theta_2$. Here also $\phi$ can be written as sum of two functions : $\phi=f_1(\theta_1)+f_2(\theta_2)$. We can make the following observations.
\begin{enumerate}
 \item If only one of $x_1, x_2, x_3$ is 1 then $\phi=2\theta_1+f_2(\theta_2)$ and analogously, if any one of $x_5, x_6, x_7$ is 1 then  $\phi=f_1(\theta_1)+2\theta_2$.
 
 \item If any two of $x_1,x_2,x_3$ is 1 then $\phi=-2\theta_1+f_2(\theta_2)$. Similarly, if any two of $x_5, x_6, x_7$ is 1 then $\phi=f_1(\theta_1)-2\theta_2$.
 
 \item If $x_1=x_2=x_3=0$ then $\phi=2\theta_1+f_2(\theta_2)$ and similarly, if $x_5=x_6=x_7=0$ then $\phi=f_1(\theta_1)+2\theta_2$.
 
 \item If $x_1=x_2=x_3=1$ then $\phi=-2\theta_1+f_2(\theta_2)$ and analogously, if $x_5=x_6=x_7=1$ then $\phi=f_1(\theta_1)-2\theta_2$.
\end{enumerate}
A circuit simulating $e^{-iH_{1y}t}$ in this case, has been shown in Figure \ref{fig:23}b.

\paragraph{Case III : } Let $a_3t=-h_1+h_2+h_3, a_5t=-h_1-h_2-h_3, a_6t=h_1-h_2+h_3, a_7t=-h_1-h_2+h_3$ and $b_3t=-g_1+g_2+g_3, b_2t=-g_1-g_2-g_3, b_1t=g_1-g_2+g_3, b_7t=-g_1-g_2+g_3$ (Equation \ref{eqn:Hgen}). Let $\vect{h}=(h_1,h_2,h_3)$ and $\vect{g}=(g_1,g_2,g_3)$. We can write $\phi=f_1(\vect{h})+f_2(\vect{g})$. We can make the following observations.
\begin{enumerate}
 \item If $x_1=x_2=x_3$ then $\phi=f_2(\vect(g))$ and analogously, if $x_5=x_6=x_7$ then $\phi=f_1(\vect{h})$.

 \item Suppose $x_i=x_j$ and $x_k\neq x_i$, where $i,j,k\in\{1,2,3\}$ and $i\neq j\neq k$. Then flipping the values changes the sign. For example, if $\phi_{x_1=x_2=0,x_3=1}=f_1(\vect{h})+f_2(\vect{g})$, then $\phi_{x_1=x_2=1,x_3=0}=-f_1(\vect{h})+f_2(\vect{g})$. Similar phenomenon occurs if $i,j,k\in\{7,6,5\}$, except this time sign of $f_2(\vect{g})$ flips.. So it is sufficient to consider the case when two variables are 1.
\begin{eqnarray}
 \phi_{x_1=x_2=1,x_3=0}=4h_1+f_2(\vect{g}),&\qquad& \phi_{x_7=x_6=1,x_5=0}=f_1(\vect{h})+4g_1      \nonumber\\
 \phi_{x_2=x_3=1,x_1=0}=4h_2+f_2(\vect{g}),&\qquad& \phi_{x_6=x_5=1,x_7=0}=f_1(\vect{h})+4g_2       \nonumber\\
 \phi_{x_3=x_1=1,x_2=0}=-4h_3+f_2(\vect{g}),&\qquad& \phi_{x_5=x_7=1,x_6=1}=f_1(\vect{h})-4g_3   \nonumber
\end{eqnarray}
\end{enumerate}
A circuit simulating $e^{-iH_{1y}t}$ in this case, has been shown in Figure \ref{fig:23}c.

\paragraph{Circuit for simulating $e^{-iH_{1x}t}$ : }
An eigenbasis for the Paulis in $G_{1x}$ has been given in Lemma \ref{app:lem:ebasisGx} of Supplementary Method 1 (Appendix \ref{app:1overlap}). But we are unable to find out (by hand) a unitary (analogous to $W_{1y}$) that diagonalizes the set of commuting Paulis in $G_{1x}$, as we did in the previous subsection for $G_{1y}$. 
So we divide the commuting Paulis into two groups of 4-qubit Paulis, i.e. we consider the following two sets.
\begin{eqnarray}
 G_{1x1}&=&\{P_iP_jP_kX\id\id\id : i+j+k\equiv 0\mod 2.\} \nonumber \\
 G_{1x2}&=&\{\id\id\id XP_aP_bP_c : a+b+c\equiv 0\mod 2.\}    \nonumber
\end{eqnarray}
and the following two Hamiltonians
\begin{eqnarray}
 H_{1x1}&=&a_0XXXX\id\id\id+a_1YYXX\id\id\id+a_2YXYX\id\id\id+a_4XYYX\id\id\id \nonumber \\
 H_{1x2}&=&b_0\id\id\id XXXX+b_4\id\id\id XYYX+b_5\id\id\id XYXY+b_6\id\id\id XXYY  \nonumber
\end{eqnarray}
We can use the diagonalizing circuit of \cite{2017_RWSWT} and have the following.
\begin{eqnarray}
e^{-iH_{1x1}t}&=&W_{1x1}e^{-ia_0 \id\id\id Z \id\id\id t}e^{ia_1ZZ\id Z \id\id\id t}e^{ia_2 Z\id ZZ \id\id\id t}e^{ia_4 \id ZZZ \id\id\id t}W_{1x1}^{\dagger} \nonumber\\
 e^{-iH_{1x2}t}&=&W_{1x2}e^{-ib_0 \id\id\id Z\id\id\id t}e^{ib_4\id\id\id ZZZ\id t}e^{ib_5 \id\id\id ZZ\id Z t}e^{ib_6 \id\id\id Z\id ZZ t}W_{1x2}^{\dagger}\nonumber 
\end{eqnarray}
where $W_{1x1}=CNOT_{(4,1)}CNOT_{(4,2)}CNOT_{(4,3)}H_{(4)}$ and $W_{1x2}=CNOT_{(4,5)}CNOT_{(4,6)}CNOT_{(4,7)}H_{(4)}$, where the rightmost gate is the first one to be applied. We denote the state of the qubits $q_1,\ldots,q_4$ after the application of $W_{1x1}$ by the variables $x_1,\ldots,x_4$ respectively. Also, the variables $x_4',\ldots,x_7'$ denote the state of the qubits $q_4,\ldots,q_7$, respectively, after the application of $W_{1x2}$. We have the following expression for the overall phase incurred between $W_{1x1}$, $W_{1x1}^{\dagger}$ and between $W_{1x2}$, $W_{1x2}^{\dagger}$.
\begin{eqnarray}
 \phi_1&=&-(-1)^{x_4}a_0t+(-1)^{x_4\oplus x_1\oplus x_2}a_1t+(-1)^{x_4\oplus x_1\oplus x_3}a_2t+(-1)^{x_4\oplus x_2\oplus x_3}a_4t     \nonumber \\
  \phi_2&=&-(-1)^{x_4'}b_0t+(-1)^{x_4'\oplus x_5'\oplus x_6'}b_4t+(-1)^{x_4'\oplus x_5'\oplus x_7'}b_5t+(-1)^{x_4'\oplus x_6'\oplus x_7'}b_6t     \nonumber
\end{eqnarray}
We consider the following three cases, in each of which $\phi_{1,\conj{x_4}}=-\phi_{1,x_4}$ and $\phi_{2,\conj{x_4'}}=\phi_{2,x_4'}$.
\paragraph{Case I : } Let $a_1 t=-\theta_1$, $b_6t=-\theta_2$, $a_0t=a_2t=a_4t=\theta_1$ and $b_0t=b_4t=b_5t=\theta_2$. It is easy to verify that a non-zero phase $\phi_1=-4\theta_1$ exists if and only if $x_1=x_2\neq x_3$. Analogously, $\phi_2=-4\theta_2$ if $x_7'=x_6'\neq x_5'$, else it is 0. 

\paragraph{Case II : } Now we consider the case when $a_0 t=a_1t=a_2t=a_4t=\theta_1$, $b_0t=b_4t=b_5t=b_6t=\theta_2$. If $x_1=x_2=x_3$ then $\phi_1=2\theta_1$, else it is $-2\theta_1$. Similarly for $\phi_2$. 

\paragraph{Case III : } Next we consider the case where $a_0t=-h_1-h_2+h_3, a_1t=h_1-h_2+h_3, a_2t=-h_1-h_2-h_3, a_4t=-h_1+h_2+h_3, $ and $b_0t=-g_1-g_2+g_3, b_4t=-g_1+g_2+g_3, b_5t=-g_1-g_2-g_3, b_6t=g_1-g_2+g_3$. Here, non-zero phase exists if any two of the variable have same value. 
 \begin{eqnarray}
  \phi_1(x_1=x_2\neq x_3)=4h_1; &\quad& \phi_1(x_2=x_3\neq x_1)=4h_2; \quad \phi_1(x_1=x_3\neq x_2)=-4h_3; \nonumber \\
  \phi_2(x_5'=x_6'\neq x_7')=4g_2 ; &\quad& \phi_2(x_6'=x_7'\neq x_5')=4g_1 ; \quad \phi_2(x_5'=x_7'\neq x_6')=-4g_3 ;   \nonumber
 \end{eqnarray}
 
Circuits simulating $e^{-iH_{1x1}t}$ in Case I, II and III have been shown in Figure \ref{fig:23}d, \ref{fig:23}e and \ref{fig:23}f respectively. Circuits for $e^{-iH_{1x2}t}$ are similar. Circuit for $e^{-iH_{1x}t}$ in each case is obtained by concatenating the corresponding circuits. 

\paragraph{Overlap on 2 qubits : }
In general, the more options that we have for grouping mutually commuting terms the more effective our compilation strategy will be.  While the most natural case to examine is the case where all of the Hamiltonian terms act on disjoint sets of qubits, Hamiltonian terms can commute if they overlap on only two qubits as well. For example, we can have the following sets of commuting Pauli operations
\begin{eqnarray}
 G_{21}&=&\{P_kP_lP_iP_j\id\id, \id\id P_iP_jP_kP_l:i+j\equiv 1\mod 2, k,l=i,j \text{ or } \conj{i},\conj{j}\text{ respectively}\}    \label{eqn:G21}  \\
 G_{20}&=&\{P_kP_lP_iP_j\id\id, \id\id P_iP_jP_kP_l:i+j\equiv 0\mod 2, k,l=i,j \text{ or } \conj{i},\conj{j}\text{ respectively}\}    \label{eqn:G20}
\end{eqnarray}
Without loss of generality, we assume that the leftmost operator acts on qubit $q_1$, next one on $q_2$ and so on - rightmost one acts on qubit $q_6$. We denote a state vector as $\ket{Q_1Q_2Q_3}$ where $Q_1=\ket{q_1q_2}$, $Q_2=\ket{q_3q_4}$ and $Q_3=\ket{q_5q_6}$ are the first, second and third pairs of qubits respectively. We can have the following Hamiltonian terms, expressed as sums of commuting Paulis from the above two sets.
\begin{eqnarray}
  H_{21}&=&a_2YXYX\id\id+a_3YXXY\id\id+a_4XYYX\id\id+a_5XYXY\id\id \nonumber \\
 &&+b_2\id\id YXYX+b_3\id\id XYYX+b_4\id\id YXXY+b_5\id\id XYXY 
 \label{eqn:h21} \\
 H_{20}&=&a_0XXXX\id\id+a_1YYXX\id\id+a_6XXYY\id\id+a_7YYYY\id\id \nonumber \\
 &&+b_0\id\id XXXX+b_6\id\id YYXX+b_1\id\id XXYY+b_7\id\id YYYY 
 \label{eqn:h20} 
\end{eqnarray}

\paragraph{Circuit for simulating $e^{-iH_{21}t}$ : }
As before our simulation strategy involves diagonalizing the Hamiltonian using a Clifford circuit and then build
Let $W_{1}$ be the unitary consisting of the following sequence of gates. The rightmost one is the first to be applied.
\begin{eqnarray}
 W_{1}=CNOT_{(3,1)}CNOT_{(3,4)}H_{(3)}Z_{(3)}CNOT_{(3,1)}CNOT_{(3,5)}H_{(3)}CNOT_{(3,1)}    \nonumber
\end{eqnarray}
The following theorem shows that this is a diagonalizing circuit for the set of Paulis in $G_{21}$.
\begin{theorem}
For each $i,j,k,l\in\intg_2$, such that $P_kP_lP_iP_j\id\id, \id\id P_iP_jP_kP_l\in G_{21}$ we have the following.
 \begin{eqnarray}
  \sqrt{-1}^{i+j+k+l}W_1\left(Z_{(1)}^kZ_{(2)}^lZ_{(3)}Z_{(4)}^j\id\id\right)W_1^{\dagger}=P_kP_lP_iP_j\id\id \nonumber \\
  \text{ and } \sqrt{-1}^{i+j+k+l}W_1\left(\id\id Z_{(3)}Z_{(4)}^jZ_{(5)}^kZ_{(6)}^l\right)W_1^{\dagger}=\id\id P_iP_jP_kP_l \nonumber 
 \end{eqnarray}
 \label{thm:diag61}
\end{theorem}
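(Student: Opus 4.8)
The plan is to prove the operator identity in the Heisenberg (stabilizer) picture, pushing the Pauli through $W_1$ one Clifford gate at a time, exactly as one does for the analogous Theorem~\ref{thm:diagGy}. Since $W_1$ is a product of $\CNOT$, $\had$ and $\Z$ gates, conjugation sends any Pauli to a Pauli up to a $\pm1,\pm i$ phase, so the right-hand side is automatically a Pauli and the only content of the theorem is to identify which one and to accumulate the correct phase. The conjugation rules I would use are the standard ones: $\had$ interchanges $\X\leftrightarrow\Z$ and sends $\Y\mapsto-\Y$; $\Z_{(3)}$ sends $\X_{(3)}\mapsto-\X_{(3)}$ and $\Y_{(3)}\mapsto-\Y_{(3)}$; and $\CNOT_{(c,t)}$ acts by $\X_c\mapsto\X_c\X_t$, $\Z_t\mapsto\Z_c\Z_t$, $\Y_c\mapsto\Y_c\X_t$, $\Y_t\mapsto\Z_c\Y_t$, fixing the remaining generators.

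Concretely, I would initialize with the diagonal string $\Z_{(1)}^k\Z_{(2)}^l\Z_{(3)}\Z_{(4)}^j$ and conjugate it successively by the gates of $W_1$ read from the innermost (first-applied, rightmost) outward. Qubit $3$ is the pivot: the $\had_{(3)}$ gates convert its $\Z$ into an $\X$ so that the fan-out $\CNOT$s controlled on qubit $3$ broadcast it onto the other qubits of the two groups, while the $\Z_{(3)}$ together with the second $\had_{(3)}$ supplies the sign pattern that turns the appropriate single-qubit factors into $\Y$'s. After the last gate I would regroup the accumulated single-qubit Paulis qubit-by-qubit, using $\X\Z=-i\Y$ and $\Z\X=i\Y$ to merge each surviving $\X,\Z$ factor into the correct $P$-operator, and read off $P_kP_lP_iP_j\id\id$.

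The main obstacle is the phase bookkeeping. Three independent sources of $\pm1,\pm i$ must be combined correctly: the $\Y\mapsto-\Y$ signs from the two Hadamards and from $\Z_{(3)}$, the $\pm i$ incurred when each $\X,\Z$ factor is recombined into a $\Y$, and the ordering signs picked up when commuting factors past one another to regroup by qubit. I expect these to collect into exactly $\sqrt{-1}^{\,i+j+k+l}$, i.e.\ one factor of $i$ per $\Y$ appearing in the target (recall $P_1=\Y$, so the number of $\Y$'s is $k+l+i+j$). The bookkeeping must be carried out in the two admissible cases $(k,l)=(i,j)$ and $(k,l)=(\conj{i},\conj{j})$ allowed by $G_{21}$, under the parity constraint $i+j\equiv1\bmod 2$; this constraint is precisely what makes the two overlapping Paulis agree on qubits $3,4$ and what guarantees the conjugation closes on a genuine element of $G_{21}$.

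Finally, the second identity is established by the identical procedure applied to $\id\id\Z_{(3)}\Z_{(4)}^j\Z_{(5)}^k\Z_{(6)}^l$; the near-symmetric structure of $W_1$ under exchanging the two fan-out sides lets the computation run in parallel with the first and yields $\id\id P_iP_jP_kP_l$ with the same phase $\sqrt{-1}^{\,i+j+k+l}$. As an equivalent alternative that more directly matches the style of Theorem~\ref{thm:diagGy}, I could instead exhibit a simultaneous eigenbasis of the commuting Paulis in $G_{21}$ and verify that $W_1$ carries the computational basis (the joint eigenbasis of the diagonal $\Z$-strings) onto it with matching eigenvalues; the conjugation argument above is just the compact operator-level form of that verification.
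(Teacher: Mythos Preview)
Your proposal is correct but takes a genuinely different route from the paper. The paper's proof works in the Schr\"odinger picture: it fixes the simultaneous eigenbasis of $G_{21}$ (Lemma~\ref{app:lem:ebasisG1}), applies $W_1^{\dagger}$ gate-by-gate to a representative eigenvector such as $\ket{v_{10+}}$, tracks the resulting $8$-term superposition down to a two-term computational-basis state, applies the diagonal $Z$-string, and then runs $W_1$ forward to recover $\ket{v_{10+}}$ with the claimed eigenvalue. Your Heisenberg/stabilizer approach conjugates the diagonal Pauli through $W_1$ directly, which is cleaner and more easily automated (no $8$-term state bookkeeping), and makes the role of qubit~$3$ as the pivot transparent. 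The paper's approach, on the other hand, ties the theorem more visibly to the explicit eigenbasis lemma and reuses the same template verbatim across Theorems~\ref{thm:diagGy}, \ref{thm:diag61}, and \ref{thm:diagG5y}. Your closing remark that the two are equivalent is exactly right; either one constitutes a complete proof.
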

The proof is similar to Theorem \ref{thm:diagGy} and has been given in Supplementary Method 2 (Appendix \ref{app:2overlap}). Theorem~\ref{thm:diag61} then gives us the following.
\begin{eqnarray}
 e^{-iH_{21}t} &=&e^{-i(-a_2W_1(Z\id Z\id \id\id)W_1^{\dagger}-a_3W_1(Z\id ZZ\id\id)W_1^{\dagger}-a_4W_1(\id Z Z\id\id\id)W_1^{\dagger}-a_5W_1(\id ZZZ\id\id)W_1^{\dagger})t} \nonumber   \\
 &&\cdot e^{-i(-b_2W_1(\id\id Z\id Z\id)W_1^{\dagger}-b_3W_1(\id\id ZZZ\id)W_1^{\dagger}-b_4W_1(\id\id Z\id \id Z)W_1^{\dagger}-b_5W_1(\id\id ZZ\id Z)W_1^{\dagger})t} \nonumber \\
 &=&W_1e^{ia_2Z\id Z\id \id\id t}e^{a_3Z\id ZZ\id\id t}e^{a_4 \id Z Z\id\id\id t}e^{a_5 \id ZZZ\id\id t} 
  e^{ib_2\id\id Z\id Z\id t}e^{b_3\id\id ZZZ\id t}e^{b_4\id\id Z\id \id Z t}e^{b_5\id\id ZZ\id Z t} W_1^{\dagger}\nonumber 
\end{eqnarray}
We denote the state of the qubits $q_1,\ldots,q_6$ after the application of $W_1$ by the variables $x_1,\ldots,x_6$ respectively. We have the following expression for the overall phase incurred between $W_1$ and $W_1^{\dagger}$.
\begin{eqnarray}
 \phi&=&(-1)^{x_3\oplus x_1}a_2t+(-1)^{x_3\oplus x_2}a_4t+(-1)^{x_3\oplus x_4\oplus x_1}a_3t+(-1)^{x_3\oplus x_4\oplus x_2}a_5t\nonumber \\
&&+(-1)^{x_3\oplus x_5}b_2t+(-1)^{x_3\oplus x_6}b_4t+(-1)^{x_3\oplus x_4\oplus x_5}b_3t+(-1)^{x_3\oplus x_4\oplus x_6}b_5t    \nonumber     
\end{eqnarray}
\begin{figure}
 \centering
 \includegraphics[width=\textwidth]{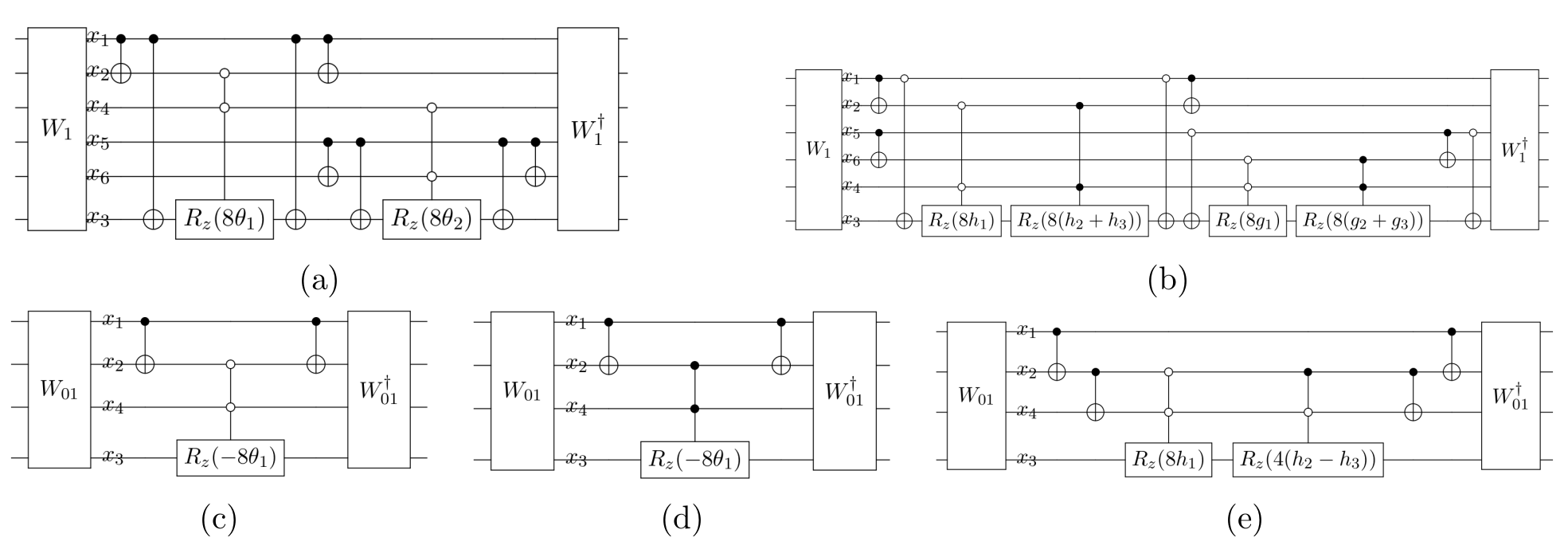}
 \caption{\textbf{Quantum circuit for $e^{-iH_{21}t}$ and $e^{-iH_{201}t}$ : } (a)-(b): Circuit simulating $e^{-iH_{21}t}$ (a) when $a_2t=a_3t=a_4t=a_5t=\theta_1$ and $b_2t=b_3t=b_4t=b_5t=\theta_2$; (b) when the coefficients are as in Equation 18. (c)-(e) Circuit simulating $e^{-iH_{201}t}$ (c) when $a_1t=a_6t=-\theta_1, b_1t=b_6t=-\theta_2, a_0t=a_7t=\theta_1$ and $b_0t=b_7t=\theta_2$; (d) when $a_0t=\ldots=a_7t=\theta_1$ and $b_0t=\ldots=b_7t=\theta_2$; (e) when the coefficients are as in Equation 18.  }
\label{fig:45}
\end{figure}
It is easy to check that $\phi_{\conj{x_3}}=-\phi_{x_3}$. We consider the following cases and it is sufficient to check the phase values when $x_3=0$.
\paragraph{Case I (II) : } We consider the case when $a_2 t=a_3t= a_4t = a_5t=\theta_1$, $b_2t=b_3t=b_4t=b_5t=\theta_2$. There are no $a_1, a_6, b_1, b_6$ in the expression of the Hamiltonian. So, for consistency with the previous and following subsection, we can consider this as either Case I or II.

We can write $\phi=f_1(\theta_1)+f_2(\theta_2)$. We can verify that when $q_1\oplus q_2=q_4=0$ then $\phi=(-1)^{q_1}4\theta_1+f_2(\theta_2)$ and analogously, when $q_5\oplus q_6=q_4=0$ then $\phi=f_1(\theta_1)+(-1)^{q_5}4\theta_2$. For all other values of $q_1,q_2, q_4$, $\phi=f_2(\theta_2)$ and for all other values of $q_5,q_6, q_4$, $\phi=f_1(\theta_1)$.
A quantum circuit for simulating $e^{-iH_{21}t}$ in this case has been shown in Figure \ref{fig:45}a.

\paragraph{Case III : } Let $a_2t=a_5t=-h_1-h_2-h_3$ and $a_3t=a_4t=-h_1+h_2+h_3$, $b_2t=b_5t=-g_1-g_2-g_3$ and $b_3t=b_4t=-g_1+g_2+g_3$. We can write $\phi=f_1(\vect{h})+f_2(\vect{g})$, where $\vect{h}=(h_1,h_2,h_3)$ and $\vect{g}=(g_1,g_2,g_3)$. When $q_1\oplus q_2=q_4=0$, then $\phi=-(-1)^{q_1}4h_1+f_2(\vect{g})$ and when $q_1\oplus q_2=q_4=1$, then $\phi=-(-1)^{q_1}4(h_1+h_3)+f_2(\vect{g})$. For every other values of $q_1,q_2,q_4$, $\phi=f_2(\vect{g})$. Similarly, when $q_5\oplus q_6=q_4=0$, then $\phi=f_1(\vect{h})-(-1)^{q_5}4g_1$ and when $q_5\oplus q_6=q_4=1$, then $\phi=f_1(\vect{h})-(-1)^{q_5}4(g_1+g_3)$. For every other values of $q_5,q_6,q_4$, $\phi=f_1(\vect{h})$. A quantum circuit simulating $e^{-iH_{21}t}$ in this case has been shown in Figure \ref{fig:45}b.

\paragraph{Circuit for simulating $e^{-iH_{20}t}$ : }
An eigenbasis for the Paulis in $G_{20}$ has been shown in Lemma \ref{app:lem:ebasisG0} of Supplementary Method 2 (Appendix \ref{app:2overlap}). But since we have been unable to derive a diagonalizing circuit, so we divide the commuting Paulis into two groups of 4-qubit Paulis as follows. 
\begin{eqnarray}
 G_{201}&=&\{P_kP_lP_iP_j\id\id : i+j,k+l\equiv 0\mod 2.\} \nonumber \\
 G_{202}&=&\{\id\id P_iP_jP_kP_l : i+j,k+l\equiv 0\mod 2.\}    \nonumber
\end{eqnarray}
We get the following two Hamiltonians.
\begin{eqnarray}
 H_{201}&=&a_0XXXX\id\id+a_1YYXX\id\id+a_6XXYY\id\id+a_7YYYY\id\id \nonumber \\
 H_{202}&=&b_0\id\id XXXX+b_6\id\id YYXX+b_1\id\id XXYY+b_7\id\id YYYY  \nonumber
\end{eqnarray}
Using the diagonalizing circuit of \cite{2017_RWSWT} and have the following.
\begin{eqnarray}
e^{-iH_{201}t}&=&W_{01}e^{-ia_0 \id\id Z\id \id\id t}e^{ia_1ZZZ\id \id\id t}e^{ia_6 \id\id ZZ \id\id t}e^{-ia_7 ZZZZ \id\id t}W_{01}^{\dagger} \nonumber\\
 e^{-iH_{202}t}&=&W_{02}e^{-ib_0 \id\id Z\id\id\id t}e^{ib_6\id\id ZZ\id\id t}e^{ib_1 \id\id Z\id ZZ t}e^{-ib_7 \id\id ZZZZ t}W_{02}^{\dagger}\nonumber 
\end{eqnarray}
where $W_{01}=CNOT_{(3,1)}CNOT_{(3,2)}CNOT_{(3,4)}H_{(3)}$ and $W_{02}=CNOT_{(3,4)}CNOT_{(3,5)}CNOT_{(3,6)}H_{(3)}$, where the rightmost gate is the first one to be applied. We denote the state of the qubits $q_1,\ldots,q_4$ and $q_3,\ldots,q_6$ after the application of $W_{01}$ and $W_{02}$ by the variables $x_1,\ldots,x_4$ and $x_3',\ldots,x_6'$ respectively. We have the following expression for the overall phase incurred between $W_{01}$, $W_{01}^{\dagger}$ and between $W_{02}$, $W_{02}^{\dagger}$.
\begin{eqnarray}
 \phi_1&=&-(-1)^{x_3}a_0t+(-1)^{x_3\oplus x_1\oplus x_2}a_1t+(-1)^{x_3\oplus x_4}a_6t-(-1)^{x_3\oplus x_4\oplus x_1\oplus x_2}a_7t     \nonumber \\
  \phi_2&=&-(-1)^{x_3'}b_0t+(-1)^{x_3'\oplus x_4'}b_6t+(-1)^{x_3'\oplus x_5'\oplus x_6'}b_1t-(-1)^{x_3'\oplus x_4'\oplus x_5'\oplus x_6'}b_7t     \nonumber
\end{eqnarray}
In all the cases considered below it is easy to verify that $\phi_{1,\conj{x_3}}=-\phi_{1,x_3}$ and $\phi_{2,\conj{x_3'}}=-\phi_{2,x_3'}$. So it is enough to consider $x_3=x_3'=0$.
\paragraph{Case I : } Assume $a_1 t=a_6t=-\theta_1$, $b_1t=b_6t=-\theta_2$, $a_0t=a_7t=\theta_1$ and $b_0t=b_7t=\theta_2$. If $x_1\oplus x_2=x_4=0$ then $\phi_1=-4\theta_1$, else it is 0. Similar conclusions follow for $\phi_2$ if we replace $x_1,x_2,x_3,x_4$ by $x_6',x_5',x_3',x_4'$ respectively. 

\paragraph{Case II : } Let $a_0 t=a_1t=a_6t=a_7t=\theta_1$, $b_0t=b_1t=b_6t=b_7t=\theta_2$. If $x_1\oplus x_2=x_4=1$ then $\phi_1=-4\theta_1$, else it is 0. Similar conclusions follow for $\phi_2$ if we replace $x_1,x_2,x_3,x_4$ by $x_6',x_5',x_3',x_4'$ respectively.

\paragraph{Case III : } Let $a_0t=a_7t=-h_1-h_2+h_3, a_1t=a_6t=h_1-h_2+h_3$ and $b_0t=b_7t=-g_1-g_2+g_3, b_1t=b_6t=g_1-g_2+g_3$. If $x_1\oplus x_2=x_4=0$ then $\phi_1=4h_1$ and if $x_1\oplus x_2=x_4=1$ then $\phi_2=2(h_2-h_3)$. Similar conclusions follow for $\phi_2$ if we replace $x_1, x_2, x_4$ by $x_6', x_5', x_4'$ respectively.

Circuits simulating $e^{-iH_{201}t}$ in Case I, II and III have been shown in Figure \ref{fig:45}c, \ref{fig:45}d and \ref{fig:45}e respectively. Circuits for $e^{-iH_{202}t}$ are similar. Circuit for $e^{-iH_{20}}$ in each of these cases is obtained by concatenating the corresponding circuits.

\paragraph{Overlap on 3 qubits : }
Now we consider the case when there is overlap on 3 qubits. We can have the following sets of commuting Paulis.
\begin{eqnarray}
 G_{3y}&=&\{YP_iP_jP_k\id, \id P_iP_jP_kY:i+j+k\equiv 1\mod 2\}    \label{eqn:G3y} \\
 G_{3x}&=&\{XP_iP_jP_k\id, \id P_iP_jP_kX:i+j+k\equiv 0\mod 2\}    \label{eqn:G3x} 
\end{eqnarray}
Without loss of generality, we assume that the leftmost operator acts on qubit $q_1$, next one on $q_2$ and so on - rightmost one acts on qubit $q_5$. We denote a state vector as $\ket{Q_1q_2q_3q_4Q_2}$ where $Q_1=\ket{q_1}$, $Q_2=\ket{q_5}$ and $q_1,\ldots,q_5\in\{0,1\}$. We can have the following Hamiltonian terms, expressed as sums of commuting Paulis from the above two sets.
\begin{eqnarray}
 H_{3y}&=&a_1YYXX\id+a_2YXYX\id+a_3YXXY\id+a_7YYYY\id \nonumber \\
 &&+b_3\id YXXY+b_5\id XYXY+b_6\id XXYY+b_7\id YYYY  \label{eqn:h3y} \\
 H_{3x}&=&a_0XXXX\id+a_4XYYX\id+a_5XYXY\id+a_6XXYY\id \nonumber \\
 &&+b_0\id XXXX+b_1\id YYXX+b_2\id YXYX+b_4\id XYYX \label{eqn:h3x} 
\end{eqnarray}

\paragraph{Circuit for simulating $e^{-iH_{3y}t}$ : }
Let $W_{3y}$ be the unitary consisting of the following sequence of gates. The rightmost one is the first to be applied. With a slight abuse of notation we denote $CNOT_{(c,t_1)}CNOT_{(c,t_2)}CNOT_{(c,t_3)}\ldots$ by $CNOT_{(c;t_1,t_2,t_3,\ldots)}$ (multi-target CNOT).
\begin{eqnarray}
 W_{3y}=CNOT_{(2;1,3,4)}H_{(2)}Z_{(2)}CNOT_{(2;1,5)}H_{(2)}CNOT_{(2,1)}    \nonumber
\end{eqnarray}
\begin{theorem}
For each $i,j,k\in\intg_2$, such that $YP_iP_jP_k\id, \id P_iP_jP_kY  \in G_{3y}$ we have the following.
 \begin{eqnarray}
  \sqrt{-1}^{i+j+k+1}W_{3y}\left(Z_{(1)} Z_{(2)}Z_{(3)}^jZ_{(4)}^k\id\right)W_{3y}^{\dagger}=YP_iP_jP_k\id \nonumber \\
  \text{ and } \sqrt{-1}^{i+j+k+1}W_{3y}\left(\id Z_{(2)}Z_{(3)}^jZ_{(4)}^k Z_{(5)}\right)W_{3y}^{\dagger}=\id P_iP_jP_kY \nonumber 
 \end{eqnarray}
 \label{thm:diagG5y}
\end{theorem}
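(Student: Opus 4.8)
The operator $W_{3y}$ is a Clifford unitary, so conjugation by it sends every Pauli to a Pauli up to a phase; the two claimed identities are therefore purely Heisenberg-picture statements, which I would verify by propagating the diagonal $Z$-string through the six gates of $W_{3y}$ in the order they act (rightmost first). The only facts needed are $H Z H = X$, $Z X Z = -X$, the CNOT conjugation rules $Z_t \mapsto Z_c Z_t$ and $X_c \mapsto X_c X_t$ (with $Z_c$ and $X_t$ fixed), and the reordering identities $XZ = -iY$ and $ZX = iY$. This is exactly the computational content underlying the eigenbasis argument used for Theorem~\ref{thm:diagGy} and Theorem~\ref{thm:diag61}; I find it cleaner to carry it out directly in the Heisenberg picture rather than first exhibiting the simultaneous eigenbasis of $G_{3y}$.

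For the first identity I would track $A = Z_{(1)}Z_{(2)}Z_{(3)}^jZ_{(4)}^k$. Applying the first gate $CNOT_{(2,1)}$ sends $Z_{(1)} \mapsto Z_{(2)}Z_{(1)}$, so the two copies of $Z_{(2)}$ cancel and $A$ becomes $Z_{(1)}Z_{(3)}^jZ_{(4)}^k$, which the following $H_{(2)}$ leaves untouched. The gate $CNOT_{(2;1,5)}$ reintroduces a $Z_{(2)}$ (again via $Z_{(1)} \mapsto Z_{(2)}Z_{(1)}$), the $Z_{(2)}$ gate acts trivially on it, and the second $H_{(2)}$ converts it to $X_{(2)}$, leaving $X_{(2)}Z_{(1)}Z_{(3)}^jZ_{(4)}^k$. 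Finally $CNOT_{(2;1,3,4)}$ spreads the control factor as $X_{(2)} \mapsto X_{(2)}X_{(1)}X_{(3)}X_{(4)}$ while each target $Z$ picks up a $Z_{(2)}$, i.e. $Z_{(1)} \mapsto Z_{(2)}Z_{(1)}$, $Z_{(3)}^j \mapsto Z_{(2)}^jZ_{(3)}^j$, and $Z_{(4)}^k \mapsto Z_{(2)}^kZ_{(4)}^k$. Collecting factors qubit by qubit (tensor factors commute with no extra sign) leaves $X_{(1)}Z_{(1)}$ on qubit~1, $X_{(2)}Z_{(2)}^{1+j+k}$ on qubit~2, $X_{(3)}Z_{(3)}^j$ on qubit~3, $X_{(4)}Z_{(4)}^k$ on qubit~4, and identity on qubit~5.

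The remaining work, and the only real obstacle, is the phase bookkeeping. Each single-qubit factor $XZ^m$ equals the corresponding $P_\bullet$ up to $(-i)^{m\bmod 2}$ because $XZ = -iY$; with $i \equiv 1+j+k \pmod 2$ enforced by the definition of $G_{3y}$, the four factors assemble into $Y_{(1)}P_iP_jP_k\id$ times an accumulated phase $(-i)^{1+i+j+k}$. Since $-i = \sqrt{-1}^{\,3}$, multiplying by the claimed prefactor $\sqrt{-1}^{\,i+j+k+1}$ collapses the total phase to $\sqrt{-1}^{\,4(1+i+j+k)} = 1$, giving the stated identity. The second identity is the same propagation applied to $\id Z_{(2)}Z_{(3)}^jZ_{(4)}^kZ_{(5)}$, but with two differences in sign accounting: the $Z_{(2)}$ gate now anticommutes with an $X_{(2)}$ present at that stage and contributes an extra global $-1$, and the overlap qubits emerge in the order $Z^mX$ rather than $XZ^m$, so their reordering phases are $\sqrt{-1}$ (from $ZX = iY$) instead of $\sqrt{-1}^{\,3}$. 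Tracking these carefully leaves a residual factor $(-1)^{i+j+k+1}$, which vanishes precisely because $i+j+k$ is odd, i.e. again the defining parity constraint of $G_{3y}$. Thus the parity condition is not incidental: it is exactly what guarantees that both conjugates are honest Paulis (phase $\pm1$) and that the single prefactor $\sqrt{-1}^{\,i+j+k+1}$ works for both. Checking that all the CNOT-propagation and reordering signs combine into these two clean expressions is the step demanding the most care; everything else is mechanical.
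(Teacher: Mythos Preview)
Your proof is correct, but it takes a genuinely different route from the paper's own proof.

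The paper works in the Schr\"odinger picture. It first writes down an explicit simultaneous eigenbasis $\{\ket{v_{0y}},\ket{v_{1y}},\ket{v_{2y}},\ket{v_{3y}}\}$ for the commuting Paulis in $G_{3y}$ (the analogue of the eigenbasis lemmas used for Theorems~\ref{thm:diagGy} and~\ref{thm:diag61}), and then verifies the operator identity by applying $W_{3y}^{\dagger}$ gate by gate to a representative eigenvector, applying the diagonal $Z$-string to pick up a phase, and noting that $W_{3y}$ inverts the first step. Equality of operators then follows from agreement on a basis. You instead work entirely in the Heisenberg picture, propagating the $Z$-string through the six Clifford gates of $W_{3y}$ using the standard $\text{CNOT}$, $H$, and $Z$ conjugation rules and collecting phases at the end.

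Your approach is more self-contained: it does not require first establishing the eigenbasis lemma, and the role of the parity constraint $i+j+k\equiv 1\pmod 2$ becomes completely transparent---it is precisely what makes the qubit-$2$ factor equal $P_i$ and what kills the residual $(-1)^{i+j+k+1}$ in the second identity. The paper's approach, on the other hand, exhibits the simultaneous eigenstates concretely, which is useful for intuition and consistent with how the surrounding theorems are proved. Both arguments are valid; yours is arguably cleaner and would scale more easily to other diagonalizing circuits, while the paper's keeps a uniform style with the parallel results for $G_{1y}$ and $G_{21}$.
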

The proof is similar to Theorem \ref{thm:diagGy} and has been shown in Supplementary Method 3 (Appendix \ref{app:3overlap}). Thus we have the following.
\begin{eqnarray}
 e^{-iH_{3y}t} &=&e^{-i(-a_1W_{3y}(Z Z\id \id\id)W_{3y}^{\dagger}-a_2W_{3y}(Z ZZ\id\id)W_{3y}^{\dagger}-a_3W_{3y}(Z Z\id Z\id)W_{3y}^{\dagger}+a_7W_{3y}(Z ZZZ\id)W_{3y}^{\dagger})t} \nonumber   \\
 &&\cdot e^{-i(-b_3W_{3y}(\id Z\id \id Z)W_{3y}^{\dagger}-b_5W_{3y}(\id ZZ\id Z)W_{3y}^{\dagger}-b_6W_{3y}(\id Z\id ZZ)W_{3y}^{\dagger}+b_7W_{3y}(\id Z ZZZ)W_{3y}^{\dagger})t} \nonumber \\
 &=&W_{3y}e^{ia_1ZZ\id \id\id t}e^{ia_2ZZZ\id\id t}e^{ia_3ZZ\id Z \id t}e^{-ia_7ZZZZ\id t} 
  e^{ib_3\id Z\id \id Z t}e^{ib_5\id ZZ\id Z t}e^{ib_6\id Z\id ZZ t}e^{-ib_7\id ZZZZ t} W_{3y}^{\dagger}\nonumber 
\end{eqnarray}
We denote the state of the qubits $q_1,\ldots,q_5$ after the application of $W_{3y}$ by the variables $x_1,\ldots,x_5$ respectively. We have the following expression for the overall phase incurred between $W_{3y}$ and $W_{3y}^{\dagger}$.
\begin{eqnarray}
 \phi&=&(-1)^{x_2\oplus x_1}a_1+(-1)^{x_2\oplus x_1\oplus x_3}a_2+(-1)^{x_2\oplus x_1\oplus x_4}a_3-(-1)^{x_2\oplus x_1\oplus x_3\oplus x_4}a_7  \nonumber\\
 &&+(-1)^{x_2\oplus x_5}b_3+(-1)^{x_2\oplus x_5\oplus x_3}b_5+(-1)^{x_2\oplus x_5\oplus x_4}b_6-(-1)^{x_2\oplus x_5\oplus x_3\oplus x_4}b_7  \nonumber
\end{eqnarray}

\begin{figure}
 \centering
 \includegraphics[width=\textwidth]{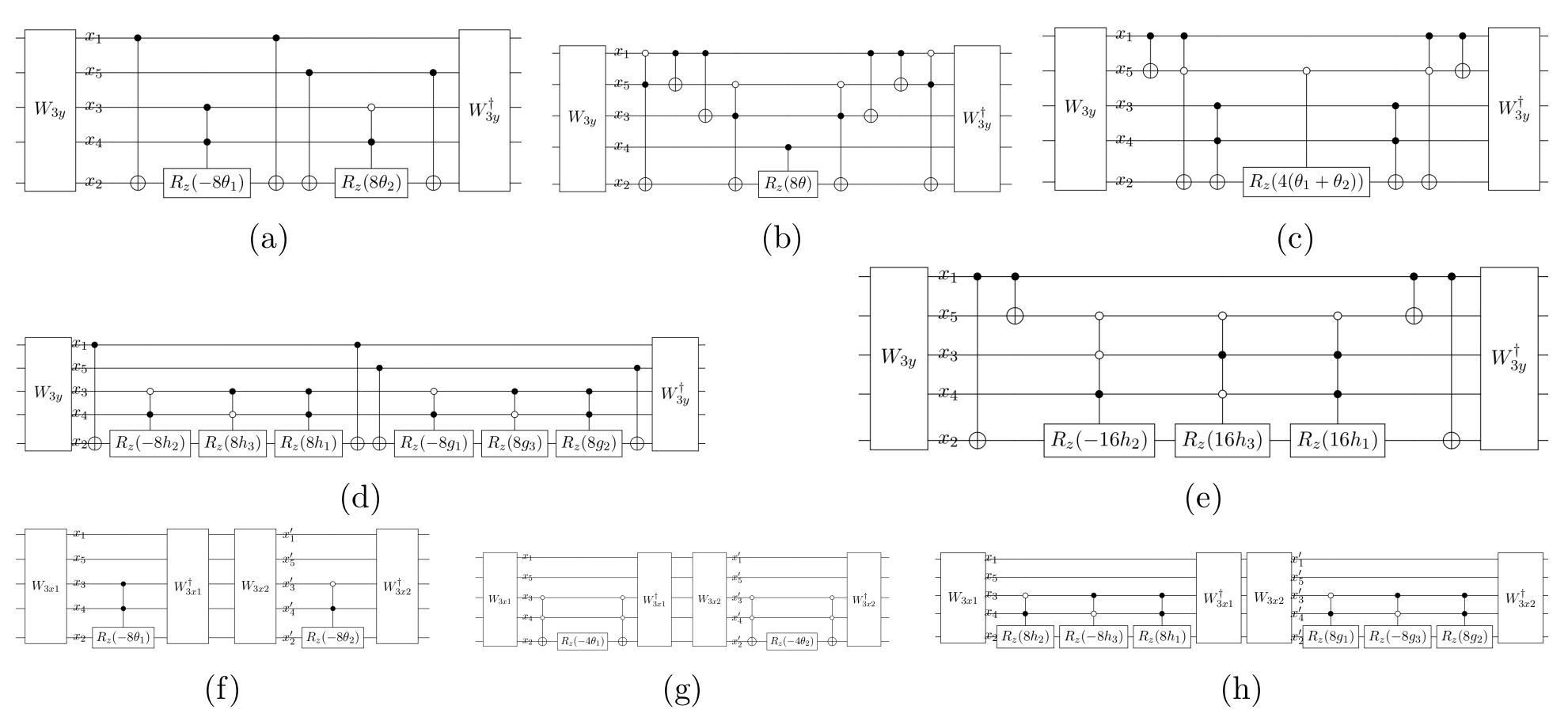}
 \caption{\textbf{Quantum circuit for $e^{-iH_{3y}t}$ and $e^{-iH_{3x1}t}$ : } (a)-(e): Circuit simulating $e^{-iH_{3y}t}$ (a) when $a_1t=-\theta_1, b_6t=-\theta_2$, $a_2t=a_3t=a_7t=\theta_1$ and $b_3t=b_5t=b_7t=\theta_2$; (b) with the coefficient values in (a), except here $\theta_1=\theta_2$; (c) when $a_1t=a_2t=a_3t=a_7t=\theta_1$ and $b_3t=b_5t=b_6t=b_7t=\theta_2$; (d) when the coefficients are as in Equation 18; (e) Circuit with the coefficients in (d), except in this case $h_2=g_1, h_3=g_3, h_1=g_2$. (f)-(h): Circuit simulating $e^{-iH_{3x1}t}$ (f) when $a_6t=-\theta_1, b_1t=-\theta_2, a_0t=a_5t=a_4t=\theta_1$ and $b_0t=b_2t=b_4t=\theta_2$; (g) when $a_0t=\ldots=a_7t=\theta_1$ and $b_0t=\ldots=b_7t=\theta_2$; (h) when the coefficients are as in Equation 18. }
\label{fig:67}
\end{figure}
It is easy to verify that $\phi_{\conj{x_2}}=-\phi_{x_2}$. We consider the following cases and it is sufficient to check the phase values when $x_2=0$.
\paragraph{Case I : } We consider the case when $a_1 t=-\theta_1$, $b_6t=-\theta_2$, $a_2t=a_3t=a_7t=\theta_1$ and $b_3t=b_5t=b_7t=\theta_2$. We can write $\phi=f_1(\theta_1)+f_2(\theta_2)$. It can be verified that $\phi_{\conj{x_1},x_5}=-f(\theta_1)+f(\theta_2)$ and $\phi_{x_1,\conj{x_5}}=f(\theta_1)-f(\theta_2)$. So we concentrate on $x_1=x_5=0$. If $x_3=x_4=1$ then $\phi=-4\theta_1$ and if $x_3=0, x_4=1$ then $\phi=4\theta_2$. A quantum circuit simulating $e^{-iH_{3y}t}$ has been shown in Figure \ref{fig:67}a. If $\theta_1=\theta_2$ then we can have a further reduction of controlled rotation gates, as shown in Figure \ref{fig:67}b.

\paragraph{Case II : } Next we consider the case when $a_1 t=a_2t=a_3t=a_7t=\theta_1$, $b_6t=b_3t=b_5t=b_7t=\theta_2$. In this case $\phi=0$ whenever $x_1\oplus x_5=1$. Else, as before $\phi_{\conj{x_1},x_5}=-f(\theta_1)+f(\theta_2)$ and $\phi_{x_1,\conj{x_5}}=f(\theta_1)-f(\theta_2)$. So it is enough to consider $x_1=x_5=0$. When $x_3=x_4=1$ then $\phi=-2(\theta_1+\theta_2)$, else $\phi=2(\theta_1+\theta_2)$. Thus we can have a quantum circuit simulating $e^{-iH_{3yt}}$, as shown in Figure \ref{fig:67}c.

\paragraph{Case III : } Now we consider the case when $a_1t=h_1-h_2+h_3, a_2t=-h_1-h_2-h_3, a_3t=-h_1+h_2+h_3, a_7t=-h_1-h_2+h_3$ and $b_3t=-g_1+g_2+g_3, b_5t=-g_1-g_2-g_3, b_6t=g_1-g_2+g_3, b_7t=-g_1-g_2+g_3$. If we denote $\vect{h}=(h_1,h_2,h_3)$ and $\vect{g}=(g_1,g_2,g_3)$, then we can write $\phi=f(\vect{h})+f(\vect{g})$. Here too, $\phi_{\conj{x_1},x_5}=-f(\vect{h})+f(\vect{g})$ and $\phi_{x_1,\conj{x_5}}=f(\vect{h})-f(\vect{g})$. So let us consider $x_1=x_5=0$. Then we have the following phase values.
 \begin{eqnarray}
  \phi_{x_3=x_4=0}=0, \quad  \phi_{x_3=0,x_4=1}=-4(h_2+g_1), \quad \phi_{x_3=1,x_4=0}=4(h_3+g_3),
  \quad 
  \phi_{x_3=x_4=1}=4(h_1+g_2) \nonumber
 \end{eqnarray}
A circuit simulating $e^{-iH_{3yt}}$ in this case has been shown in Figure \ref{fig:67}d. If $h_2=g_1, h_3=g_3, h_1=g_2$ then we can have a simpler circuit, as shown in Figure \ref{fig:67}e.

\paragraph{Circuit for simulating $e^{-iH_{3x}t}$ : }
The diagonalizing transformation for the Pauli operators in $G_{3x}$ is shown in Lemma \ref{app:lem:ebasisG5x} of Supplementary Method 3 (Appendix \ref{app:3overlap}). Since we have been unable to find a diagonalizing circuit, so we divide the commuting Paulis into two groups of 4-qubit Paulis,
\begin{eqnarray}
 G_{3x1}&=&\{XP_iP_jP_k\id : i+j+k\equiv 0\mod 2.\} \nonumber \\
 G_{3x2}&=&\{\id P_iP_jP_kX : i+j+k\equiv 0\mod 2.\}    \nonumber
\end{eqnarray}
and have the following two Hamiltonians.
\begin{eqnarray}
 H_{3x1}&=&a_0XXXX\id+a_4XYYX\id+a_5XYXY\id+a_6XXYY\id \nonumber \\
 H_{3x2}&=&b_0\id XXXX+b_1\id YYXX+b_2\id YXYX+b_4\id XYYX \nonumber
\end{eqnarray}
Using the diagonalizing circuit of \cite{2017_RWSWT} and have the following.
\begin{eqnarray}
e^{-iH_{3x1}t}&=&W_{3x1}e^{-ia_0 \id Z\id\id \id t}e^{ia_6\id ZZZ \id t}e^{ia_5 \id Z\id Z \id t}e^{ia_4 \id ZZ\id \id t}W_{3x1}^{\dagger} \nonumber\\
 e^{-iH_{3x2}t}&=&W_{3x2}e^{-ib_0 \id Z\id\id\id t}e^{ib_1 \id ZZ\id\id t}e^{ib_2 \id Z\id Z\id t}e^{ib_4 \id ZZZ\id t}W_{3x2}^{\dagger}\nonumber 
\end{eqnarray}
where $W_{3x1}=CNOT_{(2;1,3,4)}H_{(2)}$ and $W_{3x2}=CNOT_{(2;3,4,5)}H_{(2)}$, where the rightmost gate is the first one to be applied.
We denote the state of the qubits $q_1,\ldots,q_4$ and $q_2,\ldots,q_5$ after the application of $W_{3x1}$ and $W_{3x2}$ by the variables $x_1,\ldots,x_4$ and $x_2',\ldots,x_5'$ respectively. We have the following expression for the overall phase incurred between $W_{3x1}$, $W_{3x1}^{\dagger}$ and between $W_{3x2}$, $W_{3x2}^{\dagger}$.
\begin{eqnarray}
 \phi_1&=&-(-1)^{x_2}a_0t+(-1)^{x_2\oplus x_3\oplus x_4}a_6t+(-1)^{x_2\oplus x_4}a_5t+(-1)^{x_2\oplus x_3}a_4t     \nonumber \\
  \phi_2&=&-(-1)^{x_2'}b_0t+(-1)^{x_2'\oplus x_3'}b_1t+(-1)^{x_2'\oplus x_4'}b_2t+(-1)^{x_2'\oplus x_3'\oplus x_4'}b_4t     \nonumber
\end{eqnarray}
In all the cases considered below it is easy to verify that $\phi_{1,\conj{x_2}}=-\phi_{1,x_2}$ and $\phi_{2,\conj{x_2'}}=-\phi_{2,x_2'}$. So it is enough to consider $x_2=x_2'=0$.

\paragraph{Case I : } We consider the case when $a_6 t=-\theta_1$, $b_1t=-\theta_2$, $a_0t=a_5t=a_4t=\theta_1$ and $b_0t=b_4t=b_2t=\theta_2$. 
$\phi_1=-4\theta_1$ when $x_3=x_4=1$, else it is 0. And $\phi_2=-4\theta_2$ when $x_3'=0, x_4'=1$, else it is 0. 

\paragraph{Case II : } Let $a_0 t=a_6t=a_5t=a_4t=\theta_1$, $b_0t=b_4t=b_2t=b_1t=\theta_2$. $\phi_1=2\theta_1$ when $x_3=x_4=0$, else $\phi_1=-2\theta_1$. Similarly for $\phi_2$.

\paragraph{Case III : } Assume $a_0t=-h_1-h_2+h_3, a_6t=h_1-h_2+h_3, a_5t=-h_1-h_2-h_3, a_4t=-h_1+h_2+h_3, $ and $b_0t=-g_1-g_2+g_3, b_4t=-g_1+g_2+g_3, b_2t=-g_1-g_2-g_3, b_1t=g_1-g_2+g_3$. We have the following phases.
 \begin{eqnarray}
  \phi_1(x_3=0,x_4=1)=4h_2; &\quad& \phi_1(x_3=1,x_4=0)=-4h_3;\quad \phi_1(x_3=x_4=1)=4h_1;  \nonumber \\
  \phi_2(x_3'=0,x_4'=1)=4g_1, &\quad& \phi_2(x_3'=1,x_4'=0)=-4g_3;\quad \phi_2(x_3'=x_4'=1)=4g_2;    \nonumber
 \end{eqnarray}

Circuits simulating $e^{-iH_{3x1}t}$ in Case I, II and III have been shown in Figure \ref{fig:67}f, \ref{fig:67}g and \ref{fig:67}h respectively. Circuits for $e^{-iH_{3x2}t}$ are similar. Circuit for $e^{-iH_{3x}t}$ in each case is obtained by concatenating the corresponding circuits. 

\subsection{Circuit for arbitrary exponentiated Hamiltonians}
\label{sec:compileArbit}
Our previous discussion focuses on the case of fermionic simulation within a Jordan-Wigner representation using Hamiltonian terms that are fermionically swapped to be adjacent to each other.  While these simulation circuits are among the most important for applications in chemistry, it does not necessarily represent all cases of physical interest let alone chemistry.
Here we address this by discussing ways to synthesize circuits for arbitrary exponentiated Hamiltonians in $\mathbb{C}^{2^n \times 2^n}$, expressible as sum of Pauli operators, with an aim to reduce the number of non-Clifford resources. For reasons discussed previously, it is enough to consider a Hamiltonian $H$ expressed as sum of commuting Pauli operators.
\begin{eqnarray}
 H=\sum_{i}\alpha_iP_i\qquad P_i\in\pauli_n \nonumber
\end{eqnarray}
In most cases one synthesizes circuit for each $e^{-i\alpha_iP_it}$ using a number of CNOT and one $R_z$ gate. Thus the number of $R_z$ gates required is equal to the number of summands. Here we describe procedure to synthesize circuit for $e^{-iHt}$ i.e. considering multiple summands or Pauli operators.

We diagonalize $H$, for example, by using the algorithms in \cite{2020_BT}. In the previous section we have constructed explicit eigenbases for the diagonalization of some specific Hamiltonians. Then we get the following.
\begin{eqnarray}
 H=W\left(\sum_i\alpha_i'Q_i\right)W^{\dagger} \nonumber
\end{eqnarray}
Here $Q_i=\otimes_{j=1}^nQ_{ij}$, a tensor product of Z and $\id$ i.e. $Q_{ij}\in\{Z,\id\}$. $W$ is a diagonalizing Clifford circuit. Thus we get the following.
\begin{eqnarray}
 e^{-iHt}=We^{\sum_i\alpha_i'Q_i}W^{\dagger}
\end{eqnarray}

\begin{lemma}
 Let $\mathcal{H}=\sum_i\alpha_i'Q_i$, such that $Q_i=\otimes_{j=1}^nQ_{ij}$, where $Q_{ij}\in\{Z,\id\}$. With each $Q_i$ we associate an $n$-length vector $\vect{y}_i=(y_{i1},\ldots,y_{in})\in\{0,1\}^n$ such that $(\vect{y}_i)_j=y_{ij}=1$ if $Q_{ij}=Z$, else $y_{ij}=0$. Let $x_1,\ldots,x_n\in\{0,1\}$ and $\ket{0}=[1,0]^T$, $\ket{1}=[0,1]^T$. The eigenvectors of $\mathcal{H}$ are of the form $\ket{v}=\bigotimes_{j=1}^n\ket{x_j}$ and the corresponding eigenvalue is
 \begin{eqnarray}
  \phi_v=\sum_i\alpha_i' (-1)^{\oplus_{j=1}^n y_{ij}x_j}  \label{eqn:phi_v}
 \end{eqnarray}
 \label{lem:arbitSpectrum}
\end{lemma}
\begin{proof}
 The summands in $\mathcal{H}$ are mutually commuting and so they have a common eigenbasis. Let us first consider $Q_i$. Since $Q_{ij}\ket{x}=\ket{x}$ if $Q_{ij}=\id$ and  $Q_{ij}\ket{x}=(-1)^x\ket{x}$ if $Q_{ij}=Z$, where $x\in\{0,1\}$, so we have the following. 
 \begin{eqnarray}
  Q_i\ket{v}&=&\left(\bigotimes_{j=1}^nQ_{ij}\right)\left(\bigotimes_{j=1}^n\ket{x_j}\right)=\bigotimes_{j=1}^nQ_{ij}\ket{x_j}=(-1)^{\oplus_{j=1}^ny_{ij}x_j}  \nonumber
 \end{eqnarray}
This implies that
\begin{eqnarray}
 \mathcal{H}\ket{v}=\left(\sum_i\alpha_i'Q_i\right)\ket{v}=\sum_i\alpha_i'(-1)^{\oplus_{j=1}^ny_{ij}x_j}\ket{v}.  \nonumber
\end{eqnarray}
\end{proof}
We can also interpret $\phi$ as the
overall phase incurred between $W$ and $W^{\dagger}$.
For given values of $\alpha_i'$, $\phi$ is an $n$-variable Boolean function where $x_j$ are the Boolean variables. We can evaluate a truth table and get all the $2^n$ values of $\phi$ for different values of $(x_1,\ldots,x_n)\in\{0,1\}^n$. For each distinct non-zero absolute phase value $|\theta|$ (ignoring sign), we can have a sub-circuit $\mathcal{C}_{|\theta|}$ that has only one controlled rotation $cR_z(2\theta)$ gate. The complete circuit can be obtained by combining these different sub-circuits (one for each $|\theta|\neq 0$), in between the diagonalizing Clifford circuits $W, W^{\dagger}$. The ordering of the sub-circuits do not matter.

\begin{figure}
 \centering
 \includegraphics[width=0.4\textwidth]{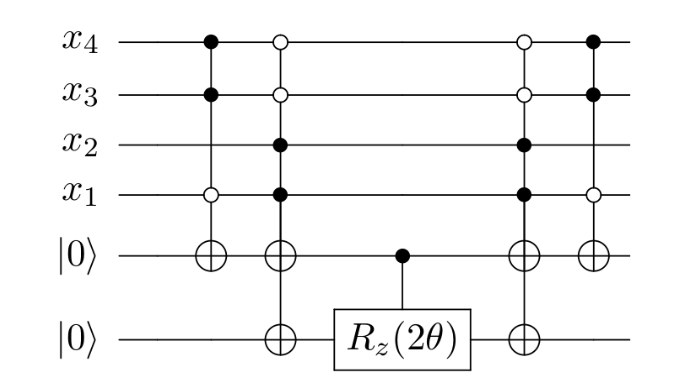}
  \caption{\textbf{Circuit $\mathcal{C}_{|\theta|}$ : } The circuit when $\mathcal{M}_{\theta}=\{(1,1,0,0),(1,1,1,0)\}$ and $\mathcal{M}_{-\theta}=\{(0,0,1,1)\}$.}
 \label{fig:08}
\end{figure}
Now we discuss how to synthesize sub-circuit $\mathcal{C}_{|\theta|}$, for one such distinct absolute value of $\phi$. Let $\mathcal{M}_{\theta}$ be the set of binary values for variables $x_1, x_2, \ldots, x_n$, such that $\phi$ computes to $\theta$ in Equation \ref{eqn:phi_v}.
\begin{eqnarray}
 \mathcal{M}_{\theta}=\{(x_1,\ldots,x_n)\in\{0,1\}^n : \phi_{x_1,\ldots,x_n}=\theta \} 
\end{eqnarray}
Analogously we can define $\mathcal{M}_{-\theta}$. 
 We can also associate $\mathcal{M}_{\theta}$ and $\mathcal{M}_{-\theta}$ with $S_{\theta}$ and $S_{-\theta}$, the sets of eigenvectors with eigenvalues $\theta$ and $-\theta$ respectively, as obtained from Lemma \ref{lem:arbitSpectrum}. We define the following operators, which acts on the input vector space and the space of two ancillae - $c$ and $r$, the latter being initialized to $0$.
\begin{eqnarray}
 V_{\theta}&=&\sum_{\ket{v}\in S_{\theta}}\ket{v,c\oplus 1, 0}\bra{v,c,0}+\sum_{\ket{w}\notin S_{\theta}}\ket{w,c,0}\bra{w,c,0} \nonumber \\
 V_{-\theta}&=&\sum_{\ket{v}\in S_{-\theta}}\ket{v,c\oplus 1, 1}\bra{v,c,0}+\sum_{\ket{w}\notin S_{-\theta}}\ket{w,c,0}\bra{w,c,0} \label{eqn:op}
\end{eqnarray}
The circuit $\mathcal{C}_{|\theta|}=V_{\theta}V_{-\theta}(cR_Z(2\theta))_{cr}V_{-\theta}^{\dagger}V_{\theta}^{\dagger}$. If the input vector is in $S_{\theta}$ or $S_{-\theta}$ then both these operators flip a control ancilla qubit ($c$). Additionally, if the vector is in $S_{-\theta}$ then the second ancilla $r$ is flipped. We apply a $cR_z(2\theta)$ gate on $r$, controlled on $c$. Thus if the input vector is in $S_{-\theta}$ then we actually apply $cR_z(-2\theta)$. The ancillae $c,r$ can be controlled by multi-controlled X gates, that can be further decomposed in terms of Toffoli and CNOT gates \cite{1995_BBCetal, 2022_SP}.  
For example, let $\mathcal{M}_{\theta}=\{(0,0,1,1),(0,1,1,1)\}$ and $\mathcal{M}_{-\theta}=\{(1,1,0,0)\}$. The two Boolean min-terms of $\mathcal{M}_{\theta}$ can be compressed to have a single term because when $x_1=0$, $x_3=x_4=1$ then $\phi=\theta$, irrespective of the value of $x_2$. We call it the 'don't care condition' for $x_2$. So, equivalently we can write $\mathcal{M}_{\theta}=\{(0,*,1,1)\}$, where $*$ denotes the don't-care condition. In general, algorithms like Karnaugh map \cite{1953_K}, ESPRESSO \cite{1982_BHHetal} can be used to get compact set of Boolean min-terms. A circuit $\mathcal{C}_{|\theta|}$ has been shown in Figure \ref{fig:08}. 

Hence, due to the invariance of the point spectrum of unitarily equivalent operators we have the following. 
\begin{lemma}
 Let $H=\sum_i\alpha_iP_i$, where $P_i$ are mutually commuting $n$-qubit Pauli operators. We can implement a circuit for $e^{-iHt}$ with at most $m$ (controlled)-rotations, where $m$ is the number of distinct non-zero eigenvalues
 (ignoring sign) of $H$.
 \label{lem:rot}
\end{lemma}


\paragraph{Illustration - Quantum Heisenberg and Quantum Ising model :}
We consider the problem of designing quantum circuits for simulating the quantum Heisenberg and Ising model with Hamiltonians $H_H$ and $H_I$ respectively. The Heisenberg Hamiltonian is widely used to study magnetic systems, where the magnetic spins are treated quantum mechanically \cite{1999_F, 2006_MI, 2008_S, 2021_GPAG, 2021_PS}.
Let $G=(E,V)$ be the underlying graph with the vertex and edge set being $V$ and $E$ respectively. 
\begin{eqnarray}
 H_H&=&\sum_{(i,j)\in E}\left(J_xX_{(i)}X_{(j)}+J_yY_{(i)}Y_{(j)}+J_zZ_{(i)}Z_{(j)}\right)+\sum_{i\in V}d_hZ_{(i)}   \label{eqn:Hh} \\
 H_I&=&\sum_{(i,j)\in E}J_zZ_{(i)}Z_{(j)}+\sum_{i\in V}d_h'Z_{(i)}   \label{eqn:Hi}
\end{eqnarray}
In the above $J_x,J_y,J_z$ are coupling parameters, denoting the exchange interaction between nearest neighbor spins along the X,Y,Z direction respectively. $d_h, d_h'$ is the time amplitude of the external magnetic field along the Z-direction. One set of commuting Paulis are $\{X_{(i)}X_{(j)}:(i,j)\in E\}$, $\{Y_{(i)}Y_{(j)}:(i,j)\in E\}$, $\{Z_{(i)}Z_{(j)}:(i,j)\in E\}$ 
and $\{Z_{(i)}:i\in V\}$.

Let us first consider the set $\{Z_{(i)}:i\in V\}$. Following the previous discussions and Lemma \ref{lem:arbitSpectrum}, the overall phase incurred or the eigenvalues are as follows.
\begin{eqnarray}
 \phi'=d_h\sum_{i\in V}(-1)^{x_i}\qquad x_i\in\{0,1\}
\end{eqnarray}
For $\vect{x}\in\{0,1\}^{|V|}$, one particular assignment of values to the Boolean variables, let $T_0=\{i\in V:x_i=0\}$ and $T_1=\{i\in V:x_i=1\}$. So $|T_0|+|T_1|=|V|$ and 
\begin{eqnarray}
 \phi_{\vect{x}}'=d_h\left(|T_0|-|T_1|\right)=d_h\left(|V|-2|T_1|\right).
\end{eqnarray}
So the number of distinct non-zero eigenvalues or absolute values of $\phi'$ can be $\lceil |V|/2\rceil$. Implementing each $e^{-id_hZ_{(i)}t}$ would require $|V|$ rotation gates. Thus, using Lemma \ref{lem:rot}, we have about $50\%$ reduction in the rotation gate cost. 

Now, let us consider the other commuting sets.
Since $H^{\dagger}XH=Z$ and $(HSX)^{\dagger}Y(HSX)=Z$, so each of the above sets can be diagonalized and we can focus on the problem of simulating a quantum circuit for the Hamiltonian : $H=J\sum_{(i,j)\in E}Z_{(i)}Z_{(j)}$, where $J$ is a constant. Our aim is to derive an upper bound on the number of controlled rotations required to simulate $e^{-iHt}$. Following the previous discussions, the overall phase incurred between the diagonalizing Cliffords $W, W^{\dagger}$ is as follows (Lemma \ref{lem:arbitSpectrum}).
\begin{eqnarray}
 \phi&=& J\sum_{(i,j)\in E}(-1)^{x_i\oplus x_j}
\end{eqnarray}
where $x_i, x_j\in \{0,1\}$ are variables denoting the state of the qubits after application of $W$. The quantum circuit has $|V|$ qubits, corresponding to each vertex of $G$. Let $\vect{x}=\in \{0,1\}^{|V|}$ denote one particular assignment of values to the variables $x_1,\ldots, x_{|V|}$. $S_0=\{(i,j)\in E : x_i=x_j=1\}$ and $S_1=\{(i,j)\in E : x_i \text{ or } x_j\text{ is }1\}$. If $S'=\{(i,j)\in E : x_i=x_j=0\}$, then $|S'|=|E|-|S_0|-|S_1|$. Let $\phi_{\vect{x}}$ be the value of the phase for this particular assignment. 
\begin{eqnarray}
 \phi_{\vect{x}}&=&J\left(|S'|+|S_0|-|S_1|\right)=J\left(|E|-|S_0|-|S_1|+|S_0|-|S_1|\right)=J\left(|E|-2|S_1|\right)
\end{eqnarray}
Let $V_{1\vect{x}}=\{i\in V:x_i=1\}$, $V_{0\vect{x}}=\{i\in V:x_i=0\}$ and $\mathcal{N}(k)$ be the set of neighbouring vertices of $k$ in G. Then
\begin{eqnarray}
 |S_1|=\sum_{k\in V_{1\vect{x}}}\left|\left\{\mathcal{N}(k)\setminus V_{1\vect{x}}\right\}\right|
\end{eqnarray}

Now for any assignment $|S_1|$ can vary from $1,\ldots, |E|$. So the number of distinct values of $\phi$ is at most $\lceil|E|/2\rceil$. And hence we need at most $\lceil|E|/2\rceil$ controlled-$R_z$ gates in the circuit simulating $e^{-iHt}$. Had we simulated each $e^{-iJZ_{(i)}Z_{(j)}t}$, we would have required $|E|$ $R_z$ gates. So we can achieve about $50\%$ reduction in the rotation gate cost under the assumption that controlled-$R_z$ costs the same to implement as a single $R_z$ gate.
\begin{figure}
 \centering
 \includegraphics[width=\textwidth]{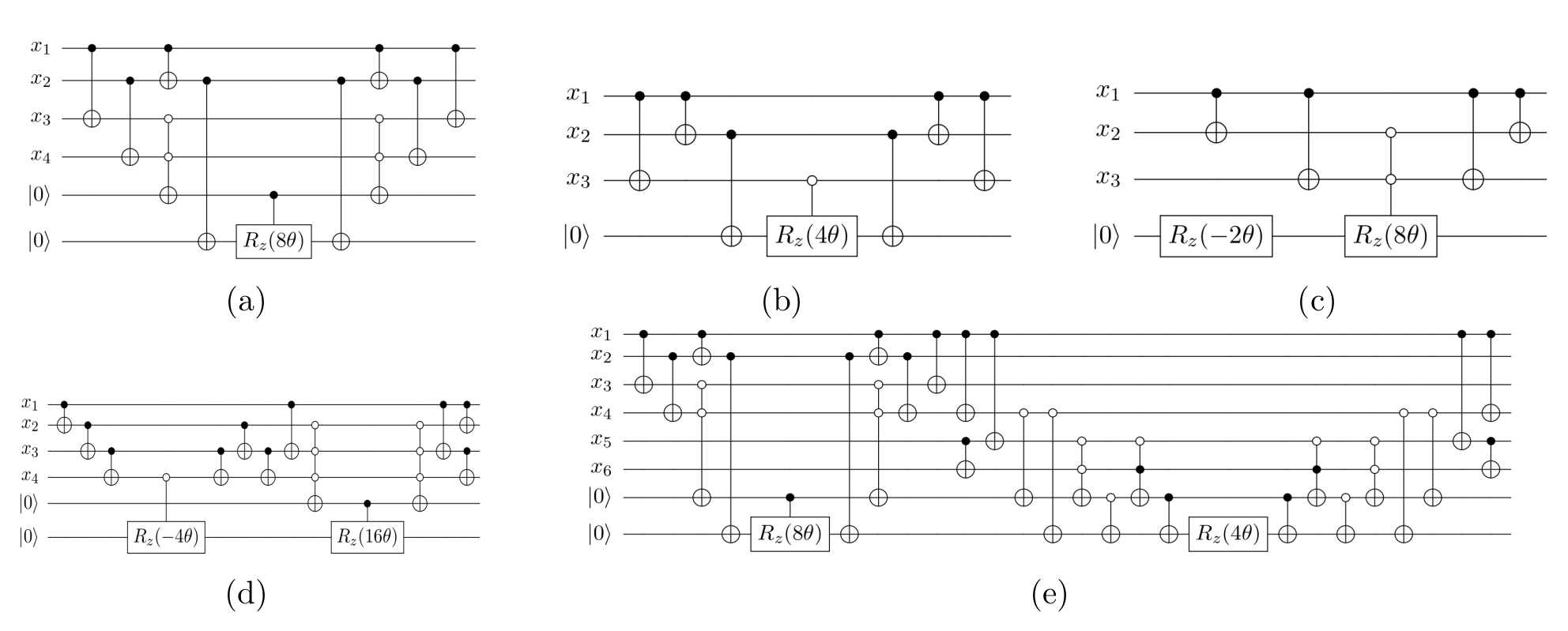}
 \caption{\textbf{Quantum circuit simulating $e^{it\theta\sum_{(i,j)\in E}Z_{(i)}Z_{(j)}}$ : } When the underlying graph $G=(V,E)$ is (a) 4-point circle, (b) 3-point line, (c) triangle, (d) 4-point complete graph, (e) 6-point circle.}
\label{fig:09}
\end{figure}
\paragraph{G is a cycle : } This is basically the traslationally invariant 1-D spin chain. Let $G_{V_{1\vect{x}}}$ be the subgraph induced by $V_{1\vect{x}}$, which is a union of paths. For each path $p$, let $S_{1p}=\bigcup_{k\in p}\{\mathcal{N}(k)\setminus V_{1\vect{x}}\}\subseteq S_1$ be the set of vertices in this path.
Each of the terminal vertices has one neighbour in $V\setminus V_{1\vect{x}}$. So $|S_{1p}|= 2$. Thus if $\mathcal{P}$ is the set of all such paths in $G_{V_{1\vect{x}}}$, then
\begin{eqnarray}
 \phi_{\vect{x}}&=&J\left(|E|-2\sum_{p\in\mathcal{P}}|S_{1p}|\right)=J\left(|E|-4|\mathcal{P}|\right)
\end{eqnarray}
Now $|\mathcal{P}|$ can vary from $1,\ldots,\lceil|E|/4\rceil$ to give $\lceil|E|/4\rceil$ distinct values of $|\phi|$. This implies a quantum circuit synthesizing $e^{-iHt}$ will require at most $\lceil|E|/4\rceil$ $cR_z$ gates. This is about $75\%$ reduction in the cost of rotation gates, compared to synthesizing each $e^{-iJZ_{(i)}Z_{(j)}t}$.

\paragraph{G is a complete graph : } In this case for each $k\in V_{1\vect{x}}$, we have $\mathcal{N}(k)\setminus V_{1\vect{x}}=V_{0\vect{x}}$. So we have,
\begin{eqnarray}
 \phi_{\vect{x}}=J\left(|E|-2|V_{1\vect{x}}||V_{0\vect{x}}|\right)=J\left(|E|-2\|\vect{x}\|_1(|V|-\|\vect{x}\|_1)\right)
\end{eqnarray}
So there can be $\lceil|V|/2\rceil$ distinct values of $|\phi_{\vect{x}}|$ as $\|x\|_1$ varies from $1,\ldots,\lceil|V|/2\rceil$. And hence we require at most $\lceil|V|/2\rceil$ $cR_z$ gates for simulating $e^{-iHt}$. If we simulate each $e^{-iJZ_{(i)}Z_{(j)}t}$ then we require $|E|=\frac{|V|(|V|-1)}{2}$ $R_z$ gates. This indicates about $100\left(1-\frac{2}{|V|-1}\right)\%$ reduction in the cost of rotation gates.

In Figure \ref{fig:09} we have shown quantum circuits simulating $e^{it\theta\sum_{(i,j)\in E} Z_{(i)}Z_{(j)}}$ for some simple graphs $G=(V,E)$. The circuits have been designed to optimize the number of Toffoli gates, as well.

\paragraph{Reducing the number of Toffoli gates : } We discussed before that the T-count from the Toffolis may be a significant factor in high error regime as the logarithmic cost of rotation synthesis may not dominate the additive constant that arises from the Toffoli gates needed. In order to reduce the number of Toffolis we can do the following. We design circuits reducing Toffolis for Hamiltonians over smaller graphs, such as in Figure \ref{fig:09}a-\ref{fig:09}e. Then we decompose a Hamiltonian over a large graph into Hamiltonians over these smaller graphs. 
For example, consider a 1-D cycle on $N$ points and $H_z=\theta\sum_{(i,j)\in E}Z_{(i)}Z_{(j)}$. We break this cycle into smaller chains of length 3 i.e. $H_z=\theta (Z_{(1)}Z_{(2)}+Z_{(2)}Z_{(3)})+(Z_{(3)}Z_{(4)}+Z_{(4)}Z_{(5)})+\ldots=\sum_i H_{zi}$. We have a quantum circuit that synthesizes each $e^{iH_{zi}t}$ with only one $cR_z$ gate (Figure \ref{fig:09}b). So to synthesize $e^{iH_zt}$ we require approximately $ N/2$ $cR_z$. This is about twice the number of controlled rotations required, had we synthesized without decomposing. But it does not require any extra Toffoli-pairs. We manage to get approximately $50\%$ reduction, compared to synthesizing each summand i.e. $e^{itZ_{(i)}Z_{(j)}}$. 

Now consider a large $N\times N$ lattice which has $N^2$ vertices and $2N(N-1)$ edges and the Hamiltonian $H_z=\theta\sum_{(i,j)\in E}Z_{(i)}Z_{(j)}$. We can decompose this into $(N-1)^2$ smaller interior cycles of 4-points and a bigger outer circle with $2N+2(N-2)=4(N-1)$ points. From Figure \ref{fig:09}a, we know that we can design a circuit simulating the exponentiated Hamiltonian corresponding to each interior cycle with 1 $cR_z$ and 1 Toffoli pair. We can further decompose the outer circle (as explained in the previous paragraph) and have a circuit with approximately $2(N-1)$ $cR_z$ gates. Thus we require $\approx (N-1)^2+2(N-1)=(N-1)(N+1)$ $cR_z$ and $(N-1)^2$ Toffoli-pairs. We have discussed before that for general graphs, number of $cR_z$ required is $\approx |E|/2=N(N-1)$, so we use $\approx (N-1)$ more $cR_z$ by decomposing, but the Toffoli cost reduces a lot. Had we synthesized each $e^{itZ_{(i)}Z_{(j)}}$, we would have used $2N(N-1)$ $R_z$. Thus we manage to get a reduction of $\approx (N-1)^2$ in the number of $R_z/cR_z$.

In Figure \ref{fig:09}e we gave a circuit for simulating $e^{it\theta\sum_{(i,j)\in E} Z_{(i)}Z_{(j)}}$, when the underlying graph is a 6-point cycle. We reduced the Toffoli-pairs by decomposing the graph into smaller cycles.  

\subsection{Application : Simulating with qDRIFT}
\label{sec:qdrift}

In this section we consider one simulation algorithm - qDRIFT.  We focus on qDRIFT rather than Trotter for our experiments because qDRIFT is easier to analyze numerically.  This is because Trotter errors subtly depend on operator ordering.  Specifically we consider a Hamiltonian $H=\sum_{j=1}^L h_jP_j$ and sample Pauli operators to apply in each short time step, as described earlier in the paper and in~\cite{2019_C}. We can assume each $h_j>0$, since the negation affects the angles of the rotation gates. In qDRIFT, in each iteration one Pauli term is sampled up to a total of $N$ samples. The probability of sampling $P_j$ is $h_j/ \sum_{i} h_i$ and is then simulated for a short time period. We consider another procedure where we re-write the Hamiltonian as $H=\sum_{j=1}^{L'}h_j'H_j$, where each $H_j=\sum_iP_i$, is a sum of commuting Paulis. In each iteration one $H_j$ is sampled with probability $\frac{h_j'}{\sum_ih_i'}$ and simulated for a short time period. Then we compare the growth of error, number of $R_z/cR_z$, Toffoli gates used in these two procedures - (i) one Pauli sampled in each iteration, (ii) group of commuting Paulis sampled in each iteration.

For our first set of experiments we examine the case of simulation of 4 and 6-qubit Heisenberg models.  The coefficients $J_x,J_y,J_z,d_h$ have been sampled from a 0 mean normal distribution with variance 1.
In Figure \ref{fig:10} we show that we achieve better scaling of $R_z/cR_z$ when multiple commuting Pauli operators are sampled and evolved in each iteration. In fact, the error also scales well with the number of iterations, i.e. we can achieve the same error in less number of iterations, or in another way, it is possible to achieve much lower error in the same time (iterations) when multiple operators are sampled. We calculate error as:
$$\text{Error} = \mathbb{E}_{\rho}(\|\chan_2(\rho)-\chan_1(\rho)\|_{l_2})$$
Where $\chan_1 = e^{iHt}\rho e^{-iHt}$, $\tau= t\cdot \left(\sum_j{h_{j}}\right) / N$ and $\|\cdot\|_{l_2}$ is the induced Euclidean norm on matrices and $\mathbb{E}_{\rho}$ is the Haar average over input states. We obtain $\chan_2$ through averaging $M$ random qDRIFT protocols, where $M$ varies from 100 to 3000 for our purposes.  These values are chosen to ensure that the sampling error is small at the scale of the plots generated. 
$$V_k = \prod_{j_{i_k}}e^{iH_{j_{i_k}}\tau} $$
$$\chan_2 = \frac{1}{M}\sum_{k=1}^{M}V_k \rho V_k^{\dag}$$
In our experiments $\rho$ is randomly drawn rather than chosen to maximize the diamond distance.  As a result, this does not give a tight upper bound on the error quantified by any induced channel norm.  Further, all evolution is done using $t=1$ and the groupings are hand optimized using counts given in Supplementary Method 5 (Appendix \ref{app:ham}). The data, tabulated in Figure \ref{fig:10}, shows that the number of iterations of the qDRIFT channel needed to simulate the dynamics to bound the error below a particular value, is reduced by a factor of $2.34$ and $2.8$ through the use of grouping commuting terms for the randomly chosen 4 and 6 qubit Heisenberg Hamiltonian respectively. The number of rotations is found to be reduced by a factor of roughly $2.34$ for the $4$ qubit ensemble but $1.8$ for the $6$ qubit case. This suggests that the groupings that we consider, while highly successful at reducing the number of iterations of qDRIFT needed, the number of gates per iteration increases from the $4$ to $6$ qubit examples.  This suggests that further computer aided optimization may be needed in order to see the full benefit of such groupings as we increase the size of models. 

Similar observations can be made for our second set of experiments where we simulate the Hamiltonian of $H_2$ and $LiH$ (with freezing in the STO-3G basis). The plots in Figure \ref{fig:11} show that in case of $H_2$, the number of iterations of the qDRIFT channel needed to simulate the dynamics to bound the error below a particular value, is reduced by a factor of $4$ through the use of grouping commuting terms. For $LiH$ this factor is nearly $2.1$. The number of rotations is found to be reduced by a factor of roughly $3.2$ for $H_2$ and $2$ for $LiH$. 

For all the experiments that we consider the Toffoli-pair gate count is comparable with the $R_z/cR_z$ count, so the Toffoli pairs do not contribute significantly to the overall T-count, as compared to the rotation gates. The number of gates depend on the diagonalizing circuits and the grouping into commuting Paulis. In this paper we have shown the set of results for the eigenbasis or grouping that were better among the options considered by us. In Supplementary Method 5 (Appendix \ref{app:ham}) we have explicitly mentioned the Hamiltonians, the groupings and given a short description of how we obtained the rotation and Toffoli costs.

All plots, code, and data can be found online in our public repository \url{https://github.com/SNIPRS/hamiltonian}. All code was written in Python. Our results were obtained partly with computing resources in the Cedar cluster of Compute Canada. Specifically, our code was run on an Intel(R) Xeon(R) E5-2683 v4 CPU at 2.10 GHz, utilizing 48 cores, up to 12GBs of RAM, and running Gentoo Linux 2.6. For the Heisenberg Hamiltonians, our results were obtained using 12 cores of an Intel(R) Core(TM) i5-12600K CPU at 3.6 GHz running Ubuntu 20.04.4 and up to 32GBs of RAM.

\begin{figure}
 \centering
 \includegraphics[width=\textwidth]{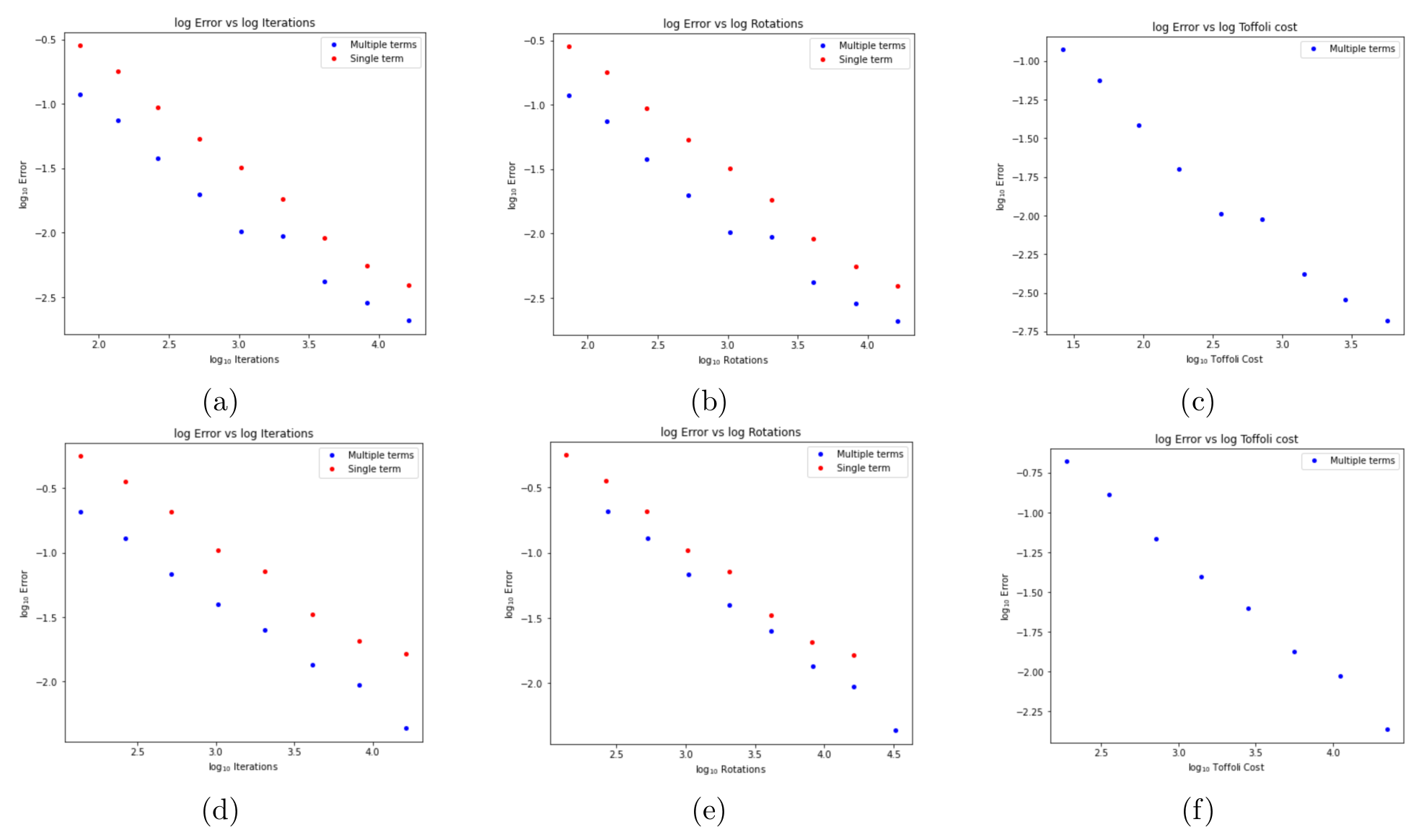}
 \caption{\textbf{Simulation of 4 and 6-qubit Heisenberg Hamiltonian : } Log-log plots showing number of iterations (a,d), $R_z/cR_z$ (b,e), Toffoli-pairs (c,f) as function of error, while simulating the 4 (a,b,c) and 6-qubit (d,e,f) quantum Heisenberg Hamiltonian ($H_H$) with qDRIFT. The red  and blue curve shows the variation when sampling single and multiple commuting Paulis per iteration, respectively. The Y-axis label in all plots is $\log_{10}\text{Error}$. The X-axis label of (a),(d) is $\log_{10}\text{Iterations}$, (b),(e) is $\log_{10}\text{Rotations}$ and (c),(f) is $\log_{10}\text{Toffoli Cost}$.}
\label{fig:10}
\end{figure}

\begin{figure}
 \centering
 \includegraphics[width=\textwidth]{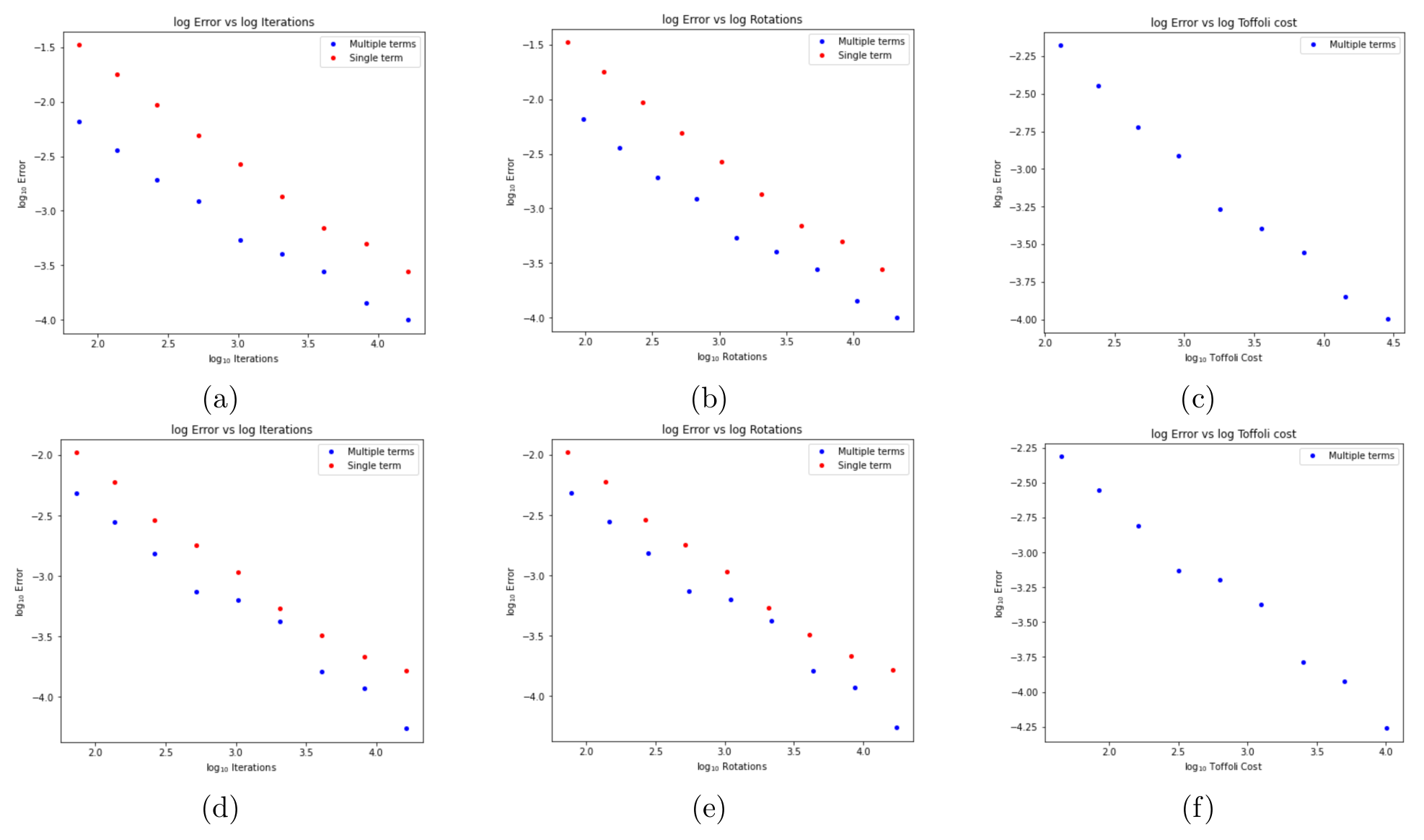}
 \caption{\textbf{Simulation of $H_2$ and LiH Hamiltonian : }
Log-log plots showing number of iterations (a,d), $R_z/cR_z$ (b,e), Toffoli-pairs (c,f) as function of error, while simulating the $H_2$ (a,b,c) and LiH (d,e,f) Hamiltonians with qDRIFT. The red  and blue dots show the variation when sampling single and multiple commuting Paulis per qDRIFT iteration, respectively. The Y-axis label in all plots is $\log_{10}\text{Error}$. The X-axis label of (a),(d) is $\log_{10}\text{Iterations}$, (b),(e) is $\log_{10}\text{Rotations}$ and (c),(f) is $\log_{10}\text{Toffoli Cost}$.}
\label{fig:11}
\end{figure}

\section{Discussion}
\label{sec:conclude}

In this paper, we have considered the problem of designing efficient quantum circuits for exponential of Hamiltonians that can be expressed as sum of Paulis. In contrast with most previous approaches, we synthesize circuit for a sum of exponentiated commuting Paulis, rather than concatenate circuits for each exponentiated Pauli. 
These resulting circuits are observed, for some parameter combinations, to require far fewer non-Clifford operations than the standard circuits.  We therefore propose an algorithm for greedily compiling a Trotter or qDRIFT simulation into a sequence of such simulations and observe that when multiple rotations are grouped we see at fixed error that a factor of roughly $1.8-3.2$ fewer rotations are needed to simulate 6 and 4-qubit Heisenberg models, $LiH$, $H_2$. Also, for simulation protocols like qDRIFT, it is possible to achieve a better performance, in the sense that the error accumulated per iteration is less if we sample multiple commuting Paulis. The overall non-Clifford gate cost of the entire protocol is also less.

There are a number of interesting avenues that are revealed by this work.  The first is that a more complete set of rules for compiling Hamiltonian terms into sets that can be easily exponentiated reveals the potential for more efficient simulation compilation of Hamiltonians.  These replacement rules, once identified, can be used inside a more systematic Hamiltonian compiler package that would allow more substantial optimizations of the Hamiltonian for the given simulation method.  This raises a second issue, while in this work we focus on the case of optimizing Trotter and related simulation methods, similar considerations could be performed for optimizing the prepare and select circuits used in LCU / qubitization simulation algorithms.  Such procedures are harder to optimize as the simulation algorithm does not factorize as nicely into independent simulations; however, the importance of these simulation methods makes the development of compilation strategies essential.

Finally, an important avenue hinted at by this work is the possibility that approximate unitary synthesis methods can be combined with quantum simulation routines to further reduce the cost.  If fermionic swaps are used, for example, simulation reduces to implementing a series of $4$-local Hamiltonians and optimal circuits can be in principle constructed for such Hamiltonians using existing approaches.  The computational overheads required for optimal (approximate) synthesis of these unitaries makes this a daunting task; however, if a sufficient lexicon of cheap unitaries are found for such simulation then it will not only lead to lower costs for quantum simulation using Trotter / qDRIFT: it will also unify Hamiltonian compilation with circuit synthesis into a single conceptual framework.  

\section*{Data Availability statement}
All plots and data can be found online in our public repository \url{https://github.com/SNIPRS/hamiltonian}. 

\section*{Code Availability statement}
All code can be found online in our public repository \url{https://github.com/SNIPRS/hamiltonian}. 

\section*{Acknowledgements}  
P.Mukhopadhyay wishes to thank NTT Research for their financial and technical support. This work was supported in part by Canada's NSERC. Research at IQC is supported in part by the Government of Canada through Innovation, Science and Economic Development Canada. This research was enabled in part by support provided by WestGrid (www.westgrid.ca) and Compute Canada Calcul Canada (www.computecanada.ca).
N.Wiebe acknowledges funding from the Google Quantum Research Program, the Natural Sciences and Engineering Research Council of Canada and this work on this project was supported by the U.S. Department of Energy, Office of Science, National Quantum Information Science Research Centers, Co-Design Center for Quantum Advantage under contract number DE-SC0012704.

\section*{Author contributions}

The ideas were given by P.Mukhopadhyay and N.Wiebe. The implementations were done by H.T.Zhang. All the authors contributed in the preparation of the manuscript.

\section*{Competing interests}

There are no competing interests.


\appendix

\section{Supplementary Note 1}
\label{app:prelim}

In this section we give some preliminary definitions and results used in the paper.
\paragraph{Cliffords and Paulis : }
The \emph{single qubit Pauli matrices} are as follows:
\begin{eqnarray}
 \X=\begin{bmatrix}
     0 & 1 \\
    1 & 0
    \end{bmatrix} \qquad  
 \Y=\begin{bmatrix}
     0 & -i \\
     i & 0
    \end{bmatrix} \qquad 
 \Z=\begin{bmatrix}
     1 & 0 \\
     0 & -1
    \end{bmatrix}\nonumber
\label{eqn:Pauli1}
\end{eqnarray}
Parenthesized subscripts are used to indicate qubits on which an operator acts For example, $\X_{(1)}=\X\otimes\id^{\otimes (n-1)}$ implies that Pauli $\X$ matrix acts on the first qubit and the remaining qubits are unchanged.

The \emph{$n$-qubit Pauli operators} are :
$
 \pauli_n=\{Q_1\otimes Q_2\otimes\ldots\otimes Q_n:Q_i\in\{\id,\X,\Y,\Z\} \}.
$

The \emph{single-qubit Clifford group} $\cliff_1$ is generated by the Hadamard and phase gates :
$
 \cliff_1=\braket{\had,\phase} 
 $
where
\begin{eqnarray}
 \had=\frac{1}{\sqrt{2}}\begin{bmatrix}
       1 & 1 \\
       1 & -1
      \end{bmatrix}\qquad 
 \phase=\begin{bmatrix}
       1 & 0 \\
       0 & i
      \end{bmatrix}\nonumber
\end{eqnarray}
When $n>1$ the \emph{$n$-qubit Clifford group} $\cliff_n$ is generated by these two gates (acting on any of the $n$ qubits) along with the two-qubit $\CNOT=\ket{0}\bra{0}\otimes\id+\ket{1}\bra{1}\otimes\X$ gate (acting on any pair of qubits). 

The Clifford group is special because of its relationship to the set of $n$-qubit Pauli operators. Cliffords map Paulis to Paulis, up to a possible phase of $-1$, i.e. for any $P\in\pauli_n$ and any $C\in\cliff_n$ we have
$
    CPC^{\dagger}=(-1)^bP'
$
for some $b\in\{0,1\}$ and $P'\in\pauli_n$. 

\paragraph{The group generated by Clifford and $\T$ gates : }

The group $\clifft_n$ is generated by the $n$-qubit Clifford group along with the $\T$ gate, where
\begin{eqnarray}
 \T=\begin{bmatrix}
     1 & 0 \\
     0 & e^{i\frac{\pi}{4}}
    \end{bmatrix}
\end{eqnarray}

Thus for a single qubit
$
 \clifft_1 = \braket{\had,\T}  
$
and for $n>1$ qubits
\begin{eqnarray}
 \clifft_n=\braket{\had_{(i)},\T_{(i)},\CNOT_{(i,j)}:i,j\in [n]}.
 \nonumber
\end{eqnarray}
It can be easily verified that $\clifft_n$ is a group, since the $\had$ and $\CNOT$ gates are their own inverses and $\T^{-1}=\T^7$. Here we note $\phase=\T^2$.

\paragraph{Diamond distance : }
We use $\|\cdot\|$ to denote the operator norm or Schatten-$\infty$ norm, which is equal to the largest singular value of an operator. We use $\|\cdot\|_1$ for the trace norm or Schatten 1-norm, defined as $\|Y\|_1=\tr\left[\sqrt{Y^{\dagger}Y}\right]$, which is equal to the sum of the singular values of an operator. Throughout, we use the diamond norm distance as a measure of error between two channels. The diamond distance is denoted
\begin{eqnarray}
 d_{\diamond}\left(\chan,\mathcal{N}\right)=\frac{1}{2}\|\chan-\mathcal{N}\|_{\diamond}
\end{eqnarray}
where $\|\cdot\|_{\diamond}$ is the diamond norm
\begin{eqnarray}
 \|\mathcal{P}\|_{\diamond}=\sup_{\rho:\|\rho\|_1=1}\|(\mathcal{P}\otimes\id)(\rho)\|_1,
\end{eqnarray}
where $\id$ acts on the same size Hilbert space as $\mathcal{P}$. The key properties of diamond norm, used in this paper are:
\begin{enumerate}
 \item \emph{Triangle inequality : } $\|\mathcal{A}\pm\mathcal{B}\|_{\diamond}\leq\|\mathcal{A}\|_{\diamond}+\|\mathcal{B}\|_{\diamond}$,
 
 \item \emph{Sub-multiplicativity : } $\|\mathcal{A}\mathcal{B}\|_{\diamond}\leq\|\mathcal{A}\|_{\diamond}\|\mathcal{B}\|_{\diamond}$ and consequently $\|\mathcal{A}^n\|_{\diamond}\leq\|\mathcal{A}\|_{\diamond}^n$.
\end{enumerate}
From the definition of diamond norm, it follows that if we apply the channels $\chan$ and $\mathcal{N}$ to the quantum state $\sigma$, we ahve the following.
\begin{eqnarray}
 d_{tr}\left(\chan(\sigma),\mathcal{N}(\sigma)\right)=\frac{1}{2}\|\chan(\sigma)-\mathcal{N}(\sigma)\|_1\leq d_{\diamond}(\chan,\mathcal{N})
\end{eqnarray}
The trace norm is an important quantity because it bounds the error in expectation values. If $M$ is an operator, then we have the following.
\begin{eqnarray}
 \|\tr[M\chan(\sigma)]-\tr[M\mathcal{N}(\sigma)]\|\leq 2\|M\|d_{tr}\left(\chan(\sigma),\mathcal{N}(\sigma)\right)\leq 2\|M\|d_{\diamond}(\chan,\mathcal{N})
\end{eqnarray}
If $M$ is a projection so that this represents a probability, then $\|M\|=1$. We see $\epsilon$ error in diamond distance ensures that the measurements statistics are correct up to additive error $2\epsilon$.

We have used the following inequalities in the paper.

\paragraph{Chebyshev's inequality : } Let $X$ be a random variable with finite expected value $\mu$ and finite non-zero variance $\sigma^2$. Then for any real number $k>0$,
\begin{eqnarray}
 \Pr\left[|X-\mu|\geq k\sigma\right]\leq \frac{1}{k^2}. \label{eqn:chebyshev}
\end{eqnarray}

\paragraph{Markov's inequality : } If $X$ is a non-negative random variable and $a>0$, then the probability that $X$ is at least $a$ is as follows.
\begin{eqnarray}
  \Pr\left[X\geq a\right]\leq \frac{\E[X]}{a}    \label{eqn:markov}
\end{eqnarray}

\paragraph{Hoeffding's inequality : } Let $X_1,\ldots, X_n$ be independent random variables such that $a_a\leq X_i\leq b_i$ almost surely. Let $S_n=\sum_{i=1}^nX_i$ be the sum of these variables. Then for all $t>0$ we have the following.
\begin{eqnarray}
\Pr\left[\left|S_n-\E[S_n]\right|\geq t\right]&\leq&2\exp\left(-\frac{2t^2}{\sum_{i=1}^n(b_i-a_i)^2}\right)  \label{eqn:hoeffding1}   \\
 \Pr\left[S_n-\E[S_n]\geq t\right]&\leq&\exp\left(-\frac{2t^2}{\sum_{i=1}^n(b_i-a_i)^2}\right)  \label{eqn:hoeffding2}
\end{eqnarray}

\section{Supplementary Method 1}
\label{app:1overlap}

In this section we show how to derive a diagonalizing circuit when overlap is on 1 qubit. We first show an eigenbasis for the Paulis in $G_{1y}$ and then we prove that a unitary $W_{1y}$ diagonalizes these Paulis.

\begin{lemma}[\textbf{Eigenbasis for $G_{1y}$}]
For the Paulis in $G_{1y}$ the eigenvectors are of the following form.
\begin{eqnarray}
 \ket{v_{y}}&=&\left(-\ket{Q_10Q_2}+\ket{\conj{Q_1}0Q_2}+\ket{Q_10\conj{Q_2}}+\ket{\conj{Q_1}0\conj{Q_2}}\right) \nonumber\\
&&\pm\left(-\ket{\conj{Q_1}1Q_2}-\ket{Q_11Q_2}+\ket{\conj{Q_1}1\conj{Q_2}}-\ket{Q_11\conj{Q_2}}\right) \nonumber 
\end{eqnarray}
Specifically we have the following. Let $\alpha=\left(\sqrt{-1}\right)^{i+j+k+1}$ and $\alpha'=\left(\sqrt{-1}\right)^{a+b+c+1}$.
\begin{eqnarray}
 P_iP_jP_k Y\id\id\id\ket{v_{y}}&=& \pm\alpha (-1)^{iq_1+jq_2+kq_3}\ket{v_{y}}\quad\text{ and }\quad \id\id\id Y P_aP_bP_c\ket{v_{y}}=\pm\alpha' (-1)^{aq_5+bq_6+cq_7}\ket{v_{y}} \nonumber
\end{eqnarray}
 \label{app:lem:ebasisGy}
\end{lemma}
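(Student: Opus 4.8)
The plan is to verify the two eigenvalue identities by a direct computation, exploiting the single-qubit action of the Paulis together with the parity constraint that defines $G_{1y}$; the fact that these vectors form an eigenbasis then follows from a dimension count.

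First I would record the elementary single-qubit actions $X\ket{b}=\ket{\conj b}$ and $Y\ket{b}=\sqrt{-1}\,(-1)^{b}\ket{\conj b}$, which combine into $P_i\ket{b}=(\sqrt{-1})^{i}(-1)^{ib}\ket{\conj b}$ for $P_0=X$, $P_1=Y$. Tensoring these over qubits $q_1,q_2,q_3$ and appending $Y$ on the middle qubit $q_4=v$ gives, for any computational basis state,
$$P_iP_jP_kY\id\id\id\,\ket{Q_1\,v\,Q_2}=\alpha\,(-1)^{iq_1+jq_2+kq_3}\,(-1)^{v}\,\ket{\conj{Q_1}\,\conj v\,Q_2},$$
with $\alpha=(\sqrt{-1})^{\,i+j+k+1}$ and with $Q_2$ left untouched. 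The decisive structural observation is the \emph{parity step}: because $i+j+k\equiv1\bmod 2$, replacing $Q_1$ by $\conj{Q_1}$ multiplies the phase $(-1)^{iq_1+jq_2+kq_3}$ by $-1$. Thus the operator simultaneously flips the middle qubit $v\mapsto\conj v$, flips $Q_1\mapsto\conj{Q_1}$, and attaches a sign that reverses under these flips.

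Next I would substitute $\ket{v_y}$ and track its eight terms. Writing $\ket{v_y}=A+sB$ with $s=\pm1$, where $A$ is the $v=0$ block and $B$ the $v=1$ block displayed in the statement, the formula above shows (grouping the output terms by the value of $v$ they land in) that the operator carries $A$ into $\alpha(-1)^{iq_1+jq_2+kq_3}B$ and $B$ into $\alpha(-1)^{iq_1+jq_2+kq_3}A$; the particular signs chosen in $A$ and $B$ are exactly what is needed so that this $A\leftrightarrow B$ swap closes up. Consequently $P_iP_jP_kY\id\id\id\,(A+sB)=\alpha(-1)^{iq_1+jq_2+kq_3}(sA+B)=\pm\alpha(-1)^{iq_1+jq_2+kq_3}(A+sB)$, using $s^2=1$, which is precisely the first claimed identity. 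The second identity, for $\id\id\id YP_aP_bP_c$, is obtained by the identical bookkeeping applied to qubits $q_4,\dots,q_7$ (now $Y$ sits on $q_4$, $Q_1$ is the spectator, and $Q_2\mapsto\conj{Q_2}$), using the parity $a+b+c\equiv1\bmod 2$; one checks the same sign $s$ governs both eigenvalues, so $\ket{v_y}$ is a genuine \emph{simultaneous} eigenvector.

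Finally, to justify that these vectors are the eigenbasis I would count dimensions: each $\ket{v_y}$ is supported on the eight states $\{Q_1,\conj{Q_1}\}\times\{0,1\}\times\{Q_2,\conj{Q_2}\}$, and ranging over the $4\cdot4$ such blocks, together with the internal sign choices and the overall $\pm$, recovers a basis of the full $2^{7}$-dimensional space, exactly as in the count sketched for $G_{1x}$. The main obstacle is not conceptual but the sign bookkeeping in the middle step: one must combine the $(-1)^{v}$ factor produced by $Y$ on the middle qubit with the parity-induced sign flip from $Q_1\mapsto\conj{Q_1}$ (resp. $Q_2\mapsto\conj{Q_2}$) and confirm that, across all eight terms, these conspire to return $\ket{v_y}$ with a single overall eigenvalue rather than leaking into an orthogonal vector. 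Organizing the eight terms by their target value of $v$ is what makes the $A\leftrightarrow B$ swap—and hence the eigenvalue $\pm\alpha(-1)^{iq_1+jq_2+kq_3}$—transparent.
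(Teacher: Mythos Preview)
Your proposal is correct and follows essentially the same approach as the paper: a direct term-by-term computation using the parity constraint $i+j+k\equiv 1\bmod 2$, followed by the dimension count $2^3\cdot 2\cdot 2^3=2^7$. Your $A\leftrightarrow B$ block-swap organization is a slightly tidier packaging of what the paper does by expanding all eight terms at once and then invoking the parity to collapse the signs.
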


\begin{proof}
 We have the following.
 \begin{eqnarray}
  P_iP_jP_k Y\id\id\id\ket{v_{y}}&=&\alpha (-1)^{iq_1+jq_2+kq_3} [\left(-\ket{\conj{Q_1}1Q_2}+(-1)^{i+j+k}\ket{Q_11Q_2}+\ket{\conj{Q_1}1\conj{Q_2}}+(-1)^{i+j+k}\ket{Q_11\conj{Q_2}} \right)   \nonumber   \\
  &&\pm \left(-(-1)^{i+j+k+1}\ket{Q_10Q_2}+\ket{\conj{Q_1}0Q_2}+(-1)^{i+j+k+1}\ket{Q_10\conj{Q_2}}+\ket{\conj{Q_1}0\conj{Q_2}}  \right)   ]  \nonumber   \\
  &=&\pm\alpha (-1)^{iq_1+jq_2+kq_3}\ket{v_{y}}  \quad [\because i+j+k\equiv 1\mod 2] \nonumber
\end{eqnarray}
Similarly we can prove the following.
 \begin{eqnarray}
 \id\id\id Y P_aP_bP_c\ket{v_{y}}&=&\alpha' (-1)^{aq_5+bq_6+cq_7} [\left(-\ket{Q_11\conj{Q_2}}+\ket{\conj{Q_1}1\conj{Q_2}}+(-1)^{a+b+c}\ket{Q_11Q_2}+(-1)^{a+b+c}\ket{\conj{Q_1}1Q_2} \right)   \nonumber   \\
  &&\pm \left(+\ket{\conj{Q_1}0\conj{Q_2}}+\ket{Q_10\conj{Q_2}}+(-1)^{a+b+c+1}\ket{\conj{Q_1}0Q_2}-(-1)^{a+b+c+1}\ket{Q_10Q_2}  \right)   ]  \nonumber   \\
  &=&\pm\alpha' (-1)^{aq_1+bq_2+cq_3}\ket{v_{y}}  \quad [\because a+b+c \equiv 1\mod 2] \nonumber
\end{eqnarray}
It is easy to check the orthogonality and the number of eigenvectors as $2^3\cdot 2\cdot 2^3 = 2^7$.
\end{proof}

Let $W_{1y}$ be the unitary consisting of the following sequence of gates. The rightmost one is the first to be applied. With a slight abuse of notation we denote $CNOT_{(4,1)}CNOT_{(4,2)}CNOT_{(4,3)}$ by $CNOT_{(4,I)}$ and $CNOT_{(4,5)}CNOT_{(4,6)}CNOT_{(4,7)}$ by $CNOT_{(4,II)}$.
\begin{eqnarray}
 W_{1y}=CNOT_{(4,I)}H_{(4)}Z_{(4)}CNOT_{(4,I)}CNOT_{(4,II)}H_{(4)}CNOT_{(4,I)}    \nonumber
\end{eqnarray}

\begin{theorem}
For each $i,j,k,l,a,b,c\in\intg_2$, such that $P_iP_jP_kY\id\id\id, \id\id\id YP_aP_bP_c\in G_{1y}$ we have the following.
 \begin{eqnarray}
  \sqrt{-1}^{i+j+k+1}W_{1y}\left(Z_{(1)}^iZ_{(2)}^jZ_{(3)}^kZ_{(4)}\id\id\id\right)W_{1y}^{\dagger}=P_iP_jP_k Y\id\id\id \nonumber \\
  \text{ and } \sqrt{-1}^{a+b+c+1}W_{1y}\left(\id\id\id Z_{(4)}Z_{(5)}^iZ_{(6)}^jZ_{(7)}^k\right)W_{1y}^{\dagger}=\id\id\id Y P_aP_bP_c \nonumber 
 \end{eqnarray}
 \label{app:thm:diagGy}
\end{theorem}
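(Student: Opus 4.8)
The plan is to verify the first identity directly in the Heisenberg picture, conjugating the diagonal Pauli $Z_{(1)}^iZ_{(2)}^jZ_{(3)}^kZ_{(4)}$ through $W_{1y}$ one gate at a time, and then to deduce the second identity from the symmetry between the qubit triples $\{1,2,3\}$ and $\{5,6,7\}$. Since $W_{1y}$ is Clifford, conjugation sends Paulis to Paulis and the only bookkeeping needed is (i) the CNOT rules $CNOT_{(c,t)}Z_{(t)}CNOT_{(c,t)}=Z_{(c)}Z_{(t)}$ and $CNOT_{(c,t)}X_{(c)}CNOT_{(c,t)}=X_{(c)}X_{(t)}$ (the complementary control/target being fixed), (ii) $H Z H = X$ and $HXH=Z$, and (iii) the per-qubit identity $X Z^{\varepsilon}=(-\sqrt{-1})^{\varepsilon}P_{\varepsilon}$ with $P_0=X$, $P_1=Y$. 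Throughout I would use the defining constraint $i+j+k\equiv 1\pmod 2$ of $G_{1y}$, which controls the parities that appear.

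Reading $W_{1y}$ from right to left (first-applied innermost), I would track the $Z_{(4)}$-exponent as the gates act. The first $CNOT_{(4,I)}$ sends $Z_{(m)}\mapsto Z_{(4)}Z_{(m)}$ for $m=1,2,3$, raising the $Z_{(4)}$-exponent to $1+i+j+k$, which is even and hence cancels; the ensuing $H_{(4)}$ and $CNOT_{(4,II)}$ then act trivially since nothing remains on qubits $4,5,6,7$. The next $CNOT_{(4,I)}$ reintroduces $Z_{(4)}^{\,i+j+k}=Z_{(4)}$ (the exponent now being odd), the internal $Z_{(4)}$ commutes through unchanged, and the second $H_{(4)}$ converts $Z_{(4)}\mapsto X_{(4)}$. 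The last $CNOT_{(4,I)}$ spreads $X_{(4)}\mapsto X_{(4)}X_{(1)}X_{(2)}X_{(3)}$ while again dressing each $Z_{(m)}$ with a $Z_{(4)}$; collecting factors qubit by qubit leaves $X_{(m)}Z_{(m)}^{\varepsilon_m}$ on qubits $1,2,3$ (with $\varepsilon_1,\varepsilon_2,\varepsilon_3=i,j,k$) and $X_{(4)}Z_{(4)}$ on qubit $4$. Each such product is a Pauli up to a power of $\sqrt{-1}$, and the powers consolidate to give $W_{1y}(Z_{(1)}^iZ_{(2)}^jZ_{(3)}^kZ_{(4)})W_{1y}^{\dagger}=\sqrt{-1}^{\,i+j+k+1}P_iP_jP_kY\id\id\id$. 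Multiplying by the prefactor $\sqrt{-1}^{\,i+j+k+1}$ yields $\sqrt{-1}^{\,2(i+j+k+1)}=(-1)^{\,i+j+k+1}=1$, since $i+j+k+1$ is even, which is exactly the stated identity.

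The second identity is obtained by the identical computation after relabeling: qubit $4$ again plays the role of the overlap qubit carrying the internal $H_{(4)}Z_{(4)}$, while the triple $\{5,6,7\}$ now plays the role that $\{1,2,3\}$ did, with $CNOT_{(4,II)}$ and $CNOT_{(4,I)}$ exchanging their functions and the parity constraint $a+b+c\equiv 1\pmod 2$ supplying the same phase. Alternatively, in line with the paper's stated approach, one can verify both identities against the explicit eigenbasis $\{\ket{v_{y}}\}$ of Lemma \ref{app:lem:ebasisGy}: the right-hand Pauli acts on $\ket{v_{y}}$ with the eigenvalue recorded there, while the left-hand operator, being a Clifford conjugate of a diagonal $Z$-string, is diagonalized by the computational-basis images $W_{1y}\ket{\vec{x}}$ with eigenvalue $(-1)^{ix_1+jx_2+kx_3+x_4}$, so the content reduces to identifying $\ket{v_{y}}$ with the appropriate $W_{1y}\ket{\vec{x}}$ and matching eigenvalues.

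The main obstacle I anticipate is the disciplined tracking of the $Z_{(4)}$-exponent parity and the accumulated powers of $\sqrt{-1}$: the factors coming from the prefactor $\sqrt{-1}^{\,i+j+k+1}$, from each $XZ^{\varepsilon}$ product, and from $X_{(4)}Z_{(4)}$ must be reconciled exactly, and it is precisely the oddness of $i+j+k$ that makes the intermediate $Z_{(4)}$-string alternately cancel and reappear and that renders the final phase real. If instead the eigenbasis route is taken, the delicate step shifts to computing the gate-by-gate action of the internal $H_{(4)}Z_{(4)}$ together with the two multi-target CNOT blocks on the eight-term superposition defining $\ket{v_{y}}$ and fixing the relative $\pm$ sign correctly; either way, consistent sign and phase conventions are the crux.
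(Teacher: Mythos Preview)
Your Heisenberg-picture approach is correct and more direct than the paper's, which instead verifies the identities by tracking the Schr\"odinger evolution $W_{1y}^{\dagger}\ket{v_{y\pm}}$ of the eigenbasis vectors from Lemma~\ref{app:lem:ebasisGy} through each gate and then checking that the diagonal $Z$-string reproduces the recorded eigenvalue. Your conjugation computation for the first identity is right; the minor slip in writing $X_{(m)}Z_{(m)}^{\varepsilon_m}$ rather than $Z_{(m)}^{\varepsilon_m}X_{(m)}$ (the $Z$'s precede the $X$'s in the product obtained after the final $CNOT_{(4,I)}$) is harmless, since the total power of $\sqrt{-1}$ is $i+j+k+1$, which is even, so $(\pm i)^{i+j+k+1}$ coincide.

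There is, however, a real gap in your treatment of the second identity. The relabeling argument cannot work as stated: $W_{1y}$ contains three $CNOT_{(4,I)}$ blocks but only one $CNOT_{(4,II)}$, so it is manifestly \emph{not} symmetric under $\{1,2,3\}\leftrightarrow\{5,6,7\}$, and the two CNOT blocks do not simply ``exchange their functions.'' The second computation is genuinely different. Starting from $Z_{(4)}Z_{(5)}^aZ_{(6)}^bZ_{(7)}^c$, the first $CNOT_{(4,I)}$ acts trivially; $H_{(4)}$ turns $Z_{(4)}$ into $X_{(4)}$; and $CNOT_{(4,II)}$ spreads $X_{(4)}\mapsto X_{(4)}X_{(5)}X_{(6)}X_{(7)}$ while dressing each $Z_{(m)}$ with $Z_{(4)}$, already yielding $(-i)^{a+b+c+1}\,\id\id\id\, Y P_a P_b P_c$ by the same per-qubit identity and the oddness of $a+b+c$. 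The remaining block $CNOT_{(4,I)}H_{(4)}Z_{(4)}CNOT_{(4,I)}$ must then be shown to fix $Y_{(4)}$ (with trivial action on qubits $5,6,7$): conjugation by $CNOT_{(4,I)}$ sends $Y_{(4)}\mapsto Y_{(4)}X_{(1)}X_{(2)}X_{(3)}$, the middle $Z_{(4)}$ and $H_{(4)}$ each contribute a sign $-1$ (since $ZYZ=HYH=-Y$), and the final $CNOT_{(4,I)}$ cancels the $X_{(1)}X_{(2)}X_{(3)}$ factor, returning $Y_{(4)}$ with no net phase. This is short, but it is not ``the identical computation after relabeling''; you should replace the symmetry claim with this explicit check.
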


\begin{proof}
We prove the theorem by showing that the two operators have equivalent actions on the eigenstates of the Paulis in $G_{1y}$. Let
 \begin{eqnarray}
  \ket{v_{y+}}&=&\left(-\ket{Q_10Q_2}+\ket{\conj{Q_1}0Q_2}+\ket{Q_10\conj{Q_2}}+\ket{\conj{Q_1}0\conj{Q_2}}\right)    \nonumber \\
 &&+\left(-\ket{\conj{Q_1}1Q_2}-\ket{Q_11Q_2}+\ket{\conj{Q_1}1\conj{Q_2}}-\ket{Q_11\conj{Q_2}}\right)  \nonumber \\
 \text{ and } \ket{v_{y-}}&=&\left(-\ket{Q_10Q_2}+\ket{\conj{Q_1}0Q_2}+\ket{Q_10\conj{Q_2}}+\ket{\conj{Q_1}0\conj{Q_2}}\right)    \nonumber \\
 &&-\left(-\ket{\conj{Q_1}1Q_2}-\ket{Q_11Q_2}+\ket{\conj{Q_1}1\conj{Q_2}}-\ket{Q_11\conj{Q_2}}\right)  \nonumber
 \end{eqnarray}
  \textbf{Case - A :} First we consider $\ket{v_{y+}}$. We apply $W_{1y}^{\dagger}$ first. The state vector evolves as follows.
\begin{eqnarray}
 \ket{v_{y}+} &&\stackrel{CNOT_{(4,I)}}{\longrightarrow} \left(-\ket{Q_10Q_2}+\ket{\conj{Q_1}0Q_2}+\ket{Q_10\conj{Q_2}}+\ket{\conj{Q_1}0\conj{Q_2}}\right)    \nonumber \\
 &&+\left(-\ket{Q_11Q_2}-\ket{\conj{Q_1}1Q_2}+\ket{Q_11\conj{Q_2}}-\ket{\conj{Q_1}1\conj{Q_2}}\right)  \nonumber    \\
 &&\stackrel{H_{(4)}}{\rightarrow}\sqrt{2}\left(-\ket{Q_10Q_2}+\ket{\conj{Q_1}1Q_2}+\ket{Q_10\conj{Q_2}}+\ket{\conj{Q_1}1\conj{Q_2}}\right)    \nonumber    \\
 &&\stackrel{Z_{(4)}}{\rightarrow}\sqrt{2}\left(-\ket{Q_10Q_2}-\ket{\conj{Q_1}1Q_2}+\ket{Q_10\conj{Q_2}}-\ket{\conj{Q_1}1\conj{Q_2}}\right)    \nonumber    \\
 &&\stackrel{CNOT_{(4,I)}CNOT_{(4,II)}}{\longrightarrow}\sqrt{2}\left(-\ket{Q_10Q_2}-\ket{Q_11\conj{Q_2}}+\ket{Q_10\conj{Q_2}}-\ket{Q_11Q_2}\right)    \nonumber \\
 &&\stackrel{H_{(4)}}{\rightarrow}2\left(-\ket{Q_10Q_2}+\ket{Q_11\conj{Q_2}}\right)  \stackrel{CNOT_{(4,I)}}{\longrightarrow} 2\left(-\ket{Q_10Q_2}+\ket{\conj{Q_1}1\conj{Q_2}}\right) =\ket{v_{y}+}_1  \nonumber
\end{eqnarray}
We have the following after applying the tensor of Z operators.
\begin{eqnarray}
 \left(Z_{(1)}^i Z_{(2)}^jZ_{(3)}^kZ_{(4)}\id\id\id\right)\ket{v_{y+}}_1&=&(-1)^{iq_1+jq_2+kq_3}\ket{v_{y+}}_1 \nonumber \\
 \left(\id\id\id Z_{(4)} Z_{(5)}^aZ_{(6)}^bZ_{(7)}^c\right)\ket{v_{y+}}_1&=&(-1)^{aq_5+bq_6+cq_7}\ket{v_{y+}}_1 \nonumber 
\end{eqnarray}
Since we only have accumulation of different phase, so for the evolution of the state after applying $W_{1y}^{\dagger}$, in both cases it is enough to check the evolution of $\ket{v_{y+}}_1$. Since it is the same operators applied in reverse order we are not writing the states explicitly. 
\begin{eqnarray}
 \left(W_{1y}\left(Z_{(1)}^i Z_{(2)}^jZ_{(3)}^kZ_{(4)}\id\id\id\right)W_{1y}^{\dagger}\right)\ket{v_{y+}}=(-1)^{iq_1+jq_2+kq_3}\ket{v_{y+}}\nonumber \\
 \left(W_{1y}\left(\id\id\id Z_{(4)} Z_{(5)}^aZ_{(6)}^bZ_{(7)}^c\right)W_{1y}^{\dagger}\right)\ket{v_{y+}}=(-1)^{aq_5+bq_6+cq_7}\ket{v_{y+}}\nonumber 
\end{eqnarray}
Thus in this case the operators on the LHS and RHS have the same eigenvalues for the eigenvector $\ket{v_{y+}}$.

\textbf{Case - B :} By similar analysis we can prove that the operators on both the LHS and RHS have the same eigenvalues for the eigenvector $\ket{v_{y-}}$. 

This proves the theorem.
\end{proof}

\begin{lemma}[\textbf{Eigenbasis for $G_{1x}$}]
For the Paulis in $G_{1x}$ the eigenvectors are of the following form.
\begin{eqnarray}
 \ket{v_{\vect{x}}}&=&\left(\ket{Q_10Q_2}+(-1)^{x_1}\ket{Q_10\conj{Q_2}}+(-1)^{x_2}\ket{\conj{Q_1}0Q_2}+(-1)^{x_3}\ket{\conj{Q_1}0\conj{Q_2}}\right) \nonumber\\
&&\pm\left(\ket{Q_11Q_2}+(-1)^{x_1}\ket{Q_11\conj{Q_2}}+(-1)^{x_2}\ket{\conj{Q_1}1Q_2}+(-1)^{x_3}\ket{\conj{Q_1}1\conj{Q_2}}\right) \nonumber 
\end{eqnarray}
where $\vect{x}=(x_1,x_2,x_3)\subset\{0,1\}^3$ such that either $\|\vect{x}\|_1=0$ or $2$. Specifically we have the following.
\begin{eqnarray}
 P_iP_jP_kX\id\id\id\ket{v_{\vect{0}}}= \pm\alpha (-1)^{iq_1+jq_2+kq_3}\ket{v_{\vect{0}}} &\text{ and }& \id\id\id XP_aP_bP_c\ket{v_{\vect{0}}}=\pm\alpha' (-1)^{aq_5+bq_6+cq_7}\ket{v_{\vect{0}}} \nonumber\\
 P_iP_jP_k X\id\id\id\ket{v_{\vect{x}}}= \pm\alpha (-1)^{iq_1+jq_2+kq_3}\ket{v_{\vect{x}}} &\text{ and }& \id\id\id XP_aP_bP_c\ket{v_{\vect{x}}}=\mp\alpha' (-1)^{aq_5+bq_6+cq_7}\ket{v_{\vect{x}}} [\vect{x}=(1,0,1)]\nonumber\\
P_iP_jP_kX\id\id\id\ket{v_{\vect{x}}}= \mp\alpha (-1)^{iq_1+jq_2+kq_3}\ket{v_{\vect{x}}} &\text{ and }& \id\id\id XP_aP_bP_c\ket{v_{\vect{x}}}=\pm\alpha' (-1)^{aq_5+bq_6+cq_7}\ket{v_{\vect{x}}} [\vect{x}=(0,1,1)]\nonumber  \\
P_iP_jP_kX\id\id\id\ket{v_{\vect{x}}}= \mp\alpha (-1)^{iq_1+jq_2+kq_3}\ket{v_{\vect{x}}} &\text{ and }& \id\id\id XP_aP_bP_c\ket{v_{\vect{x}}}=\mp\alpha' (-1)^{aq_5+bq_6+cq_7}\ket{v_{\vect{x}}} [\vect{x}=(1,1,0)] \nonumber 
\end{eqnarray}
where $\alpha=\left(\sqrt{-1}\right)^{i+j+k}$ and $\alpha'=\left(\sqrt{-1}\right)^{a+b+c}$.
 \label{app:lem:ebasisGx}
\end{lemma}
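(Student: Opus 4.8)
The plan is to follow the direct-verification strategy used for Lemma~\ref{app:lem:ebasisGy}: apply each of the two commuting Paulis to the candidate vector $\ket{v_{\vect{x}}}$ and check that the output is the claimed scalar multiple of $\ket{v_{\vect{x}}}$. First I would record the single-qubit action $P_i\ket{b}=(\sqrt{-1})^{i}(-1)^{ib}\ket{\conj b}$ (recall $P_0=\X$, $P_1=\Y$), which yields
\begin{eqnarray}
 P_iP_jP_k\X\id\id\id\,\ket{Q_1 v Q_2}=(\sqrt{-1})^{i+j+k}(-1)^{iq_1+jq_2+kq_3}\,\ket{\conj{Q_1}\,\conj{v}\,Q_2}, \nonumber
\end{eqnarray}
together with the mirror-image expression for $\id\id\id\,\X P_aP_bP_c$, which instead complements $Q_2$ and $v$ while fixing $Q_1$. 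The structural fact that makes everything work is the parity hypothesis: because $i+j+k\equiv 0\bmod 2$, the prefactor $\alpha=(\sqrt{-1})^{i+j+k}\in\{+1,-1\}$ is real and, crucially, the phase $(-1)^{iq_1+jq_2+kq_3}$ is unchanged when $Q_1$ is replaced by its bitwise complement $\conj{Q_1}$; the analogous invariance under $Q_2\mapsto\conj{Q_2}$ holds for the second operator.

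With these preliminaries I would apply $P_iP_jP_k\X\id\id\id$ to $\ket{v_{\vect{x}}}=A+\sigma B$, where $A$ and $B$ are the $v=0$ and $v=1$ halves (carrying the same coefficient pattern) and $\sigma=\pm1$ is the overall sign. The operator complements qubit $4$, swaps the block $Q_1\leftrightarrow\conj{Q_1}$, multiplies by $\alpha(-1)^{iq_1+jq_2+kq_3}$, and leaves $Q_2$ fixed; by the parity observation this phase is common to all eight summands and factors out. Re-collecting terms, the coefficient pattern $\big(1,(-1)^{x_1},(-1)^{x_2},(-1)^{x_3}\big)$ attached to the blocks $(Q_1Q_2,Q_1\conj{Q_2},\conj{Q_1}Q_2,\conj{Q_1}\conj{Q_2})$ becomes, after the $Q_1$-swap, the reordered pattern $\big((-1)^{x_2},(-1)^{x_3},1,(-1)^{x_1}\big)$. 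Thus $A\mapsto\alpha(-1)^{iq_1+jq_2+kq_3}(-1)^{x_2}B$ and $B\mapsto\alpha(-1)^{iq_1+jq_2+kq_3}(-1)^{x_2}A$, so that $A+\sigma B$ is sent to $\sigma(-1)^{x_2}\alpha(-1)^{iq_1+jq_2+kq_3}(A+\sigma B)$, giving the eigenvalue $\sigma(-1)^{x_2}\alpha(-1)^{iq_1+jq_2+kq_3}$. The right-hand operator is treated identically and yields the sign $(-1)^{x_1}$ in place of $(-1)^{x_2}$.

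The main checkpoint, and the only place the hypothesis on $\vect{x}$ is used, is verifying that the reorganized vector is the \emph{same} member $\ket{v_{\vect{x}}}$ rather than a different one: one needs $\big((-1)^{x_2},(-1)^{x_3},1,(-1)^{x_1}\big)=(-1)^{x_2}\big(1,(-1)^{x_1},(-1)^{x_2},(-1)^{x_3}\big)$, which reduces to $x_3\equiv x_1+x_2\pmod 2$, equivalently $x_1+x_2+x_3\equiv 0$, i.e. $\|\vect{x}\|_1\in\{0,2\}$. Hence the restriction in the statement is exactly the eigenvector condition. Reading off the sign pair $\big((-1)^{x_2},(-1)^{x_1}\big)$ for the four admissible values $\vect{x}=\vect{0},(1,1,0),(1,0,1),(0,1,1)$ reproduces the stated $\pm/\mp$ pairings for the two operators (e.g. $\vect{x}=(1,0,1)$ forces opposite signs, $\vect{x}=(1,1,0)$ forces equal signs).

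Finally I would record orthogonality and completeness. For fixed complementary-pair representatives of $Q_1$ and $Q_2$, the four admissible patterns $(1,1,1,1),(1,-1,-1,1),(1,-1,1,-1),(1,1,-1,-1)$ are mutually orthogonal rows of a $4\times4$ Hadamard matrix, and the sign $\sigma$ distinguishes the orthogonal $v$-halves, so the eight vectors $\ket{v_{\vect{x}}}$ in each such subspace are pairwise orthogonal. Counting gives $2^{3-1}$ inequivalent choices of $Q_1$, likewise $2^{3-1}$ for $Q_2$, four values of $\vect{x}$ and two signs $\sigma$, i.e. $2^{2}\cdot 2^{2}\cdot 2^{2}\cdot 2=2^{7}$ orthogonal eigenvectors, which is a full basis. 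I expect the sign bookkeeping in the re-collection step --- correctly matching the permuted pattern to $(-1)^{x_2}$ (resp.\ $(-1)^{x_1}$) times the original --- to be the only genuinely delicate part; everything else is a routine transcription of the argument for Lemma~\ref{app:lem:ebasisGy}.
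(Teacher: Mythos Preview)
Your proposal is correct and follows the same direct-verification approach the paper uses: the paper simply states that the proof is similar to Lemma~\ref{app:lem:ebasisGy} and records the count $2^{2}\cdot 2^{2}\cdot 2^{2}\cdot 2=2^{7}$, so your write-up is in fact considerably more detailed than what appears there. Your identification of the sign factors $(-1)^{x_2}$ and $(-1)^{x_1}$ for the left and right operators, and your derivation of the constraint $x_1+x_2+x_3\equiv 0$ from the coefficient-permutation consistency check, are exactly the computations implicit in the paper's ``similar'' remark.
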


The proof is similar to Lemma \ref{app:lem:ebasisGy}. There are $2^{3-1}$ choices for each of $Q_1$ and $Q_2$ that lead to independent vectors. There are $4$ choices for $\vect{x}$. Thus total number of eigenvectors as $2^2\cdot 2^2\cdot 2^2\cdot 2 = 2^7$.

\section{Supplementary Method 2}
\label{app:2overlap}

In this section we show how to derive a diagonalizing circuit when the overlap is on 2 qubits. The following lemma gives an eigenbasis for the Paulis in $G_{21}$ and can be proved in a manner similar to Lemma \ref{app:lem:ebasisGy}.
\begin{lemma}[\textbf{Eigenbasis for $G_{21}$}]
For the Paulis in $G_{21}$ the eigenvectors are of the following form.
\begin{eqnarray}
 \ket{v_{10}}&=&\left(-\ket{Q_100Q_3}+\ket{\conj{Q_1}00Q_3}+\ket{Q_100\conj{Q_3}}+\ket{\conj{Q_1}00\conj{Q_3}}\right)    \nonumber \\
 &&\pm\left(-\ket{\conj{Q_1}11Q_3}-\ket{Q_111Q_3}+\ket{\conj{Q_1}11\conj{Q_3}}-\ket{Q_111\conj{Q_3}}\right)  \label{eqn:v10} \\
 \ket{v_{11}}&=&\left(-\ket{Q_101Q_3}+\ket{\conj{Q_1}01Q_3}+\ket{Q_101\conj{Q_3}}+\ket{\conj{Q_1}01\conj{Q_3}}\right)    \nonumber \\
 &&\pm\left(\ket{\conj{Q_1}10Q_3}+\ket{Q_110Q_3}-\ket{\conj{Q_1}10\conj{Q_3}}+\ket{Q_110\conj{Q_3}}\right)  \label{eqn:v11} 
\end{eqnarray}
Specifically we have the following.
\begin{eqnarray}
 P_kP_lP_iP_j\id\id\ket{v_{10}}&=& \pm \alpha(-1)^{kq_1+lq_2}\ket{v_{10}}\quad\text{ and }\quad \id\id P_iP_jP_kP_l\ket{v_{10}}=\pm \alpha(-1)^{kq_5+lq_6}\ket{v_{10}} \nonumber   \\
 P_kP_lP_iP_j\id\id\ket{v_{11}}&=& \pm \alpha(-1)^{kq_1+lq_2+i}\ket{v_{11}}\quad\text{ and }\quad \id\id P_iP_jP_kP_l\ket{v_{11}}= \pm \alpha(-1)^{kq_5+lq_6+i}\ket{v_{11}} \nonumber
\end{eqnarray}
where $\alpha=\sqrt{-1}^{i+j+k+l}$.
 \label{app:lem:ebasisG1}
\end{lemma}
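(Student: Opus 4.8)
The plan is to follow the template of Lemma~\ref{app:lem:ebasisGy}: verify directly that each Pauli in $G_{21}$ sends each listed vector to a scalar multiple of itself, and then check orthogonality and the dimension count. The one computational fact I would use throughout is that for $m\in\{0,1\}$ (with $P_0=\X$, $P_1=\Y$) one has $P_m\ket{b}=(\sqrt{-1})^{m}(-1)^{mb}\ket{b\oplus 1}$, so that
\[
P_kP_lP_iP_j\id\id\,\ket{q_1q_2q_3q_4q_5q_6}=\alpha\,(-1)^{kq_1+lq_2+iq_3+jq_4}\,\ket{\conj{q_1}\conj{q_2}\conj{q_3}\conj{q_4}q_5q_6},
\]
with $\alpha=\sqrt{-1}^{\,i+j+k+l}$, and analogously $\id\id P_iP_jP_kP_l$ flips qubits $3$--$6$ and contributes the phase $(-1)^{iq_3+jq_4+kq_5+lq_6}$. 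The defining constraints of $G_{21}$ give the two parities I will lean on repeatedly: $i+j\equiv 1\pmod 2$, and since $(k,l)=(i,j)$ or $(\conj{i},\conj{j})$ also $k+l\equiv 1\pmod 2$.

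First I would handle $\ket{v_{10}}$. Writing it as $A\pm B$, where $A$ is the block of four terms with middle pair $00$ and $B$ the block with middle pair $11$, note that $P_kP_lP_iP_j\id\id$ flips $q_1q_2$ (so $Q_1\leftrightarrow\conj{Q_1}$), flips the middle pair ($00\leftrightarrow 11$), and leaves $Q_3$ fixed. The parity $k+l\equiv 1$ is exactly what makes the internal signs of $A$ reproduce those of $B$: e.g.\ applying the operator to $-\ket{Q_100Q_3}$ and $+\ket{\conj{Q_1}00Q_3}$ gives $-\alpha(-1)^{kq_1+lq_2}\ket{\conj{Q_1}11Q_3}$ and $-\alpha(-1)^{kq_1+lq_2}\ket{Q_111Q_3}$ (the sign on the second from $(-1)^{k\conj{q_1}+l\conj{q_2}}=(-1)^{k+l}(-1)^{kq_1+lq_2}=-(-1)^{kq_1+lq_2}$), matching the first two terms of $B$. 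Carrying this through all four terms shows $A\mapsto\alpha(-1)^{kq_1+lq_2}B$, and the same bookkeeping (now using $(-1)^{i+j}=-1$ for the middle phase on the $11$ block) shows $B\mapsto\alpha(-1)^{kq_1+lq_2}A$. Hence $A\pm B\mapsto\alpha(-1)^{kq_1+lq_2}(B\pm A)=\pm\alpha(-1)^{kq_1+lq_2}(A\pm B)$, the claimed eigenvalue $\pm\alpha(-1)^{kq_1+lq_2}$. The operator $\id\id P_iP_jP_kP_l$ is identical, flipping $Q_3$ and the middle pair while fixing $Q_1$, giving $\pm\alpha(-1)^{kq_5+lq_6}$.

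For $\ket{v_{11}}$ the only change is that the two blocks carry middle pairs $01$ and $10$ rather than $00$ and $11$. Because the middle phase is $(-1)^{iq_3+jq_4}$, the $01$ block contributes $(-1)^{j}$ and the $10$ block contributes $(-1)^{i}$; combined with the slightly altered internal sign pattern ($(-,+,+,+)$ on the $01$ block, $(+,+,-,+)$ on the $10$ block) the operator sends the $01$ block to $\alpha(-1)^{kq_1+lq_2}(-1)^{j+1}$ times the $10$ block, and $(-1)^{j+1}=(-1)^{i}$ by $i+j\equiv 1$; symmetrically the $10$ block maps back with the same factor $(-1)^i$. This yields $\pm\alpha(-1)^{kq_1+lq_2+i}$ and $\pm\alpha(-1)^{kq_5+lq_6+i}$. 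Finally, orthogonality is immediate across families, since $\ket{v_{10}}$ is supported on middle pairs $\{00,11\}$ and $\ket{v_{11}}$ on $\{01,10\}$; within a family, distinct choices of $Q_1$ (resp.\ $Q_3$) give orthogonal sign patterns on the same four-state block, and the $\pm$ label splits the remaining pairs, so the vectors number $2^2\cdot 2^2\cdot 2\cdot 2=2^6$ and form a complete eigenbasis.

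I expect the main obstacle to be purely the sign bookkeeping: one must verify that the two parities $i+j\equiv 1$ and $k+l\equiv 1$ conspire so that a flip of block $A$ lands on block $B$ with a single global phase $\alpha(-1)^{kq_1+lq_2}$ that is independent of which of the four terms one started from, and in the $\ket{v_{11}}$ case that the asymmetric middle values $01$ versus $10$ produce exactly the uniform extra factor $(-1)^i$ rather than a term-dependent sign. Everything else is a mechanical transcription of the $G_{1y}$ argument in Lemma~\ref{app:lem:ebasisGy}.
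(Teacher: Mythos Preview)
Your proposal is correct and follows essentially the same approach as the paper, which simply states that the lemma ``can be proved in a manner similar to Lemma~\ref{app:lem:ebasisGy}'' without giving details. Your $A\pm B$ block decomposition and explicit tracking of the two parities $i+j\equiv 1$ and $k+l\equiv 1$ make the mechanism more transparent than the paper's terse reference, and your dimension count $2^2\cdot 2^2\cdot 2\cdot 2=2^6$ is the correct analogue of the paper's $2^3\cdot 2\cdot 2^3=2^7$ for the seven-qubit case.
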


\begin{theorem}
For each $i,j,k,l\in\intg_2$, such that $P_kP_lP_iP_j\id\id, \id\id P_iP_jP_kP_l\in G_{21}$ we have the following.
 \begin{eqnarray}
  \sqrt{-1}^{i+j+k+l}W_1\left(Z_{(1)}^kZ_{(2)}^lZ_{(3)}Z_{(4)}^j\id\id\right)W_1^{\dagger}=P_kP_lP_iP_j\id\id \nonumber \\
  \text{ and } \sqrt{-1}^{i+j+k+l}W_1\left(\id\id Z_{(3)}Z_{(4)}^jZ_{(5)}^kZ_{(6)}^l\right)W_1^{\dagger}=\id\id P_iP_jP_kP_l \nonumber 
 \end{eqnarray}
 \label{app:thm:diag61}
\end{theorem}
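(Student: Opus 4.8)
The plan is to mirror the proof of Theorem~\ref{app:thm:diagGy}: rather than manipulate the operators directly, I would verify the two claimed identities by checking that the operators on the left- and right-hand sides act identically on a complete orthonormal eigenbasis of the commuting Paulis in $G_{21}$. Lemma~\ref{app:lem:ebasisG1} already supplies such a basis, namely the vectors $\ket{v_{10}}$ and $\ket{v_{11}}$ (each with a $\pm$ choice), labelled by the qubit pairs $Q_1=\ket{q_1q_2}$ and $Q_3=\ket{q_5q_6}$, together with the recorded eigenvalues of $P_kP_lP_iP_j\id\id$ and $\id\id P_iP_jP_kP_l$ on them. Since every operator in sight is a product of Paulis, hence Hermitian up to the tracked global phase, agreement of eigenvalues on a full eigenbasis forces operator equality, so this reduces the theorem to an eigenvalue comparison.

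Concretely, for each eigenvector $\ket{v}$ I would first apply $W_1^{\dagger}$ and show, gate by gate through the sequence $CNOT_{(3,1)}\,CNOT_{(3,4)}\,H_{(3)}\,Z_{(3)}\,CNOT_{(3,1)}\,CNOT_{(3,5)}\,H_{(3)}\,CNOT_{(3,1)}$, that the superposition collapses to a single, easily described state $\ket{v}_1$ (as happens in Case~A of Theorem~\ref{app:thm:diagGy}, where $W_{1y}^{\dagger}\ket{v_{y+}}$ simplifies to $2(-\ket{Q_10Q_2}+\ket{\conj{Q_1}1\conj{Q_2}})$). On $\ket{v}_1$ the diagonal string $Z_{(1)}^kZ_{(2)}^lZ_{(3)}Z_{(4)}^j$ simply contributes a phase $(-1)^{kq_1+lq_2+\cdots}$, and re-applying $W_1$ returns $\ket{v}$ carrying that same phase. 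Comparing this eigenvalue, multiplied by the prefactor $\sqrt{-1}^{\,i+j+k+l}$, against the eigenvalue of $P_kP_lP_iP_j\id\id$ from Lemma~\ref{app:lem:ebasisG1} establishes the first identity; the second identity is obtained identically using the string $\id\id Z_{(3)}Z_{(4)}^jZ_{(5)}^kZ_{(6)}^l$ and the eigenvalue of $\id\id P_iP_jP_kP_l$. Both $\pm$ eigenvectors must be treated (the analogues of Cases~A and~B), and I would exploit the parity constraint $i+j\equiv 1 \pmod 2$ together with the rule $(k,l)\in\{(i,j),(\conj{i},\conj{j})\}$ from the defining conditions of $G_{21}$ (Equation~\ref{eqn:G21}) to keep the phase bookkeeping consistent.

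The step I expect to be the main obstacle is precisely this phase bookkeeping: accounting for the factors of $i$ that arise whenever a $Z$ and an $X$ meet on the same qubit during the conjugation (e.g.\ the central $Z_{(3)}$ flipping a sign), and confirming that these collectively reproduce the single clean prefactor $\sqrt{-1}^{\,i+j+k+l}$ for every admissible choice of $(i,j,k,l)$. A useful independent cross-check, which I would run in parallel, is the direct Heisenberg-picture computation: push each $Z$ generator through $W_1$ using the standard rules $CNOT:\ Z_c\mapsto Z_c,\ Z_t\mapsto Z_cZ_t,\ X_c\mapsto X_cX_t,\ X_t\mapsto X_t$ and $H:\ X\leftrightarrow Z$, and verify that the resulting Pauli products coincide with $P_kP_lP_iP_j\id\id$ and $\id\id P_iP_jP_kP_l$. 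Because the two source strings share the central factor $Z_{(3)}Z_{(4)}^j$, a final consistency check is that the same unitary $W_1$ simultaneously diagonalizes both families, which follows once the two eigenvalue comparisons are completed.
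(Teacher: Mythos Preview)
Your proposal is correct and follows essentially the same approach as the paper: the paper's proof likewise verifies the identities by tracking each eigenvector $\ket{v_{10\pm}}$, $\ket{v_{11\pm}}$ from Lemma~\ref{app:lem:ebasisG1} gate by gate through $W_1^{\dagger}$, observing that it collapses to a single two-term state on which the diagonal $Z$-string acts as a phase, and then re-applying $W_1$ to recover the original eigenvector with the required eigenvalue. Your optional Heisenberg-picture cross-check is not used in the paper, but the core eigenbasis argument is identical.
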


\begin{proof}
  We prove this theorem by showing that the operators on LHS and RHS have the same eigenvalue-eigenvectors. The eigenvectors for the operators on RHS have been derived in Lemma \ref{app:lem:ebasisG1}. 
 
 \textbf{Case - I} Consider the eigenvector $\ket{v_{10}}$. Let
 \begin{eqnarray}
  \ket{v_{10+}}&=&\left(-\ket{Q_100Q_3}+\ket{\conj{Q_1}00Q_3}+\ket{Q_100\conj{Q_3}}+\ket{\conj{Q_1}00\conj{Q_3}}\right)    \nonumber \\
 &&+\left(-\ket{\conj{Q_1}11Q_3}-\ket{Q_111Q_3}+\ket{\conj{Q_1}11\conj{Q_3}}-\ket{Q_111\conj{Q_3}}\right)  \label{eqn:v10+} \\
 \text{ and } \ket{v_{10-}}&=&\left(-\ket{Q_100Q_3}+\ket{\conj{Q_1}00Q_3}+\ket{Q_100\conj{Q_3}}+\ket{\conj{Q_1}00\conj{Q_3}}\right)    \nonumber \\
 &&-\left(-\ket{\conj{Q_1}11Q_3}-\ket{Q_111Q_3}+\ket{\conj{Q_1}11\conj{Q_3}}-\ket{Q_111\conj{Q_3}}\right)  \label{eqn:v10-}
 \end{eqnarray}
\textbf{Case - IA} First we consider $\ket{v_{10+}}$. We apply $W_1^{\dagger}$ first. The state vector evolves as follows.
\begin{eqnarray}
 \ket{v_{10}+} &&\stackrel{CNOT_{(3,1)}CNOT_{(3,4)}}{\longrightarrow} \left(-\ket{Q_100Q_3}+\ket{\conj{Q_1}00Q_3}+\ket{Q_100\conj{Q_3}}+\ket{\conj{Q_1}00\conj{Q_3}}\right)    \nonumber \\
 &&+\left(-\ket{Q_110Q_3}-\ket{\conj{Q_1}10Q_3}+\ket{Q_110\conj{Q_3}}-\ket{\conj{Q_1}10\conj{Q_3}}\right)  \nonumber    \\
 &&\stackrel{H_{(3)}}{\rightarrow}\sqrt{2}\left(-\ket{Q_100Q_3}+\ket{\conj{Q_1}10Q_3}+\ket{Q_100\conj{Q_3}}+\ket{\conj{Q_1}10\conj{Q_3}}\right)    \nonumber    \\
 &&\stackrel{Z_{(3)}}{\rightarrow}\sqrt{2}\left(-\ket{Q_100Q_3}-\ket{\conj{Q_1}10Q_3}+\ket{Q_100\conj{Q_3}}-\ket{\conj{Q_1}10\conj{Q_3}}\right)    \nonumber    \\
 &&\stackrel{CNOT_{(3,1)}CNOT_{(3,5)}}{\longrightarrow}\sqrt{2}\left(-\ket{Q_100Q_3}-\ket{Q_110\conj{Q_3}}+\ket{Q_100\conj{Q_3}}-\ket{Q_110Q_3}\right)    \nonumber \\
 &&\stackrel{H_{(3)}}{\rightarrow}2\left(-\ket{Q_100Q_3}+\ket{Q_110\conj{Q_3}}\right)  \stackrel{CNOT_{(3,1)}}{\longrightarrow} 2\left(-\ket{Q_100Q_3}+\ket{\conj{Q_1}10\conj{Q_3}}\right) =\ket{v_{10}+}_1  \nonumber
\end{eqnarray}
We have the following after applying the tensor of Z operators. We note that $k+l=i+j=1$.
\begin{eqnarray}
 \left(Z_{(1)}^kZ_{(2)}^lZ_{(3)}Z_{(4)}^j\id\id\right)\ket{v_{10+}}_1&=&(-1)^{kq_1+lq_2}\ket{v_{10+}}_1    \nonumber \\
 \left(\id\id Z_{(3)}Z_{(4)}^jZ_{(5)}^kZ_{(6)}^l\right)\ket{v_{10+}}_1&=&(-1)^{kq_5+lq_6}\ket{v_{10+}}_1   \nonumber
\end{eqnarray}
Since we only have accumulation of different phase, so for the evolution of the state after applying $W_1$, in both cases it is enough to check the evolution of $\ket{v_{10+}}_1$.
\begin{eqnarray}
 \ket{v_{10+}}_1&&\stackrel{CNOT_{(3,1)}}{\longrightarrow} 2\left(-\ket{Q_100Q_3}+\ket{Q_110\conj{Q_3}}\right)  \stackrel{H_{(3)}}{\rightarrow}\sqrt{2}\left(-\ket{Q_100Q_3}-\ket{Q_110Q_3}+\ket{Q_100\conj{Q_3}}-\ket{Q_110\conj{Q_3}}\right)    \nonumber   \\
 &&\stackrel{CNOT_{(3,1)}CNOT_{(3,5)}}{\longrightarrow}\sqrt{2}\left(-\ket{Q_100Q_3}-\ket{\conj{Q_1}10\conj{Q_3}}+\ket{Q_100\conj{Q_3}}-\ket{\conj{Q_1}10Q_3}\right)    \nonumber   \\
 &&\stackrel{Z_{(3)}}{\rightarrow}\sqrt{2}\left(-\ket{Q_100Q_3}+\ket{\conj{Q_1}10\conj{Q_3}}+\ket{Q_100\conj{Q_3}}+\ket{\conj{Q_1}10Q_3}\right)    \nonumber\\
 &&\stackrel{H_{(3)}}{\rightarrow}-\ket{Q_100Q_3}-\ket{Q_110Q_3}+\ket{\conj{Q_1}00\conj{Q_3}}-\ket{\conj{Q_1}10\conj{Q_3}}+\ket{Q_100\conj{Q_3}}+\ket{Q_110\conj{Q_3}}+\ket{\conj{Q_1}00Q_3}-\ket{\conj{Q_1}10Q_3}  \nonumber\\
 &&\stackrel{CNOT_{(3,1)}CNOT_{(3,4)}}{\longrightarrow}\left(-\ket{Q_100Q_3}+\ket{\conj{Q_1}00\conj{Q_3}}+\ket{Q_100\conj{Q_3}}+\ket{\conj{Q_1}00Q_3} \right)   \nonumber \\
 &&+\left(-\ket{\conj{Q_1}11Q_3}-\ket{Q_111\conj{Q_3}}+\ket{\conj{Q_1}11\conj{Q_3}}-\ket{Q_111Q_3}   \right) = \ket{v_{10+}}    \nonumber
\end{eqnarray}
Thus we have the following.
\begin{eqnarray}
 \left(W_1\left(Z_{(1)}^kZ_{(2)}^lZ_{(3)}Z_{(4)}^j\id\id\right)W_1^{\dagger}\right)\ket{v_{10+}}=(-1)^{kq_1+lq_2}\ket{v_{10+}} \nonumber \\
  \text{ and } \left(W_1\left(\id\id Z_{(3)}Z_{(4)}^jZ_{(5)}^kZ_{(6)}^l\right)W_1^{\dagger}\right)\ket{v_{10+}}=(-1)^{kq_5+lq_6}\ket{v_{10+}} \nonumber 
\end{eqnarray}
Thus in this case the operators on the LHS and RHS have the same eigenvalues for the eigenvector $\ket{v_{10+}}$. By similar arguments we can prove that the operators on the LHS and RHS have the same eigenvalues for the eigenvector $\ket{v_{10-}}$ and $\ket{v_{11-}}$, hence proving the theorem. 

\end{proof}

\begin{lemma}[\textbf{Eigenbasis for $G_{20}$}]
For the Paulis in $G_{20}$ the eigenvectors are of the following form.
\begin{eqnarray}
 \ket{v_{A\vect{x}}}&=&\left(\ket{Q_1AQ_3}+(-1)^{x_1}\ket{Q_1A\conj{Q_3}}+(-1)^{x_2}\ket{\conj{Q_1}AQ_3}+(-1)^{x_3}\ket{\conj{Q_1}A\conj{Q_3}}\right)    \nonumber \\
 &&\pm\left(\ket{Q_1\conj{A}Q_3}+(-1)^{x_1}\ket{Q_1\conj{A}\conj{Q_3}}+(-1)^{x_2}\ket{\conj{Q_1}\conj{A}Q_3}+(-1)^{x_3}\ket{\conj{Q_1}\conj{A}\conj{Q_3}}\right)  \nonumber 
\end{eqnarray}
where $A\in\{00,01\}$ and $\vect{x}=(x_1,x_2,x_3)\subset\{0,1\}^3$ such that either $\|\vect{x}\|_1=0$ or $2$. 
Specifically we have the following. ($\alpha=\left(\sqrt{-1}\right)^{i+j+k+l}$)
\begin{eqnarray}
 P_kP_lP_iP_j\id\id\ket{v_{A\vect{0}}}= \pm\alpha (-1)^{kq_1+lq_2+iA_1}\ket{v_{A\vect{0}}}&\text{ and }& \id\id P_iP_jP_kP_l\ket{v_{A\vect{0}}}=\pm\alpha (-1)^{kq_5+lq_6+iA_1}\ket{v_{A\vect{0}}} \nonumber\\
 P_kP_lP_iP_j\id\id\ket{v_{A\vect{x}}}= \pm\alpha (-1)^{kq_1+lq_2+iA_1}\ket{v_{A\vect{x}}}&\text{ and }& \id\id P_iP_jP_kP_l\ket{v_{A\vect{x}}}=\mp\alpha (-1)^{kq_5+lq_6+iA_1}\ket{v_{A\vect{x}}} [\vect{x}=(1,0,1)]\nonumber\\
P_kP_lP_iP_j\id\id\ket{v_{A\vect{x}}}= \mp\alpha (-1)^{kq_1+lq_2+iA_1}\ket{v_{A\vect{x}}}&\text{ and }& \id\id P_iP_jP_kP_l\ket{v_{A\vect{x}}}=\pm\alpha (-1)^{kq_5+lq_6+iA_1}\ket{v_{A\vect{x}}} [\vect{x}=(0,1,1)]\nonumber  \\
P_kP_lP_iP_j\id\id\ket{v_{A\vect{x}}}= \mp\alpha (-1)^{kq_1+lq_2+iA_1}\ket{v_{A\vect{x}}}&\text{ and }& \id\id P_iP_jP_kP_l\ket{v_{A\vect{x}}}=\mp\alpha (-1)^{kq_5+lq_6+iA_1}\ket{v_{A\vect{x}}} [\vect{x}=(1,1,0)] \nonumber 
\end{eqnarray}
 \label{app:lem:ebasisG0}
\end{lemma}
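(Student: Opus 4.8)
Here is how I would prove Lemma~\ref{app:lem:ebasisG0}.

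The plan is to follow the template of Lemma~\ref{app:lem:ebasisGy} (and its analogue Lemma~\ref{app:lem:ebasisG1}): verify by direct computation that each $\ket{v_{A\vect{x}}}$ is a simultaneous eigenvector of every Pauli in $G_{20}$ with the stated eigenvalue, and then check orthogonality and count dimensions to confirm we have a complete eigenbasis. First I would record the single-qubit action $P_m\ket{b}=\sqrt{-1}^{\,m}(-1)^{mb}\ket{\conj{b}}$ for $m\in\intg_2$ (with $P_0=\X$, $P_1=\Y$), together with the two constraints that define $G_{20}$: the condition $i+j\equiv 0$ forces $i=j$, and the requirement $(k,l)=(i,j)$ or $(\conj{i},\conj{j})$ then forces $k=l$, so $k+l\equiv 0$ as well. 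I also note that $A\in\{00,01\}$ means the qubit-$3$ component $A_0$ of $A$ vanishes.

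Consider $\mathcal{O}_1=P_kP_lP_iP_j\id\id$. It flips qubits $1$--$4$ and fixes qubits $5,6$, so it sends the ``$A$-block'' of $\ket{v_{A\vect{x}}}$ (the four terms carrying middle register $A$) into the span of the ``$\conj{A}$-block'', complementing the first register and leaving $Q_3$ alone. Writing out the four images, three kinds of phase appear: an overall factor $\alpha(-1)^{kq_1+lq_2}(-1)^{iA_0+jA_1}$ (with $\alpha=\sqrt{-1}^{\,i+j+k+l}$) from $P_kP_l$ on $Q_1$ and $P_iP_j$ on $A$; a relative sign $(-1)^{k+l}$ that is produced precisely on the terms whose first register is already complemented (since $(-1)^{k\conj{q_1}+l\conj{q_2}}=(-1)^{k+l}(-1)^{kq_1+lq_2}$); and the signs $(-1)^{x_1},(-1)^{x_2},(-1)^{x_3}$ carried by the state. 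Using $k+l\equiv 0$ the relative sign drops out, and the permuted list of coefficients matches the $\conj{A}$-block coefficients up to the single factor $(-1)^{x_2}$ \emph{exactly when} $x_1+x_2+x_3\equiv 0$, which is the hypothesis $\|\vect{x}\|_1\in\{0,2\}$. Since $i=j$ and $A_0=0$ we have $(-1)^{iA_0+jA_1}=(-1)^{iA_1}$, so $\mathcal{O}_1(A\text{-block})=\alpha(-1)^{kq_1+lq_2+iA_1}(-1)^{x_2}(\conj{A}\text{-block})$; the same computation with $A\mapsto\conj{A}$ gives the reverse map with the identical factor, because $i+j\equiv 0$ makes $(-1)^{i\conj{A_0}+j\conj{A_1}}=(-1)^{iA_0+jA_1}$.

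It remains to combine this with the outer $\pm$ structure and to treat $\mathcal{O}_2=\id\id P_iP_jP_kP_l$. Since $\ket{v_{A\vect{x}}}=(A\text{-block})\pm(\conj{A}\text{-block})$ and $\mathcal{O}_1$ swaps the two blocks up to the common factor above, the state is an eigenvector with eigenvalue $\pm\alpha(-1)^{kq_1+lq_2+iA_1}(-1)^{x_2}$. For $\mathcal{O}_2$, which flips qubits $3$--$6$ and fixes $Q_1$, the identical bookkeeping (now the $(-1)^{k+l}$ sign rides on the $\conj{Q_3}$ terms) yields eigenvalue $\pm\alpha(-1)^{kq_5+lq_6+iA_1}(-1)^{x_1}$. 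Reading off $(-1)^{x_2}$ and $(-1)^{x_1}$ for the four admissible patterns $\vect{x}\in\{(0,0,0),(1,0,1),(0,1,1),(1,1,0)\}$ reproduces precisely the four sign combinations tabulated in the statement. Orthogonality is immediate from the block decomposition, and counting the independent choices---two orbit representatives each for $Q_1$ and $Q_3$ under complementation, two values of $A$, four admissible $\vect{x}$, and the two signs $\pm$, i.e.\ $2\cdot 2\cdot 2\cdot 4\cdot 2=2^6$---shows this is a full basis of the six-qubit space.

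The main obstacle, and the only place that needs genuine care, is the dual role of the middle register $\{q_3,q_4\}$: it is simultaneously acted on (hence permuted, with phases) by each $\mathcal{O}_r$ and used to define the block decomposition whose relative sign becomes the $\pm$ eigenvalue. Disentangling which phases are absorbed into the eigenvalue versus into the block-swap sign---in particular reconciling the raw middle-register phase $(-1)^{iA_0+jA_1}$ with the compact form $(-1)^{iA_1}$ via $i=j$ and $A_0=0$---is what makes the sign accounting delicate. Once the constraints $i=j$, $k=l$, $A_0=0$, and $x_1+x_2+x_3\equiv 0$ are used consistently, the four cases fall out mechanically.
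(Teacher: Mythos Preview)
Your proposal is correct and follows exactly the approach the paper indicates: the paper's proof of Lemma~\ref{app:lem:ebasisG0} consists solely of the sentence ``The proof is similar to Lemma~\ref{app:lem:ebasisGy},'' and your direct-computation argument is precisely that template carried out in detail. Your explicit use of the constraints $i=j$, $k=l$, $A_0=0$, and $x_1+x_2+x_3\equiv 0$ to extract the block-swap scalars $(-1)^{x_2}$ (for $\mathcal{O}_1$) and $(-1)^{x_1}$ (for $\mathcal{O}_2$), together with the orbit count $2\cdot 2\cdot 2\cdot 4\cdot 2=2^6$, is sound and in fact more explicit than what the paper provides.
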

The proof is similar to Lemma \ref{app:lem:ebasisGy}. 

\section{Supplementary Method 3}
\label{app:3overlap}

In this section we show how to derive a diagonalizing circuit when the overlap is on 3 qubits. The following lemma gives an eigenbasis for the Paulis in $G_{3y}$ and can be proved in a manner similar to Lemma \ref{app:lem:ebasisGy}.

\begin{lemma}[\textbf{Eigenbasis for $G_{3y}$}]
For the Paulis in $G_{3y}$ the eigenvectors are of the following form.
\begin{eqnarray}
 \ket{v_{0y}}&=&\left(-\ket{Q_1000Q_2}+\ket{\conj{Q_1}000Q_2}+\ket{Q_1000\conj{Q_2}}+\ket{\conj{Q_1}000\conj{Q_2}}\right)    \nonumber \\
 &&\pm\left(\ket{\conj{Q_1}111\conj{Q_2}}-\ket{Q_1111Q_2}-\ket{\conj{Q_1}111Q_2}-\ket{Q_1111\conj{Q_2}}\right)  \label{eqn:v0y} \\
 \ket{v_{1y}}&=&\left(-\ket{Q_1001Q_2}+\ket{\conj{Q_1}001Q_2}+\ket{Q_1001\conj{Q_2}}+\ket{\conj{Q_1}001\conj{Q_2}}\right)    \nonumber \\
 &&\pm\left(\ket{\conj{Q_1}110\conj{Q_2}}-\ket{Q_1110Q_2}-\ket{\conj{Q_1}110Q_2}-\ket{Q_1110\conj{Q_2}}\right)  \label{eqn:v1y} \\
 \ket{v_{2y}}&=&\left(-\ket{Q_1010Q_2}+\ket{\conj{Q_1}010Q_2}+\ket{Q_1010\conj{Q_2}}+\ket{\conj{Q_1}010\conj{Q_2}}\right)    \nonumber \\
 &&\pm\left(\ket{\conj{Q_1}101\conj{Q_2}}-\ket{Q_1101Q_2}-\ket{\conj{Q_1}101Q_2}-\ket{Q_1101\conj{Q_2}}\right)  \label{eqn:v2y} \\
 \ket{v_{3y}}&=&\left(-\ket{Q_1011Q_2}+\ket{\conj{Q_1}011Q_2}+\ket{Q_1011\conj{Q_2}}+\ket{\conj{Q_1}011\conj{Q_2}}\right)    \nonumber \\
 &&\pm\left(\ket{\conj{Q_1}100\conj{Q_2}}-\ket{Q_1100Q_2}-\ket{\conj{Q_1}100Q_2}-\ket{Q_1100\conj{Q_2}}\right)  \label{eqn:v3y} 
 \end{eqnarray}
 Specifically we have the following. Let $\alpha=\left(\sqrt{-1}\right)^{i+j+k+1}$.
\begin{eqnarray}
 YP_iP_jP_k\id\ket{v_{0y}}= \pm (-1)^{q_1}\alpha \ket{v_{0y}} &\text{ and }& \id P_iP_jP_kY\ket{v_{0y}}=\pm (-1)^{q_5}\alpha \ket{v_{0y}} \nonumber\\
 YP_iP_jP_k\id\ket{v_{1y}}= \pm (-1)^{q_1+k}\alpha \ket{v_{1y}} &\text{ and }& \id P_iP_jP_kY\ket{v_{1y}}=\pm (-1)^{q_5+k} \alpha \ket{v_{1y}} \nonumber   \\
 YP_iP_jP_k\id\ket{v_{2y}}= \pm (-1)^{q_1+j}\alpha \ket{v_{2y}} &\text{ and }& \id P_iP_jP_kY\ket{v_{2y}}=\pm (-1)^{q_5+j} \alpha \ket{v_{2y}} \nonumber   \\
  YP_iP_jP_k\id\ket{v_{3y}}= \mp (-1)^{q_1+i}\alpha \ket{v_{3y}} &\text{ and }& \id P_iP_jP_kY\ket{v_{3y}}=\mp (-1)^{q_5+i} \alpha \ket{v_{3y}} \nonumber
\end{eqnarray}
 \label{app:lem:ebasisG5y}
\end{lemma}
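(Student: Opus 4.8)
The plan is to follow the template of the proof of Lemma~\ref{app:lem:ebasisGy} essentially verbatim. Since every element of $G_{3y}$ commutes, the set admits a simultaneous eigenbasis, so it suffices to (i) exhibit the candidate vectors, (ii) verify the stated eigenvalue equations by direct computation, and (iii) confirm by a dimension count that the candidates form a complete orthogonal basis of the $5$-qubit space. The only single-qubit facts needed are $\X\ket{b}=\ket{\conj{b}}$ and $\Y\ket{b}=\sqrt{-1}(-1)^b\ket{\conj{b}}$, which combine (recall $P_0=\X$, $P_1=\Y$) into $P_i\ket{b}=\sqrt{-1}^{\,i}(-1)^{ib}\ket{\conj{b}}$.

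First I would record the action of the representative operator $O=YP_iP_jP_k\id$ on an arbitrary computational basis vector $\ket{q_1\,b_2b_3b_4\,q_5}$: all qubits except the last are flipped, and the accumulated phase is $\sqrt{-1}^{\,1+i+j+k}(-1)^{q_1+ib_2+jb_3+kb_4}=\alpha\,(-1)^{q_1+ib_2+jb_3+kb_4}$ with $\alpha=\sqrt{-1}^{\,i+j+k+1}$, exactly the $\alpha$ of the statement. Writing $\ket{v_{0y}}=A\pm B$, where $A$ is the block supported on the central string $000$ and $B$ the block on $\conj{000}=111$, the crux is to show $O(A)=\alpha(-1)^{q_1}B$ and $O(B)=\alpha(-1)^{q_1}A$. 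Here the hypothesis $i+j+k\equiv 1 \pmod 2$ does all the work: on the $000$ block the exponent $ib_2+jb_3+kb_4$ vanishes, while on the $111$ block it equals $i+j+k\equiv 1$, and this single unit of parity is precisely what carries the signed coefficient pattern of $A$ onto that of $B$ and conversely. Hence $O(A\pm B)=\alpha(-1)^{q_1}(B\pm A)=\pm\alpha(-1)^{q_1}(A\pm B)$, giving the claimed eigenvalue $\pm(-1)^{q_1}\alpha$.

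The remaining assertions are routine variants of this one computation. The second generator family $\id P_iP_jP_kY$ is treated identically with the roles of $Q_1$ and $Q_2$ interchanged, which is why its eigenvalue records $(-1)^{q_5}$ in place of $(-1)^{q_1}$. The vectors $\ket{v_{1y}},\ket{v_{2y}},\ket{v_{3y}}$ differ only in their central pair ($001/110$, $010/101$, $011/100$), and repeating the phase count on these blocks yields the extra factors $(-1)^{k}$, $(-1)^{j}$, $(-1)^{i}$ respectively; for $\ket{v_{3y}}$ the central block $011$ gives exponent $j+k$, and $(-1)^{j+k}=(-1)^{1+i}=-(-1)^{i}$ under the parity constraint, which is exactly the origin of the sign flip $\pm\to\mp$ in that line. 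Finally, orthogonality is immediate: vectors from different central pairs have disjoint support, while within one pair the distinct choices of $Q_1,Q_2$ produce mutually orthogonal blocks and the two signs $\pm$ split each block orthogonally. Counting the $4$ central pairs, the $2$ signs, and the $2$ independent choices each for $Q_1$ and $Q_2$ gives $4\cdot 2\cdot 2\cdot 2=2^5$ mutually orthogonal vectors, hence a complete basis.

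The main obstacle I anticipate is purely bookkeeping: keeping the signs consistent across the flip $q_1\mapsto\conj{q_1}$ induced by $Y$ on qubit $1$, the complementations $Q_2\mapsto\conj{Q_2}$, and the parity-dependent sign attached to each central block. There is no conceptual difficulty once $O(A)=\alpha(-1)^{q_1}B$ and $O(B)=\alpha(-1)^{q_1}A$ are checked for the single family $\ket{v_{0y}}$; every other case then reduces to it by the symmetry exchanging qubits $1$ and $5$ and by relabelling the fixed central bits.
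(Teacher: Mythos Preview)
Your proposal is correct and follows exactly the approach the paper indicates: the paper does not give a standalone proof of this lemma but simply states that it ``can be proved in a manner similar to Lemma~\ref{app:lem:ebasisGy},'' and your direct verification of the eigenvalue equations via $P_i\ket{b}=\sqrt{-1}^{\,i}(-1)^{ib}\ket{\conj{b}}$, exploitation of the parity constraint $i+j+k\equiv 1\pmod 2$, and the $4\cdot 2\cdot 2\cdot 2=2^5$ dimension count are precisely that template carried out in detail.
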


Let $W_{3y}$ be the unitary consisting of the following sequence of gates. The rightmost one is the first to be applied. With a slight abuse of notation we denote $CNOT_{(c,t_1)}CNOT_{(c,t_2)}CNOT_{(c,t_3)}\ldots$ by $CNOT_{(c;t_1,t_2,t_3,\ldots)}$ (multi-target CNOT).
\begin{eqnarray}
 W_{3y}=CNOT_{(2;1,3,4)}H_{(2)}Z_{(2)}CNOT_{(2;1,5)}H_{(2)}CNOT_{(2,1)}    \nonumber
\end{eqnarray}

\begin{theorem}
For each $i,j,k\in\intg_2$, such that $YP_iP_jP_k\id, \id P_iP_jP_kY  \in G_{3y}$ we have the following.
 \begin{eqnarray}
  \sqrt{-1}^{i+j+k+1}W_{3y}\left(Z_{(1)} Z_{(2)}Z_{(3)}^jZ_{(4)}^k\id\right)W_{3y}^{\dagger}=YP_iP_jP_k\id \nonumber \\
  \text{ and } \sqrt{-1}^{i+j+k+1}W_{3y}\left(\id Z_{(2)}Z_{(3)}^jZ_{(4)}^k Z_{(5)}\right)W_{3y}^{\dagger}=\id P_iP_jP_kY \nonumber 
 \end{eqnarray}
 \label{app:thm:diagG5y}
\end{theorem}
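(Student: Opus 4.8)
The plan is to reproduce the structure of the proof of Theorem \ref{app:thm:diagGy} almost verbatim, substituting the four-family eigenbasis of $G_{3y}$ furnished by Lemma \ref{app:lem:ebasisG5y} for the two-family eigenbasis of $G_{1y}$ used there. The identity $\sqrt{-1}^{i+j+k+1}W_{3y}(Z_{(1)}Z_{(2)}Z_{(3)}^jZ_{(4)}^k\id)W_{3y}^{\dagger}=YP_iP_jP_k\id$ asserts that a Clifford operator and a Pauli agree on a complete basis, so it suffices to check that they share all eigenvalues. Since Lemma \ref{app:lem:ebasisG5y} already records the eigenvalues of $YP_iP_jP_k\id$ and of $\id P_iP_jP_kY$ on the common basis $\{\ket{v_{0y}},\ket{v_{1y}},\ket{v_{2y}},\ket{v_{3y}}\}$ (each with a $\pm$ branch), I only have to verify that the conjugated diagonal operators reproduce exactly those eigenvalues.

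Concretely, I would fix one family $\ket{v_{ay\pm}}$ at a time and push it through $W_{3y}^{\dagger}$ gate by gate — the rightmost $CNOT_{(2,1)}$ first, then $H_{(2)}$, $CNOT_{(2;1,5)}$, $Z_{(2)}$, $H_{(2)}$, and finally $CNOT_{(2;1,3,4)}$ — exactly as in the displayed chain of arrows for $W_{1y}^{\dagger}$. The structural fact I expect to persist, as it did in the $G_{1y}$ case, is that $W_{3y}^{\dagger}$ collapses each eigenvector onto a two-term computational-basis state $\ket{v_{ay}}_{1}$ on which both $Z_{(1)}Z_{(2)}Z_{(3)}^jZ_{(4)}^k$ and $Z_{(2)}Z_{(3)}^jZ_{(4)}^kZ_{(5)}$ act purely as global phases. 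Reading off those phases and confirming they equal $(-1)^{q_1+\epsilon_a}$ and $(-1)^{q_5+\epsilon_a}$, where $\epsilon_a\in\{0,k,j,i\}$ is the offset recorded in Lemma \ref{app:lem:ebasisG5y} for family $a$, and then reapplying $W_{3y}$ to return to $\ket{v_{ay\pm}}$, establishes the eigenvalue match. The prefactor $\sqrt{-1}^{i+j+k+1}$ is precisely the $\alpha$ of the lemma, so it absorbs the overall scalar, while the constraint $i+j+k\equiv1\bmod2$ is what forces the diagonal operator to carry the fixed leading factor $Z_{(1)}Z_{(2)}$ rather than an $i$-dependent one.

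I would dispatch both displayed identities at once: the reduced state $\ket{v_{ay}}_{1}$ is diagonal for both $Z$-strings, and the two equations differ only in which outer qubit — $q_1$ or $q_5$ — the phase is read from, so a single computation per family settles both. As an independent cross-check I would verify the claim in the Heisenberg picture, conjugating the generators $Z_{(1)},Z_{(2)},Z_{(3)},Z_{(4)},Z_{(5)}$ through the six gates of $W_{3y}$ one at a time and confirming the resulting product lands on the advertised Pauli with the correct sign.

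The main obstacle is purely the combinatorial bookkeeping, which is heavier than in Theorem \ref{app:thm:diagGy}: there are four eigenvector families instead of two, each split into a $\pm$ branch, and each phase must be tracked through three Hadamards' worth of basis changes together with the $(-1)$ introduced by the intermediate $Z_{(2)}$. The real danger points are an off-by-one in the exponent offsets $\epsilon_a$ distinguishing $\ket{v_{1y}},\ket{v_{2y}},\ket{v_{3y}}$ from $\ket{v_{0y}}$, and the sign flip from $\pm$ to $\mp$ that the lemma already records for $\ket{v_{3y}}$; keeping these aligned with the signs fixed in Lemma \ref{app:lem:ebasisG5y} as the Hadamard on qubit $2$ interchanges the $0$ and $1$ sectors is exactly what makes the argument close.
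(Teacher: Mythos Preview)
Your proposal is correct and follows essentially the same approach as the paper: verify the identity by checking eigenvalues on the common eigenbasis of Lemma \ref{app:lem:ebasisG5y}, pushing each $\ket{v_{ay\pm}}$ through $W_{3y}^{\dagger}$ gate by gate until it collapses to a two-term computational state, reading off the $Z$-string phases, and reassembling via $W_{3y}$. The paper carries this out explicitly for $\ket{v_{3y+}}$ and then asserts the remaining families are analogous. One small slip: the gate sequence you list for $W_{3y}^{\dagger}$ is actually that of $W_{3y}$; the first gate to hit the state is $CNOT_{(2;1,3,4)}$, not $CNOT_{(2,1)}$ (and there are two Hadamards, not three), so take care with the ordering when you execute the computation.
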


\begin{proof}
 We prove this theorem by showing that the operators on LHS and RHS have the same eigenvalue-eigenvectors. We prove for eigenvector $\ket{v_{3y}}$ and the remaining two can be proved similarly. Let
 \begin{eqnarray}
  \ket{v_{3y+}}&=&\left(-\ket{Q_1011Q_2}+\ket{\conj{Q_1}011Q_2}+\ket{Q_1011\conj{Q_2}}+\ket{\conj{Q_1}011\conj{Q_2}}\right)    \nonumber \\
 &&+\left(-\ket{\conj{Q_1}100Q_2}-\ket{Q_1100Q_2}+\ket{\conj{Q_1}100\conj{Q_2}}-\ket{Q_1100\conj{Q_2}}\right)  \label{eqn:v3y+} \\
 \text{ and } \ket{v_{3y-}}&=&\left(-\ket{Q_1011Q_2}+\ket{\conj{Q_1}011Q_2}+\ket{Q_1011\conj{Q_2}}+\ket{\conj{Q_1}011\conj{Q_2}}\right)    \nonumber \\
 &&-\left(-\ket{\conj{Q_1}100Q_2}-\ket{Q_1100Q_2}+\ket{\conj{Q_1}100\conj{Q_2}}-\ket{Q_1100\conj{Q_2}}\right)  \label{eqn:v3y-}
 \end{eqnarray}
 
 \textbf{Case - A} First we consider $\ket{v_{3y+}}$. We apply $W_{3y}^{\dagger}$ first. The state vector evolves as follows.
\begin{eqnarray}
 \ket{v_{3y}+} &&\stackrel{CNOT_{(2;1,3,4)}}{\longrightarrow} \left(-\ket{Q_1011Q_2}+\ket{\conj{Q_1}011Q_2}+\ket{Q_1011\conj{Q_2}}+\ket{\conj{Q_1}011\conj{Q_2}}\right)    \nonumber \\
 &&+\left(-\ket{Q_1111Q_2}-\ket{\conj{Q_1}111Q_2}+\ket{Q_1111\conj{Q_2}}-\ket{\conj{Q_1}111\conj{Q_2}}\right)  \nonumber    \\
 &&\stackrel{H_{(2)}}{\rightarrow}\sqrt{2}\left(-\ket{Q_1011Q_2}+\ket{\conj{Q_1}111Q_2}+\ket{Q_1011\conj{Q_2}}+\ket{\conj{Q_1}111\conj{Q_2}}\right)    \nonumber    \\
 &&\stackrel{Z_{(2)}}{\rightarrow}\sqrt{2}\left(-\ket{Q_1011Q_2}-\ket{\conj{Q_1}111Q_2}+\ket{Q_1011\conj{Q_2}}-\ket{\conj{Q_1}111\conj{Q_2}}\right)    \nonumber    \\
 &&\stackrel{CNOT_{(2;1,5)}}{\longrightarrow}\sqrt{2}\left(-\ket{Q_1011Q_2}-\ket{Q_1111\conj{Q_2}}+\ket{Q_1011\conj{Q_2}}-\ket{Q_1111Q_2}\right)    \nonumber \\
 &&\stackrel{H_{(2)}}{\rightarrow}2\left(-\ket{Q_1011Q_2}+\ket{Q_1111\conj{Q_2}}\right)  \stackrel{CNOT_{(2,1)}}{\longrightarrow} 2\left(-\ket{Q_1011Q_2}+\ket{\conj{Q_1}111\conj{Q_2}}\right) =\ket{v_{3y}+}_1  \nonumber
\end{eqnarray}
We have the following after applying the tensor of Z operators.
\begin{eqnarray}
 \left(Z_{(1)}Z_{(2)}Z_{(3)}^jZ_{(4)}^k\id\right)\ket{v_{3y+}}_1&=&(-1)^{q_1+j+k}\ket{v_{3y+}}_1    \nonumber \\
 \left(\id Z_{(2)}Z_{(3)}^jZ_{(4)}^kZ_{(5)}\right)\ket{v_{3y+}}_1&=&(-1)^{q_5+j+k}\ket{v_{3y+}}_1   \nonumber
\end{eqnarray}
Since we only have accumulation of different phase, so for the evolution of the state after applying $W_{3y}^{\dagger}$, in both cases it is enough to check the evolution of $\ket{v_{3y+}}_1$. Since it is the same operators applied in reverse order we are not writing the states explicitly. Thus since $i+j+k\equiv 1\mod 2$, we have the following.
\begin{eqnarray}
 \left(W_{3y}\left(Z_{(1)}Z_{(2)}Z_{(3)}^jZ_{(4)}^k\id\right)W_{3y}^{\dagger}\right)\ket{v_{3y+}}=(-1)^{q_1+j+k+i+i}\ket{v_{3y+}}=-(-1)^{q_1+i}\ket{v_{3y+}} \nonumber \\
  \text{ and } \left(W_{3y}\left(\id Z_{(2)}Z_{(3)}^jZ_{(4)}^kZ_{(5)}^l\right)W_{3y}^{\dagger}\right)\ket{v_{3y+}}=(-1)^{q_5+j+k+i+i}\ket{v_{3y+}}=-(-1)^{q_5+i}\ket{v_{3y+}} \nonumber 
\end{eqnarray}
So in this case the operators on the LHS and RHS have the same eigenvalues for the eigenvector $\ket{v_{3y+}}$. By similar arguments, we can come to the same conclusion for eigenvector $\ket{v_{3y-}}$, and the remaining eigenvectors as well. This proves the theorem. 
  
\end{proof}

\begin{lemma}[\textbf{Eigenbasis for $G_{3x}$}]
For the Paulis in $G_{3x}$ the eigenvectors are of the following form.
\begin{eqnarray}
 \ket{v_{A\vect{x}}}&=&\left(\ket{Q_1AQ_3}+(-1)^{x_1}\ket{Q_1A\conj{Q_3}}+(-1)^{x_2}\ket{\conj{Q_1}AQ_3}+(-1)^{x_3}\ket{\conj{Q_1}A\conj{Q_3}}\right)    \nonumber \\
 &&\pm\left(\ket{Q_1\conj{A}Q_3}+(-1)^{x_1}\ket{Q_1\conj{A}\conj{Q_3}}+(-1)^{x_2}\ket{\conj{Q_1}\conj{A}Q_3}+(-1)^{x_3}\ket{\conj{Q_1}\conj{A}\conj{Q_3}}\right)  \nonumber 
 \end{eqnarray}
 where $A\in\{000,001,010,011\}$ and $\vect{x}=(x_1,x_2,x_3)\subset\{0,1\}^3$ such that either $\|\vect{x}\|_1=0$ or $2$. 
 Specifically we have the following. Let $\alpha=\left(\sqrt{-1}\right)^{i+j+k}$.

\begin{eqnarray}
  XP_iP_jP_k\id\ket{v_{A\vect{0}}}= \pm\alpha (-1)^{jA_1+kA_0}\ket{v_{A\vect{0}}} &\text{ and }& \id P_iP_jP_kX\ket{v_{A\vect{0}}}=\pm\alpha (-1)^{jA_1+kA_0}\ket{v_{A\vect{0}}} \nonumber\\
 XP_iP_jP_k\id\ket{v_{A\vect{x}}}= \pm\alpha (-1)^{jA_1+kA_0}\ket{v_{A\vect{x}}} &\text{ and }& \id P_iP_jP_kX\ket{v_{A\vect{x}}}=\mp\alpha (-1)^{jA_1+kA_0}\ket{v_{A\vect{x}}} [\vect{x}=(1,0,1)]\nonumber\\
XP_iP_jP_k\id\ket{v_{A\vect{x}}}= \mp\alpha (-1)^{jA_1+kA_0}\ket{v_{A\vect{x}}} &\text{ and }& \id P_iP_jP_kX\ket{v_{A\vect{x}}}=\pm\alpha (-1)^{jA_1+kA_0}\ket{v_{A\vect{x}}} [\vect{x}=(0,1,1)]\nonumber  \\
XP_iP_jP_k\id\ket{v_{A\vect{x}}}= \mp\alpha (-1)^{jA_1+kA_0}\ket{v_{A\vect{x}}} &\text{ and }& \id P_iP_jP_kX\ket{v_{A\vect{x}}}=\mp\alpha (-1)^{jA_1+kA_0}\ket{v_{A\vect{x}}} [\vect{x}=(1,1,0)] \nonumber 
\end{eqnarray}

 \label{app:lem:ebasisG5x}
\end{lemma}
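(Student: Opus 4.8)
The plan is to follow the template of Lemma~\ref{app:lem:ebasisGy}: I diagonalize the two commuting families in $G_{3x}$ directly, using that every element of $G_{3x}$ acts by a simultaneous bit flip on qubits $q_2,q_3,q_4$ together with a flip of exactly one outer qubit. Writing the middle three qubits as $A=q_2q_3q_4$ with bits $A_2,A_1,A_0$, the only fact I need about single-qubit Paulis is $P_i\ket{b}=\left(\sqrt{-1}\right)^{i}(-1)^{ib}\ket{\conj{b}}$ for $i\in\{0,1\}$ (with $P_0=X$, $P_1=Y$), which gives
$$
 XP_iP_jP_k\id\,\ket{q_1\,A\,q_5}=\alpha(-1)^{iA_2+jA_1+kA_0}\,\ket{\conj{q_1}\,\conj{A}\,q_5},\qquad \alpha=\left(\sqrt{-1}\right)^{i+j+k},
$$
and the analogous identity for $\id P_iP_jP_kX$ flipping $A$ and $q_5$ but fixing $q_1$. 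I then fix a complementary pair $\{A,\conj{A}\}$ and work inside the $8$-dimensional subspace $V_A$ spanned by $\ket{Q_1^{\epsilon_1}A^{\epsilon_A}Q_3^{\epsilon_3}}$, $\epsilon\in\{0,1\}^3$, where a superscript $1$ means the block is complemented; both families visibly preserve $V_A$ since they only toggle some of these three blocks.

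The key step, and the single place the defining parity constraint $i+j+k\equiv 0\mod 2$ is used, is that the phase accrued on the $A$-component coincides with that on the $\conj{A}$-component. Indeed $i\conj{A_2}+j\conj{A_1}+k\conj{A_0}=(i+j+k)-(iA_2+jA_1+kA_0)$, which is congruent mod $2$ to $iA_2+jA_1+kA_0$ precisely when $i+j+k$ is even. Setting $\beta=\alpha(-1)^{jA_1+kA_0}$ (the $A_2$-term drops since $A\in\{000,001,010,011\}$ all have $A_2=0$), this shows that on $V_A$ the operator $XP_iP_jP_k\id$ acts as $\beta$ times a joint flip of the labels $\epsilon_1,\epsilon_A$, while $\id P_iP_jP_kX$ acts as $\beta$ times a joint flip of $\epsilon_A,\epsilon_3$.

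On the effective three-bit register $(\epsilon_1,\epsilon_A,\epsilon_3)$ these operators are just $\beta\,(X\otimes X\otimes\id)$ and $\beta\,(\id\otimes X\otimes X)$, which commute and are simultaneously diagonalized by tensor products of $X$-eigenstates $\ket{0}\pm\ket{1}$. I would then check that the claimed vectors are exactly these products: the hypothesis $\|\vect{x}\|_1\in\{0,2\}$ forces $x_3\equiv x_1+x_2\mod 2$, so that
$$
 \ket{v_{A\vect{x}}}=\bigl(\ket{0}+(-1)^{x_2}\ket{1}\bigr)_{\epsilon_1}\otimes\bigl(\ket{0}\pm\ket{1}\bigr)_{\epsilon_A}\otimes\bigl(\ket{0}+(-1)^{x_1}\ket{1}\bigr)_{\epsilon_3}.
$$
Reading off the eigenvalues gives $\beta(-1)^{x_2}$ times the chosen $\pm$ sign for $XP_iP_jP_k\id$ and $\beta(-1)^{x_1}$ times that same $\pm$ sign for $\id P_iP_jP_kX$, which reproduces exactly the correlated sign pattern listed in the four lines of the statement (e.g.\ $\vect{x}=(1,0,1)$ keeps the sign for the first operator and flips it for the second, since then $x_2=0$ and $x_1=1$).

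Finally I would count: there are $4$ complementary pairs $\{A,\conj{A}\}$, each contributing $4$ admissible $\vect{x}$ and two signs, giving $4\cdot4\cdot2=2^5$ mutually orthogonal vectors, hence a genuine eigenbasis of the whole $5$-qubit space. The only genuinely non-routine ingredient is the phase-equality observation; everything else is bookkeeping, and the one delicate point there is tracking how the single $\pm$ in $\ket{v_{A\vect{x}}}$ turns into the linked $\pm/\mp$ entries across the two columns of the table. I expect the parity-of-$i+j+k$ step to be the crux, since without it the two Paulis would act as phase-twisted flips on $V_A$ and would fail to share the simple product eigenvectors above.
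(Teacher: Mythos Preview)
Your proof is correct. The paper itself omits the argument entirely, writing only ``The proof is similar to Lemma~\ref{app:lem:ebasisGy}.'' Your approach is the same in spirit---direct verification that the listed vectors are simultaneous eigenvectors---but your reduction to the effective three-bit register $(\epsilon_1,\epsilon_A,\epsilon_3)$ on which the two families act as $\beta\,X\otimes X\otimes\id$ and $\beta\,\id\otimes X\otimes X$ is a cleaner presentation than the term-by-term expansion used in Lemma~\ref{app:lem:ebasisGy}; in particular it makes the counting and the linked $\pm/\mp$ sign pattern across the two columns transparent rather than something to be checked case by case.
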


The proof is similar to Lemma \ref{app:lem:ebasisGy}. 

\section{Supplementary Method 4}
\label{app:error}

\paragraph{Single Vs Multiple terms : } Here we prove Lemma 2.2 (\textbf{CHECK}). We recall that we have the Hamiltonian $H=\sum_{j=1}^Lh_jH_j=\sum_{j=1}^L\sum_{i_j=1}^{L_j}h_jP_{i_j}$, where $H_j=\sum_{i_j=1}^{L_j} P_{i_j}$ - sum over commuting Paulis and $M$ is the total number of Pauli operators.

In the first procedure, we sample $H_j$ independently with probability $q_j=\frac{h_j}{\sum_j h_j}$. In the second procedure, in each iteration we select one single Pauli operator $P_k$ independently with probability $p_k'=\frac{\sum_{j'}h_{j'}}{\sum_i h_i L_i}$, where in the numerator the sum is over all the commuting Pauli groups in which $P_k$ appears. Let $\lambda=\sum_jh_j$ and $\lambda'=\sum_jh_jL_j$. The following Liouvillian generating unitaries under Hamiltonian $H_j$ and $P_{i_j}$ have been defined.
\begin{eqnarray}
 \liou_j&=& i(H_j\rho-\rho H_j)\quad\text{and}\quad\liou_{i_j}=i(P_{i_j}\rho-\rho P_{i_j}).
\end{eqnarray}
Thus if $\liou=i(H\rho-\rho H)$, then
$
 \liou=\sum_{j=1}^Lh_j\liou_j=\sum_{j=1}^Lh_j\sum_{i_j=1}^{L_j}\liou_{i_j}.  
$
We define two channels $\chan_1=\sum_{j=1}^Lq_je^{\tau\liou_j}$ and $\chan_2=\sum_{j=1}^Lp_j\sum_{i_j=1}^{L_j}e^{\tau'\liou_{i_j}}$, where $p_j=\frac{h_j}{\lambda'}$, that evolves the superoperators $\liou_j$ and $\liou_{ij}$ for time interval $\tau=\frac{\lambda t}{N}$ and $\tau'=\frac{\lambda' t}{N}$ respectively. Here we note that for the second channel, for each Pauli $P_k$, we have expanded the sum $p_{k'}=\sum_{j'}\frac{h_{j'}}{\lambda'}$ to reflect the commuting groups in which it belongs. Thus $\sum_{k=1}^Mp_{k'}=\sum_{j=1}^L\sum_{i_j=1}^{L_j}p_j$.
\begin{lemma}
$$
    \|\chan_2-\chan_1\|_{\diamond}\leq\frac{4t^2\lambda'^2}{N^2}
$$
 \label{app:lem:multErr}
\end{lemma}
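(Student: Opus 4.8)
The plan is to compare the two single-step channels through their low-order Taylor expansions and to exploit the fact that, by construction, both reproduce the \emph{same} first-order generator $\tfrac{t}{N}\liou$. First I would expand each mixture of exponentials to first order with an exact integral remainder, using $e^{\tau\liou_j}=\id+\tau\liou_j+\int_0^{\tau}(\tau-s)e^{s\liou_j}\liou_j^2\,ds$ and the analogous identity for $e^{\tau'\liou_{i_j}}$. Summing against the sampling weights and using $\sum_j q_j=1$, $\sum_j p_jL_j=1$, together with $\sum_{i_j}\liou_{i_j}=\liou_j$ and $\sum_j h_j\liou_j=\liou$, the constant and linear parts of $\chan_1$ and $\chan_2$ coincide exactly, each equal to $\id+\tfrac{t}{N}\liou$. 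Hence the entire difference collapses to the difference of the two second-order remainders, $\chan_2-\chan_1=R_2-R_1$, and by the triangle inequality (Appendix \ref{sec:diamond}) it suffices to bound $\|R_1\|_{\diamond}$ and $\|R_2\|_{\diamond}$ separately.

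Next I would bound each remainder with the diamond-norm tools recalled in Appendix \ref{sec:diamond}. Since each $e^{s\liou_{i_j}}$ (respectively $e^{s\liou_j}$) is a CPTP map, it has unit diamond norm, so pulling it out of the integral and applying sub-multiplicativity gives $\|R_2\|_{\diamond}\le\sum_j p_j\sum_{i_j}\tfrac{\tau'^2}{2}\|\liou_{i_j}\|_{\diamond}^2$ and $\|R_1\|_{\diamond}\le\sum_j q_j\tfrac{\tau^2}{2}\|\liou_j\|_{\diamond}^2$, where the prefactor $\tau^2/2$ arises from $\int_0^{\tau}(\tau-s)\,ds$. The commutator superoperators satisfy $\|\liou_{i_j}\|_{\diamond}\le 2\|P_{i_j}\|=2$ and $\|\liou_j\|_{\diamond}\le 2\|H_j\|\le 2L_j$. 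For the single-Pauli channel this is clean: substituting $\tau'=\lambda' t/N$ and using $\sum_j p_jL_j=1$ yields $\|R_2\|_{\diamond}\le 2\tau'^2=2\lambda'^2 t^2/N^2$ at once.

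Finally I would assemble the grouped-term remainder in the same way: with $\tau=\lambda t/N$ and $q_j=h_j/\lambda$ one obtains $\|R_1\|_{\diamond}\le 2\tau^2\sum_j q_j\|H_j\|^2=\tfrac{2\lambda t^2}{N^2}\sum_j h_j\|H_j\|^2$, and the aim is to show this is also at most $2\lambda'^2 t^2/N^2$, so that the two pieces add up to the claimed $4\lambda'^2 t^2/N^2$. This last estimate is the main obstacle, and it is where all the real care lies: one must control the weighted second moment $\sum_j h_j\|H_j\|^2$ so that $\lambda\sum_j h_j\|H_j\|^2\le\lambda'^2$, using $\|H_j\|\le L_j$ together with the definition $\lambda'=\sum_j h_jL_j$ and $\lambda\le\lambda'$. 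Everything else is routine Taylor bookkeeping; the delicate point is precisely that a grouped term carries an operator norm as large as $L_j$ rather than the unit norm of a single Pauli, so one has to argue that this larger per-term norm is exactly offset by the larger sampling time $\tau'$ that is baked into $\lambda'$. I expect this norm accounting, rather than the expansion itself, to be the step requiring the most attention.
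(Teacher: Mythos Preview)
Your overall strategy coincides with the paper's: expand both channels, observe that the zeroth- and first-order parts agree (both equal $\id+\tfrac{t}{N}\liou$ since $\tau/\lambda=\tau'/\lambda'=t/N$), and then control the second-order remainders separately via the triangle inequality. The paper sums the series tail $\sum_{n\ge 2}$ rather than using your integral form of the Taylor remainder, but the mechanics are otherwise identical, and your $R_2$ estimate $\|R_2\|_{\diamond}\le 2\lambda'^2 t^2/N^2$ is exactly what the paper obtains for that piece.

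The step you flagged as ``the main obstacle'' is, however, a genuine gap rather than merely a delicate detail: the inequality $\lambda\sum_j h_j\|H_j\|^2\le\lambda'^2$ that you need for the $R_1$ bound is the \emph{reverse} of Cauchy--Schwarz. Indeed $\|H_j\|$ can attain $L_j$ (take every $P_{i_j}$ to be a tensor product of $\id$ and $Z$, so that $\ket{0\cdots 0}$ is a common $+1$ eigenvector and $H_j\ket{0\cdots 0}=L_j\ket{0\cdots 0}$); your target then reads $(\sum_j h_j)(\sum_j h_j L_j^2)\le(\sum_j h_j L_j)^2$, while Cauchy--Schwarz gives precisely the opposite, with equality only when all $L_j$ coincide. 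So the ``larger per-term norm'' of a grouped operator is not, in general, offset by the time rescaling baked into $\lambda'$. The paper's argument runs into the same obstruction in a slightly different guise: when it replaces $L_j^{n-1}+(\lambda'/\lambda)^{n-1}$ by $2(\lambda'/\lambda)^{n-1}$ it is implicitly assuming $L_j\le\lambda'/\lambda$ for every $j$, i.e.\ that each group size lies below the weighted mean, which again forces all $L_j$ to be equal. Thus both your route and the paper's deliver the stated $4\lambda'^2 t^2/N^2$ only in that uniform case; without it, the bound one actually proves involves the second moment $\lambda\sum_j h_j L_j^2$ (or equivalently $\max_j L_j$) rather than $\lambda'^2$ alone.
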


\begin{proof}

\begin{eqnarray}
 \chan_1&=&\sum_{j=1}^Lq_je^{\tau\liou_j}=\sum_{j=1}^L\frac{h_j}{\lambda}e^{\tau\liou_j} \nonumber \\
 &=&\id+\sum_{j=1}^L\frac{h_j}{\lambda}\tau\liou_j+\sum_{j=1}^L\frac{h_j}{\lambda}\left(\sum_{n=2}^{\infty}\frac{\tau^n\liou_j^n}{n!}  \right) \nonumber \\
 &=&\id+\frac{\tau\liou}{\lambda}+\sum_{j=1}^L\frac{h_j}{\lambda}\left(\sum_{n=2}^{\infty}\frac{\tau^n(\sum_{i_j=1}^{L_j}\liou_{i_j})^n}{n!}  \right) 
 \end{eqnarray}
 \begin{eqnarray}
 \chan_2&=&\sum_{j=1}^Lp_j\sum_{i_j=1}^{L_j}e^{\tau'\liou_{i_j}}=\sum_{j=1}^L\frac{h_j}{\lambda'}\sum_{i_j=1}^{L_j}e^{\tau'\liou_{i_j}} \nonumber \\
 &=&\id+\frac{\tau'\liou}{\lambda'}+\sum_{j=1}^L\frac{h_j}{\lambda'}\sum_{i_j=1}^{L_j}\left(\sum_{n=2}^{\infty}\frac{\tau'^n\liou_{i_j}^n}{n!} \right)
\end{eqnarray}
We have $\frac{\tau}{\lambda}=\frac{\tau'}{\lambda'}=\frac{t}{N}$ and $\frac{h_j}{\lambda'}=\frac{h_j}{\lambda}\frac{\lambda}{\lambda'}=\frac{\sum_j h_j}{\sum_j h_jL_j}\frac{h_j}{\lambda}$. The error can be expressed as follows.
\begin{eqnarray}
 &&\|(\chanu-\chan_1)-(\chanu-\chan_2)\|_{\diamond} \nonumber   \\
 &=&\|\chan_2-\chan_1\|_{\diamond}  \nonumber \\
 &=&\Bigg\|\sum_{j=1}^L\frac{h_j}{\lambda}\sum_{n=2}^{\infty}\left(\frac{\tau^n(\sum_{i_j=1}^{L_j}\liou_{i_j})^n}{n!}-\frac{\lambda}{\lambda'}\sum_{i_j=1}^{L_j}\frac{\tau'^n\liou_{i_j}^n}{n!}  \right)\Bigg\|_{\diamond}    \nonumber \\
 &\leq& \sum_{j=1}^L\frac{h_j}{\lambda}\sum_{n=2}^{\infty}\Bigg\|\frac{\tau^n(\sum_{i_j}\liou_{i_j})^n}{n!}-\frac{\lambda}{\lambda'}\sum_{i_j}\left(\frac{\lambda'}{\lambda}\right)^n\frac{\tau^n\liou_{i_j}^n}{n!}\Bigg\|_{\diamond}    \nonumber \\
 &=&\sum_j\frac{h_j}{\lambda}\sum_{n=2}^{\infty}\frac{\tau^n}{n!}\Bigg\|\left(\sum_{i_j}\liou_{i_j}\right)^n-\sum_{i_j}\left(\frac{\lambda'}{\lambda}\right)^{n-1}\liou_{i_j}^n\Bigg\|_{\diamond} \nonumber \\
 &\leq&\sum_{n=2}^{\infty}\frac{\tau^n}{n!}\left(\sum_j\frac{h_j}{\lambda}\left(\sum_{i_j}\|\liou_{i_j}\|_{\diamond}\right)^n + \sum_j\frac{h_j}{\lambda}\sum_{i_j}\left(\frac{\lambda'}{\lambda}\right)^{n-1}\|\liou_{i_j}\|_{\diamond}^n \right)  \nonumber \\
 &\qquad& [\text{Triangle inequality and Sub-multiplicativity}]  \nonumber
\end{eqnarray}
Since $\liou_{i_j}$ are Liouvillian derived from Pauli operators, which are trace-preserving, so $\|\liou_{i_j}\|_{\diamond}\leq 2$. Thus we have the following.
\begin{eqnarray}
 &&\|\chan_2-\chan_1\|_{\diamond}  \nonumber \\
 &\leq&\sum_{n=2}^{\infty}\frac{2^n\tau^n}{n!}\left(\sum_j\frac{h_j}{\lambda}L_j^n + \sum_j\frac{h_j}{\lambda}L_j\left(\frac{\lambda'}{\lambda}\right)^{n-1} \right)  \nonumber \\
 &=&\sum_{n=2}^{\infty}\frac{2^n\tau^n}{n!}\sum_j\frac{h_jL_j}{\lambda}\left(L_j^{n-1} + \left(\frac{\lambda'}{\lambda}\right)^{n-1} \right)  \\
 &\leq& \sum_{n=2}^{\infty}\frac{2^n\tau^n}{n!}\sum_j\frac{h_jL_j}{\lambda}2\left(\frac{\lambda'}{\lambda}\right)^{n-1}  \nonumber \\
 &=&2\sum_{n=2}^{\infty}\frac{2^n\tau^n}{n!}\left(\frac{\lambda'}{\lambda}\right)^n \nonumber\\
 &\leq&2\frac{4t^2\lambda'^2}{2N^2}e^{2\lambda' t/N} \quad [\because \sum_{n=2}^{\infty}\frac{x^n}{n!}\leq \frac{x^2}{2}e^x \cite{2018_CMNetal}] \nonumber  \\
 &\approx&\frac{4t^2\lambda'^2}{N^2}
\end{eqnarray}

\end{proof}

\paragraph{Truncating Hamiltonian : } Here we prove Lemma 2.1 \textbf{CHECK}. We recall that we have the following Hamiltonian.
\begin{eqnarray}
 H&=&\sum_{j=1}^Mw_jH_j+\delta H=\tilde{H}+\delta H    \qquad [\|H\|\leq 1] \nonumber
\end{eqnarray}
$\chanu_N(\rho)$ and $\tilde{\chan}(\rho)$ are the channels corresponding to the Hamiltonian $H$ and the one implemented by the protocol, respectively. $\epsilon_N$ is the error per iteration of qDRIFT.
\begin{lemma}
$$
\epsilon_N\leq\|\tilde{\chan}(\rho)-\chanu_N(\rho)\|_{\diamond}\leq \epsilon_{qDRIFT}+2\delta\sqrt{\epsilon_{qDRIFT}}
$$
where $\epsilon_{qDRIFT}\lessapprox \frac{2\lambda^2t^2}{N^2}$ and $\lambda = \sum_i |w_i|$.
 \label{app:lem:truncErr}
\end{lemma}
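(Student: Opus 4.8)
The leftmost inequality is essentially the definition of $\epsilon_N$: the per-iteration error is quantified by the diamond distance $\|\tilde{\chan}(\rho)-\chanu_N(\rho)\|_{\diamond}$ between the channel the protocol actually applies (a single qDRIFT step for the truncated Hamiltonian $\tilde H$) and the ideal single step $\chanu_N(\rho)=e^{-iHt/N}\rho\,e^{iHt/N}$ under the full Hamiltonian $H$, so the content lies in the rightmost inequality. The plan is to insert the intermediate channel $\tilde{\chanu}_N(\rho)=e^{-i\tilde H t/N}\rho\,e^{i\tilde H t/N}$, the \emph{ideal} single step under the truncated Hamiltonian, and apply the triangle inequality for the diamond norm:
\[
\|\tilde{\chan}-\chanu_N\|_{\diamond}\leq \|\tilde{\chan}-\tilde{\chanu}_N\|_{\diamond}+\|\tilde{\chanu}_N-\chanu_N\|_{\diamond}.
\]

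First I would bound the two terms separately. The first term is exactly the per-step error of qDRIFT applied to $\tilde H=\sum_j w_jH_j$, whose coefficient one-norm is $\lambda=\sum_j|w_j|$; by the standard qDRIFT analysis of~\cite{2019_C} this is $\|\tilde{\chan}-\tilde{\chanu}_N\|_{\diamond}\leq\epsilon_{qDRIFT}\lessapprox 2\lambda^2 t^2/N^2$. For the second term both channels are unitary conjugations, so I would invoke the standard fact that $\|\mathcal{V}_U-\mathcal{V}_W\|_{\diamond}\leq 2\|U-W\|$ in operator norm for $\mathcal{V}_U(\rho)=U\rho U^{\dagger}$, together with the Duhamel bound $\|e^{-iHt/N}-e^{-i\tilde H t/N}\|\leq(t/N)\|H-\tilde H\|=\delta t/N$ valid for skew-Hermitian generators. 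This gives $\|\tilde{\chanu}_N-\chanu_N\|_{\diamond}\leq 2\delta t/N$.

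It then remains to convert $2\delta t/N$ into the advertised $2\delta\sqrt{\epsilon_{qDRIFT}}$. Since $\epsilon_{qDRIFT}\approx 2\lambda^2 t^2/N^2$, we have $\sqrt{\epsilon_{qDRIFT}}\approx\sqrt2\,\lambda t/N$, whence $t/N\leq\sqrt{\epsilon_{qDRIFT}}$ as long as $\lambda\geq 1/\sqrt2$ (true in every regime of interest, e.g. whenever $\lambda\geq1$). Substituting yields $\|\tilde{\chan}-\chanu_N\|_{\diamond}\leq\epsilon_{qDRIFT}+2\delta\sqrt{\epsilon_{qDRIFT}}$, as claimed. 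An equivalent route, paralleling the series argument of Appendix~\ref{app:multErr}, is to expand both $\tilde{\chan}$ and $\chanu_N$ in powers of $t/N$: the zeroth orders cancel, the first-order difference equals $-(t/N)\,\delta\liou$ where $\delta\liou(\rho)=i(\delta H\rho-\rho\,\delta H)$ satisfies $\|\delta\liou\|_{\diamond}\leq 2\|\delta H\|=2\delta$ and supplies the $2\delta t/N$ contribution, while the second- and higher-order mismatch reproduces $\epsilon_{qDRIFT}$.

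\textbf{Main obstacle.} The genuinely delicate step is the bookkeeping of the second term: one must (i) pass cleanly from the diamond distance of two unitary channels to the operator-norm distance of the underlying unitaries, and (ii) control the difference of the two \emph{non-commuting} exponentials $e^{-iHt/N}$ and $e^{-i\tilde H t/N}$ by $\delta t/N$, which requires the Duhamel integral identity rather than a naive Taylor truncation. Matching the stated constant $2$ also rests on the harmless normalization $\lambda\gtrsim1$; without it one only obtains the strictly stronger bound $\epsilon_{qDRIFT}+(\sqrt2\,\delta/\lambda)\sqrt{\epsilon_{qDRIFT}}$.
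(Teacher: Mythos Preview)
Your proposal is correct and shares the paper's overall architecture: insert $\tilde{\chanu}_N$, apply the triangle inequality, bound the first piece by the standard qDRIFT error from~\cite{2019_C}, and bound the second piece separately. The difference lies in how the second piece $\|\tilde{\chanu}_N-\chanu_N\|_\diamond$ is controlled. The paper works in the Liouvillian picture and, crucially, uses the identification $\tilde H=(1-\delta)H$ (so that $\tilde{\liou}=(1-\delta)\liou$) to write the difference as a power series $\sum_{n\ge 1}\frac{t^n}{n!N^n}[1-(1-\delta)^n]\liou^n$, then bounds term by term using $\|\liou\|_\diamond\le 2\lambda/(1-\delta)$ and the tail estimate $\sum_{n\ge 2}x^n/n!\le (x^2/2)e^x$; the leading contribution is $\approx 2\delta\lambda t/N$, which folds directly into $2\delta\sqrt{\epsilon_{qDRIFT}}$ without any assumption on $\lambda$. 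Your route via $\|\mathcal V_U-\mathcal V_W\|_\diamond\le 2\|U-W\|$ and Duhamel is shorter, does not need the proportionality $\tilde H=(1-\delta)H$ (it works for an arbitrary perturbation with $\|H-\tilde H\|=\delta$), and yields the slightly different intermediate bound $2\delta t/N$; the price is the harmless normalization $\lambda\gtrsim 1/\sqrt2$ you already flagged to recover the stated constant. In short: the paper's series argument matches the constant automatically but leans on the special form $\tilde H=(1-\delta)H$, while your Duhamel argument is more general and more elementary at the cost of a mild normalization caveat.
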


\begin{proof}
Let $\tilde{U}=e^{it\tilde{H}}$ and $U=e^{itH}$. The channel corresponding to the Hamiltonian $H$ and $\tilde{H}$ are
\begin{eqnarray}
 \chanu_N(\rho)&=&e^{itH/N}\rho e^{-itH/N} \qquad\text{and}\qquad \tilde{\chanu}_N(\rho)=e^{it\tilde{H}/N}\rho e^{-it\tilde{H}/N}\quad\text{respectively.} \nonumber
\end{eqnarray}
\begin{eqnarray}
 \epsilon_N\leq \|\tilde{\chan}(\rho)-\tilde{\chanu}_N(\rho)\|_{\diamond}+\|\chanu(\rho)-\tilde{\chanu}_N(\rho)\|_{\diamond}\leq \epsilon_{qDRIFT}+\epsilon'\qquad [\text{Triangle inequality}] 
\end{eqnarray}
Here $\epsilon_{qDRIFT}$ is the error inherent in the qDRIFT protocol which we can bound as follows \cite{2019_C}.
\begin{eqnarray}
 \epsilon_{qDRIFT}\lessapprox \frac{2\lambda^2t^2}{N^2} \qquad [\lambda = \sum_i |w_i|]
\end{eqnarray}
To bound $\epsilon'$ we make use of the Liouvillian representation of a unitary channel as follows.
\begin{eqnarray}
 e^{iHt}\rho e^{-iHt}&=&e^{t\liou}(\rho)=\sum_{n=0}^{\infty}\frac{t^n\liou^n(\rho)}{n!} \quad \text{where}\quad \liou(\rho)=i(H\rho-\rho H) \\
 e^{i\tilde{H}t}\rho e^{-i\tilde{H}t}&=&e^{t\tilde{\liou}}(\rho)=\sum_{n=0}^{\infty}\frac{t^n\tilde{\liou}^n(\rho)}{n!} \quad \text{where}\quad \tilde{\liou}(\rho)=i(\tilde{H}\rho-\rho \tilde{H})
\end{eqnarray}
Since $H=\tilde{H}+\delta H$, implying $\tilde{H}=(1-\delta)H$, so we have the following.
\begin{eqnarray}
 \tilde{\liou}(\rho)=i\left(\tilde{H}\rho-\rho\tilde{H}\right)=(1-\delta)\liou  \nonumber
\end{eqnarray}
We have $\|\tilde{\liou}(\rho)\|_{\diamond}\leq 2\|\tilde{H}\|\leq 2\lambda$ and $\|\liou(\rho)\|_{\diamond}=\|\frac{1}{1-\delta}\tilde{\liou}(\rho)\|_{\diamond}\leq\frac{2\lambda}{1-\delta}$. 
\begin{eqnarray}
 \|\chanu_N(\rho)-\tilde{\chanu}_N(\rho)\|_{\diamond}&=&\left\|\sum_{n=0}^{\infty}\frac{t^n\liou^n(\rho)}{n!N^n}-\sum_{n=0}^{\infty}\frac{t^n\tilde{\liou}^n(\rho)}{n!N^n}\right\|_{\diamond}\leq \sum_{n=1}^{\infty}\frac{t^n}{n!N^n}\|\liou^n(\rho)-\tilde{\liou}^n(\rho)\|_{\diamond}   \quad [\text{Triangle inequality}] \nonumber\\
 &=& \sum_{n=1}^{\infty}\frac{t^n}{n!N^n}\|\liou^n(\rho)-(1-\delta)^n\liou^n(\rho)\|_{\diamond}    \quad[\text{Variables are positive real}]    \nonumber \\
 &\leq&\sum_{n=1}^{\infty}\frac{t^n}{n!N^n}[1-(1-\delta)^n] \|\liou(\rho)\|_{\diamond}^n \quad [\text{Sub-multiplicativity}]   \nonumber \\
 &\leq&\frac{2\delta t\lambda}{(1-\delta)N}+\sum_{n=2}^{\infty}\frac{t^n}{n!N^n}\left(\delta-n\delta^2+\frac{n(n-1)}{2}\delta^3-\ldots\right)\left(\frac{2\lambda}{1-\delta}\right)^n  \nonumber   \\
 &\lessapprox&\frac{2\delta t\lambda}{(1-\delta)N}+\delta\sum_{n=2}^{\infty}\frac{nt^n}{n!N^n}\left(\frac{2\lambda}{1-\delta}\right)^n \quad[\text{For small enough }\delta] \nonumber   \\
 &=&\frac{2\delta t\lambda}{(1-\delta)N}+\frac{\delta 2t\lambda}{(1-\delta)N}\sum_{n=2}^{\infty}\frac{t^{n-1}}{(n-1)!N^{n-1}}\left(\frac{2\lambda}{1-\delta}\right)^{n-1}   \nonumber \\
 &=&\frac{2\delta t\lambda}{(1-\delta)N}+\frac{\delta 2t\lambda}{(1-\delta)N}\sum_{m=1}^{\infty}\frac{t^m}{m!N^m}\left(\frac{2\lambda}{1-\delta}\right)^m \nonumber \\
 &=&\frac{2\delta t\lambda}{(1-\delta)N}+\frac{\delta 2t\lambda}{(1-\delta)N}\left[\frac{2\lambda t}{(1-\delta)N}+\sum_{m=2}^{\infty}\frac{t^m}{m!N^m}\left(\frac{2\lambda}{1-\delta}\right)^m\right] \nonumber \\
 &\leq&\frac{2\delta t\lambda}{(1-\delta)N}+\frac{2\delta\cdot 2\lambda^2t^2}{(1-\delta)^2N^2}+\frac{2\delta t\lambda}{(1-\delta)N}\frac{2\lambda^2t^2}{N^2(1-\delta)^2} e^{\frac{2\lambda t}{N(1-\delta)}}  \quad [\because \sum_{n=2}^{\infty}\frac{x^n}{n!}\leq \frac{x^2}{2}e^x \cite{2018_CMNetal}] \nonumber \\
 &\lessapprox&\delta\left(\frac{2t\lambda}{N}+\frac{2t^2\lambda^2}{N^2}\right)\lessapprox 2\delta \sqrt{\epsilon_{qDRIFT}}
\end{eqnarray}
Thus in our case the error per repetition or segment is 
\begin{eqnarray}
 \epsilon_N\lessapprox \epsilon_{qDRIFT}+2\delta\sqrt{\epsilon_{qDRIFT}}
\end{eqnarray}
\end{proof}

\begin{table}[!ht]
    \centering
    \begin{tabular}{|c|c|c|c|c|}
    \hline
    Hamiltonian & Group & Coeff. & $\#cR_z/R_z$ & $\#$Toff. pair \\ \hline
    \multirow{7}{*}{$H_2$} &   ZIII, IZII & 0.1371 - 0.1303 & 1 & 0 \\ \cline{2-5}
        & -IIZI, -IIIZ & 0.1303 & 1 & 0 \\ \cline{2-5}
        & XYYX, YXXY, -YYXX, -XXYY & 0.0491 & 1 & 3 \\ \cline{2-5}
        & IIZZ & 0.1632 - 0.1554 & 1 & 0 \\ \cline{2-5}
        & ZZII & 0.1566 - 0.1554 & 1 & 0 \\ \cline{2-5}
        & ZIIZ, IZZI & 0.1554 & 1 & 1 \\ \cline{2-5}
        & IZIZ, ZIZI & 0.1062 & 1 & 2 \\ \hline\hline
    \multirow{10}{*}{$LiH$}  &  XZXZ, IXIX, IYIY, YZYZ & -0.0014 & 2 & 3 \\ \cline{2-5}
      & XYYX, -YYXX, -XXYY, YXXY & 0.0029 & 1 & 3 \\ \cline{2-5}
      & ZYZY, XIXI, ZXZX, YIYI & 0.0115 & 2 & 3 \\ \cline{2-5}
    &    IXZX, IYZY, XZXI, YZYI & 0.0129 & 2 & 3 \\ \cline{2-5}
     &   IIZZ, ZZII, ZIIZ, IZZI & 0.0571 & 1 & 1 \\ \cline{2-5}
      &  IIZZ & 0.0848 - 0.0571 & 1 & 0 \\ \cline{2-5}
       & ZZII & 0.1244 - 0.0571 & 1 & 0 \\ \cline{2-5}
    &    IZIZ, ZIZI & 0.0541 & 1 & 2 \\ \cline{2-5}
     &   -IIZI, -IIIZ & 0.0132 & 1 & 0 \\ \cline{2-5}
     &   IZII, ZIII & 0.1620 & 1 & 0 \\ \hline\hline
    \multirow{4}{*}{4-qubit Heisenberg} &    $\sum_{i=1}^4X_{(i)}X_{(i+1\mod 4)}$ & $J_x$ & 1 & 1 \\ \cline{2-5}
    &    $\sum_{i=1}^4Y_{(i)}Y_{(i+1\mod 4)}$ & $J_y$ & 1 & 1 \\ \cline{2-5}
    &    $\sum_{i=1}^4Z_{(i)}Z_{(i+1\mod 4)}$ & $J_z$ & 1 & 1 \\ \cline{2-5}
    &    $Z_{(i)}$ & $d_h$ & 1 & 0 \\ \hline\hline
    \multirow{4}{*}{6-qubit Heisenberg} &    $\sum_{i=1}^6X_{(i)}X_{(i+1\mod 6)}$ & $J_x$ & 2 & 3 \\ \cline{2-5}
    &    $\sum_{i=1}^6Y_{(i)}Y_{(i+1\mod 6)}$ & $J_y$ & 2 & 3 \\ \cline{2-5}
    &    $\sum_{i=1}^6Z_{(i)}Z_{(i+1\mod 6)}$ & $J_z$ & 2 & 3 \\ \cline{2-5}
    &    $Z_{(i)}$ & $d_h$ & 1 & 0 \\ \hline\hline
    \end{tabular}
    \caption{Supplementary Table 1 : The groups of commuting Paulis (Group), coefficients (Coeff.), rotation gate cost ($\#cR_zz/R_z$), Toffoli cost ($\#$Toff. pair) for the different Hamiltonians considered for implementation in this paper.}
    \label{tab:cost}
\end{table}

\section{Supplementary Method 5}
\label{app:ham}

In this section we give the Hamiltonians considered by us in Section 2. We also give the grouping into commuting Paulis, that gave us the plots in Figures \ref{fig:10} and \ref{fig:11}. The groups have been shown in brackets. Each such group itself forms a Hamiltonian, which we refer to as component Hamiltonian. For simulation we synthesize circuits for each such exponentiated component Hamiltonian. We have truncated the coefficients to 4th decimal point. First we give the Hamiltonian for $H_2$.

\begin{eqnarray}
H_{H_2}&=&-0.3276081896748093 IIII +0.13716572937099508 ZIII +0.13716572937099503 IZII \nonumber \\
&&-0.13036292057109106 IIZI -0.13036292057109106 IIIZ +0.04919764587136755 XYYX \nonumber \\
&&+0.04919764587136755 YXXY -0.04919764587136755 YYXX -0.04919764587136755 XXYY \nonumber \\
&&+0.16326768673564346 IIZZ +0.15660062488237947 ZZII +0.15542669077992832 ZIIZ \nonumber \\
&&+0.15542669077992832 IZZI +0.10622904490856075 IZIZ
+0.10622904490856075 ZIZI   \nonumber \\
&=&0.0492\Big(XYYX+YXXY-YYXX-XXYY\Big)+0.1554\Big(IIZZ+ZZII+ZIIZ+IZZI\Big)  \nonumber \\
&&+0.1062\Big(IZIZ+ZIZI\Big)    
+0.1372\Big(ZIII+IZII\Big)-0.1304\Big(IIZI+IIIZ\Big)+0.0022\Big(ZZII\Big)   \nonumber \\
&&+0.0079\Big(IIZZ\Big) -0.3276081896748093 IIII    \nonumber
\end{eqnarray}
The identity shift can be ignored since it can be trivially added or removed as a phase. The component Hamiltonian built by the first group of commuting Paulis is like Case III of the Hamiltonian in Section 2.3 (Equation 19). Solving for $h_1, h_2, h_3$ from the coefficients of the Paulis, we get $h_1=-0.0246$, $h_2=0.0246$ and $h_3=0$. Then the circuit is the one given in Figure 1d. The second component Hamiltonian is the 4-qubit translationally invariant Ising Hamiltonian (4-point circle), without the single-qubit Z terms. So
the circuit is obtained from Section 2.4 (Figure 6a). The circuits for the other exponentiated component Hamiltonians can be obtained from the algorithm or procedure discussed in Section 2.4, that works for arbitrary Hamiltonians. Some optimizations can be done to reduce the number of Toffolis. The rotation and Toffoli cost for each component Hamiltonian is equal to the number of $R_z/cR_z$ gates and Toffoli pairs respectively.

In a similar way we can synthesize circuit for the 4-qubit Hamiltonian (with freezing in the STO-3G basis) for LiH, given below.
\begin{eqnarray}
 H_{LiH} &=&-0.0013743761078958677 XZXZ -0.0013743761078958677 IXIX -0.0013743761078958677 IYIY \nonumber \\
&&-0.0013743761078958677 YZYZ +0.002932996440950227 XYYX -0.002932996440950227 YYXX \nonumber \\
&&-0.002932996440950227 XXYY +0.002932996440950227 YXXY +0.011536413200774975 ZYZY  \nonumber \\
&&+0.011536413200774975 XIXI +0.011536413200774975 ZXZX +0.011536413200774975 YIYI  \nonumber \\
&&+0.012910780273117489 IXZX +0.012910780273117489 IYZY
+0.012910780273117489 XZXI  \nonumber \\
&&+0.012910780273117489 YZYI +0.08479609543670981 IIZZ
+0.12444770133137588 ZZII   \nonumber \\
&&+0.05706344223424907 ZIIZ +0.05706344223424907 IZZI
+0.054130445793298836 IZIZ  \nonumber \\
&&+0.054130445793298836 ZIZI -0.013243698330265952 IIZI
-0.013243698330265966 IIIZ  \nonumber \\
&&+0.1619947538800418 IZII +0.1619947538800418 ZIII
-7.498946902010707 IIII \nonumber   \\
&=&0.0029\Big(XYYX-YYXX-XXYY+YXXY\Big)    -0.0014\Big(XZXZ+IXIX+IYIY+YZYZ\Big) \nonumber \\
&&+0.0115(ZYZY+XIXI+ZXZX+YIYI)+0.0129\Big(IXZX+IYZY+XZXI+YZYI\Big)  \nonumber \\
&&+0.0571\Big(IIZZ+ZZII+ZIIZ+IZZI\Big)+0.0277\Big(IIZZ\Big)+0.0673\Big(ZZII\Big)    \nonumber \\
&&+0.0541\Big(IZIZ+ZIZI\Big)-0.0132\Big(IIZI+IIIZ\Big)+0.1620\Big(IZII+ZIII\Big)    \nonumber \\
&&-7.498946902010707 IIII \nonumber   
\end{eqnarray}
In Section 2.4 we have illustrated how to synthesize a circuit for the Heisenberg Hamiltonian, given below, for a graph $G = (V,E)$
\begin{eqnarray}
 H_H&=&\sum_{(i,j)\in E}\left(J_xX_{(i)}X_{(j)}+J_yY_{(i)}Y_{(j)}+J_zZ_{(i)}Z_{(j)}\right)+\sum_{i\in V}d_hZ_{(i)}  \nonumber \\
 &=&\Big(\sum_{(i,j)\in E}J_xX_{(i)}X_{(j)}\Big)+\Big(\sum_{(i,j)\in E}J_yY_{(i)}Y_{(j)}\Big)+\Big(\sum_{(i,j)\in E}J_zZ_{(i)}Z_{(j)}\Big)+\sum_{i\in V}\Big(d_hZ_{(i)}  \Big)  \nonumber
 \end{eqnarray}
For our experiments we sampled the coefficients $J_x, J_y, J_z, d_h$ from a 0-mean normal distribution with variance 1. We used 4 and 6-point cyclic graphs $V = \{1,...,n\}$, $E = \{(1,2), ... , (n-1,n),(n,1)\}$. In Table \ref{tab:cost} we have summarized the groups of commuting Paulis and their corresponding coefficients, rotation and Toffoli gate cost for $H_2, LiH$, 4 and 6-qubit Heisenberg Hamiltonian.

\end{document}